\def\spaceconstraints{false}
\pgfplotsset{compat=1.14, compat/show suggested version=false}
\newcounter{reftmpcounter}
\newcounter{resttmpcounter}
\newtheorem{observation}[theorem]{Observation}
\newcommand{\restatethm}[2]{
	\setcounterref{reftmpcounter}{#2}
	\setcounter{resttmpcounter}{\thetheorem}
	\setcounter{theorem}{\thereftmpcounter}
	\addtocounter{theorem}{-1}
	#1
	\setcounter{theorem}{\theresttmpcounter}
}
\newcommand{\revsocg}[1]{{\color{black} #1}}
\title{Worst-Case Optimal Covering of Rectangles \texorpdfstring{\newline}{} by Disks}
\titlerunning{Worst-Case Optimal Covering of Rectangles by Disks}
\author{Sándor P. Fekete}{Department of Computer Science, TU Braunschweig, Germany}{s.fekete@tu-bs.de}{https://orcid.org/0000-0002-9062-4241}{}
\author{Utkarsh Gupta}{Department of Computer Science \& Engineering, IIT Bombay, India}{utkarshgupta149@gmail.com}{https://orcid.org/0000-0002-5324-6499}{}
\author{Phillip Keldenich}{Department of Computer Science, TU Braunschweig, Germany}{p.keldenich@tu-bs.de}{https://orcid.org/0000-0002-6677-5090}{}
\author{Christian Scheffer}{Department of Computer Science, TU Braunschweig, Germany}{scheffer@ibr.cs.tu-bs.de}{https://orcid.org/0000-0002-3471-2706}{}
\author{Sahil Shah}{Department of Computer Science \& Engineering, IIT Bombay, India}{sahilshah00199@gmail.com}{https://orcid.org/0000-0001-7854-1585}{}
\authorrunning{S. P. Fekete and U. Gupta and P. Keldenich and C. Scheffer and S. Shah}
\keywords{Disk covering, critical density, covering coefficient, tight worst-case bound, interval arithmetic, approximation}
\numberwithin{subcase}{case}
\DeclarePairedDelimiter\abs{\lvert}{\rvert}
\newcounter{sbsubseccounter}
\newcounter{wcsubseccounter}
\newcommand{\sbnewsubsec}[1]{\refstepcounter{sbsubseccounter}\label{#1}}
\newcommand{\wcnewsubsec}[1]{\refstepcounter{wcsubseccounter}\label{#1}}
\newcommand{\sbstratref}[2]{\mbox{S-\ref{#1}.{#2}}}
\newcommand{\psbstratref}[2]{\noindent\textbf{(\sbstratref{#1}{#2})}\ \ }
\newcommand{\wcstratref}[2]{\mbox{W-\ref{#1}.{#2}}}
\newcommand{\pwcstratref}[2]{\noindent\textbf{(\wcstratref{#1}{#2})}\ \ }
\newcommand{\mparagraph}[1]{\noindent\textbf{#1.}\ \;}
\begin{document}
\maketitle
\begin{abstract}
We provide the solution for a fundamental problem of geometric optimization by
giving a complete characterization of worst-case optimal disk coverings of
rectangles: For any $\lambda\geq 1$, the critical covering area $A^*(\lambda)$ is the minimum value for
which any set of disks with total area at least $A^*(\lambda)$ can cover a rectangle of
dimensions $\lambda\times 1$.
We show that there is a threshold value $\lambda_2 =  \sqrt{\sqrt{7}/2 - 1/4} \approx  1.035797\ldots$,
such that for $\lambda<\lambda_2$ the critical covering area $A^*(\lambda)$ is
$A^*(\lambda)=3\pi\left(\frac{\lambda^2}{16} +\frac{5}{32} + \frac{9}{256\lambda^2}\right)$, and for $\lambda\geq \lambda_2$, the critical area is 
$A^*(\lambda)=\pi(\lambda^2+2)/4$; these values are tight.
For the special case $\lambda=1$, i.e., for covering a unit square, the critical covering area is $\frac{195\pi}{256}\approx 2.39301\ldots$.
The proof uses a careful combination of manual and automatic analysis, demonstrating the power of the employed interval arithmetic technique.
\end{abstract}

\section{Introduction}
Given a collection of (not necessarily equal) disks, is it
possible to arrange them so that they completely cover a given region, such
as a square or a rectangle? 
Covering problems of this type are of fundamental theoretical interest, but also
have a variety of different applications, most notably in 
sensor networks, communication networks, wireless communication, surveillance, robotics, 
and even gardening and sports facility management, as shown in~Fig.~\ref{fig:soccer}.

\begin{figure}
  \begin{center}
      \includegraphics[height=4cm]{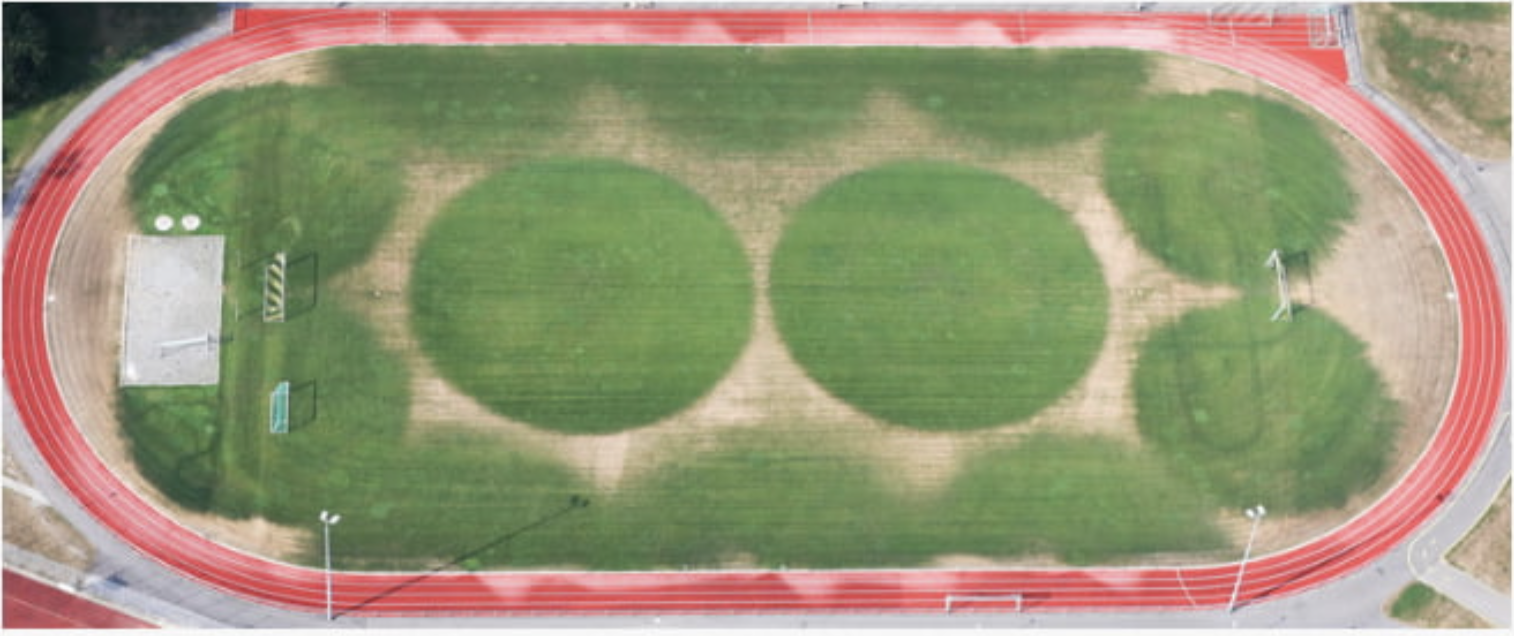}
  \end{center}
  \vspace*{-12pt}
  \caption{An incomplete covering of a rectangle by disks: Sprinklers on a soccer field during a drought. (Source: dpa~\cite{dpa}.)}
  \label{fig:soccer}
\end{figure}

If the total area of the disks is small, it is clear that completely covering the region is impossible.
On the other hand, if the total disk area is sufficiently large, finding a covering seems easy;
however, for rectangles with large aspect ratio, a major fraction of the
covering disks may be useless, so a relatively large total disk area may be
required.  The same issue is of clear importance for applications:
What fraction of the total cost of disks can be put to efficient use for covering?
This motivates the question of characterizing a critical threshold:
For any given $\lambda$, find the minimum value $A^*(\lambda)$ for which any collection of disks with total area at least
$A^*(\lambda)$ can cover a rectangle of dimensions $\lambda\times 1$.
What is the critical covering area of $\lambda \times 1$ rectangles?
In this paper we establish a complete and tight characterization.

\subsection{Related Work}
Like many other packing and covering problems, disk covering is typically quite
difficult, compounded by the geometric complications of dealing with
irrational coordinates that arise when arranging circular objects. This is 
reflected by the limitations of provably optimal results for the largest disk, square
or triangle that can be covered by $n$ unit disks, and hence, the ``thinnest'' disk covering,
i.e., a covering of optimal density.
As early as 1915, Neville~\cite{neville1915solution} computed the optimal arrangement for covering a disk by five unit disks, 
but reported a wrong optimal value; much later, Bezdek\cite{bezdek1979korok,bezdek1984einige} gave the correct value for $n=5,6$.
As recently as 2005, Fejes T\'{o}th~\cite{toth2005thinnest} established optimal values for $n=8,9,10$.
The question of incomplete coverings was raised 
in 2008 by Connelly, who asked how one should place $n$ small disks of radius 
$r$ to cover the largest possible area of a disk of radius $R > r$.
Szalkai~\cite{szalkai2016optimal} gave an optimal solution for $n=3$.
For covering rectangles by $n$ unit disks, Heppes and Mellissen~\cite{heppes1997covering} 
gave optimal solutions for $n\leq 5$; Melissen and Schuur~\cite{melissen2000covering} extended this
for $n=6,7$.  See Friedman~\cite{friedman1014} for 
the best known solutions for $n \leq 12$.
Covering equilateral triangles by $n$ unit disks 
has also been studied. Melissen~\cite{melissen1997loosest} gave optimality results for $n\leq 10$,
and conjectures for $n\leq 18$; the difficulty of these seemingly small problems is illustrated
by the fact that Nurmela~\cite{nurmela2000conjecturally} gave conjectured optimal solutions
for $n\leq 36$, improving \revsocg{the conjectured optimal covering for $n=13$ of} Melissen. 
Carmi et al.~\cite{carmi2007covering} considered algorithms for covering point sets by unit disks at fixed locations. 
There are numerous other related problems and results; for relevant surveys, see
Fejes T\'{o}th~\cite{toth1999recent} (Section 8),
Fejes T\'{o}th~\cite{toth20172} (Chapter 2),
Brass et al.~\cite{brass2005density} (Chapter 2) and the book by
B{\"o}r{\"o}czky~\cite{boroczky2004finite}.

Even less is known for covering by non-uniform disks, with most previous research focusing on
algorithmic aspects. Alt et al.~\cite{aab+-mccps-06} gave algorithmic results for minimum-cost covering of 
point sets by disks, where the cost function is $\sum_j r_j^{\alpha}$ for some $\alpha>1$, which 
includes the case of total disk area for $\alpha=2$.
Agnetis et al.~\cite{agnetis2009covering} discussed covering a line segment with variable radius disks.
Abu-Affash et al.~\cite{abu2011multi} studied covering a polygon minimizing the sum of areas; 
for recent improvements, see Bhowmick et al.~\cite{multi}.
B{\'a}nhelyi et al.~\cite{banhelyi2015optimal} gave algorithmic results for
the covering of polygons by variable disks with prescribed centers.

For relevant applications, we mention
the survey by Huang and Tseng~\cite{huang2005survey} for
wireless sensor networks, the work by Johnson et al.~\cite{johnson2012more} on
covering density for sensor networks, the algorithmic results 
for placing a given number of base stations to cover a 
square~\cite{das2006efficient} and a convex region by  Das et al.~\cite{das2008variations}.
For minimum-cost sensor coverage of planar regions, see Xu et al.~\cite{xu2008minimum};
for wireless communication coverage of a square, see Singh and Sengupta~\cite{singh2013efficient},
and Palatinus and B{\'a}nhelyi~\cite{palatinus2010circle} for the context of telecommunication networks.

The analogous question of {\em packing} unit disks into a square has also 
attracted attention. For $n=13$, the optimal value 
for the densest square covering was only established in 2003~\cite{13disks},
while the optimal value for 14 unit disks is still unproven; 
densest packings of $n$ disks in equilateral triangles are subject to a long-standing
conjecture by Erd\H{o}s and Oler from 1961~\cite{oler} 
that is still open for $n=15$.
Other mathematical work on densely packing relatively small numbers
of identical disks includes~\cite{goldberg71,melissen94,19disks,12disks}, and
\cite{reis75,lubachevsky97,graham98} for related experimental work.
The best known solutions for packing equal disks into
squares, triangles and other shapes are published on Specht's
website \url{http://packomania.com} \cite{specht2015packomania}.

Establishing the critical packing density for (not
necessarily equal) disks in a square was proposed by Demaine, Fekete, and
Lang~\cite{DFL2010circle} and solved by Morr, Fekete and
Scheffer~\cite{morr2017split,Fekete2018}.  Using a recursive procedure
for cutting the container into triangular pieces, they proved that the critical
packing density of disks in a square is $\frac{\pi}{3+2\sqrt{2}} \approx
0.539$. The critical density for (not necessarily equal) disks in a disk was recently proven to be
1/2 by Fekete, Keldenich and Scheffer~\cite{fks-pddow-19}; see the video~\cite{bfk+-pgoow-19}
for an overview and various animations.
The critical packing density of (not necessarily equal)
squares was established in 1967 by Moon and Moser~\cite{MM1967some}, who
used a shelf-packing approach to establish the value of 1/2 for packing into a square.

\subsection{Our Contribution}
We show that there is a threshold value $\lambda_2 =  \sqrt{\sqrt{7}/2 - 1/4} \approx  1.035797\ldots$,
such that for $\lambda<\lambda_2$ the critical covering area $A^*(\lambda)$ is
$A^*(\lambda)=3\pi\left(\frac{\lambda^2}{16} +\frac{5}{32} + \frac{9}{256\lambda^2}\right)$, and for $\lambda\geq \lambda_2$, the critical area is
$A^*(\lambda)=\pi(\lambda^2+2)/4$.
These values are tight: For any $\lambda$, any collection of disks of total area
$A^*(\lambda)$ can be arranged to cover a $\lambda\times 1$-rectangle, and for any $a(\lambda)<A^*(\lambda)$, there
is a collection of disks of total area $a(\lambda)$ such that a $\lambda\times 1$-rectangle cannot be covered.
(See Fig.~\ref{fig:critical-covering-coeff} for a graph showing the (normalized) critical covering density,
and Fig.~\ref{fig:worst-cases-rectangles} for examples of worst-case configurations.)
The point $\lambda = \lambda_2$ is the unique real number greater than $1$ for which the two bounds \(3\pi\left(\frac{\lambda^2}{16} + \frac{5}{32} + \frac{9}{256\lambda^2}\right)\) and \(\pi\frac{\lambda^2 + 2}{4}\) coincide; see Fig.~\ref{fig:critical-covering-coeff}.
At this so-called \emph{threshold value}, the worst case changes from three identical disks to two disks --- the circumcircle \(r_1^2 = \frac{\lambda^2+1}{4}\) and a disk \(r_2^2 = \frac{1}{4}\); see Fig.~\ref{fig:worst-cases-rectangles}.
For the special case $\lambda=1$, i.e., for covering a unit square, the critical covering area is $\frac{195\pi}{256}\approx 2.39301\ldots$.

The proof uses a careful combination of manual and automatic analysis, demonstrating the power of the employed interval arithmetic technique.

\begin{figure}
        \begin{center}
                \begin{tikzpicture}
                        \begin{axis}[%
                                xlabel=Skew $\lambda$,%
								ylabel=Critical covering density {$d^*(\lambda) = \frac{A^*(\lambda)}{\lambda}$},%
                                xmin=1.0,%
                                xmax=2.5,%
                                scale only axis,%
                                height=5.5cm,%
                                width=.75\linewidth,%
                                ytick = {2.2, 2.3, 2.4, 2.5, 2.6 },
                                extra x ticks = {1.035797,1.4142135623730951,2.089884158041382},%
                                extra x tick labels = { $\lambda_2$, $\sqrt{2}$, $\overline{\lambda} = (195+\sqrt{5257})/128$ },%
                                extra x tick style = {
                                        grid=none,
                                        tick style={draw=none},
                                        xticklabel pos = upper,
                                        xticklabel style = { anchor=south },
                                        font=\small
                                },
                                extra y ticks = {2.39301003, 2.22144147, 2.33002281},
                                extra y tick labels = {$\frac{195\pi}{256}$, $\frac{\pi}{\sqrt{2}}$, $\sqrt{\frac{469+182\sqrt{7}}{1728}}\pi$},
                                extra y tick style = {
                                        grid=none,
                                        tick style={draw=none},
                                        yticklabel pos = right,
                                        yticklabel style = { anchor=west },
                                        font=\small
                                },
                                y label style={at={(axis description cs:-0.05,.5)}},
                                y tick label style={font=\small},
                                x tick label style={font=\small}
                        ]%
                                \addplot[mark=none,color=blue,samples=1000,domain=0.99:2.51] {3.1415926 * max((x*x+2.0)/(4.0*x),(3.0/x)*(x*x/16.0 + 5.0/32 + 9.0/(256.0*x*x))};
                                \draw [densely dotted] ({axis cs:1.035797,0}|-{rel axis cs:0,0}) -- ({axis cs:1.035797,0}|-{rel axis cs:0,1});
                                \draw [densely dotted] ({axis cs:1.4142135623730951,0}|-{rel axis cs:0,0}) -- ({axis cs:1.4142135623730951,0}|-{rel axis cs:0,1});
                                \draw [densely dotted] ({axis cs:2.089884158041382,0}|-{rel axis cs:0,0}) -- ({axis cs:2.089884158041382,0}|-{rel axis cs:0,1});

                                \draw [densely dotted] ({rel axis cs:0,0}|-{axis cs:0,2.39301003}) -- ({rel axis cs:1,0}|-{axis cs:0,2.39301003});
                                \draw [densely dotted] ({rel axis cs:0,0}|-{axis cs:0,2.22144147}) -- ({rel axis cs:1,0}|-{axis cs:0,2.22144147});
                                \draw [densely dotted] ({rel axis cs:0,0}|-{axis cs:0,2.33002281}) -- ({rel axis cs:1,0}|-{axis cs:0,2.33002281});
                        \end{axis}
                \end{tikzpicture}%
        \end{center}
        \caption{
                The critical covering density $d^*(\lambda)$ depending on $\lambda$ and its values at the threshold value $\lambda_2$, the global minimum $\sqrt{2}$ and the skew $\overline{\lambda}$ at which the density becomes as bad as for the square.}
        \label{fig:critical-covering-coeff}
\end{figure}
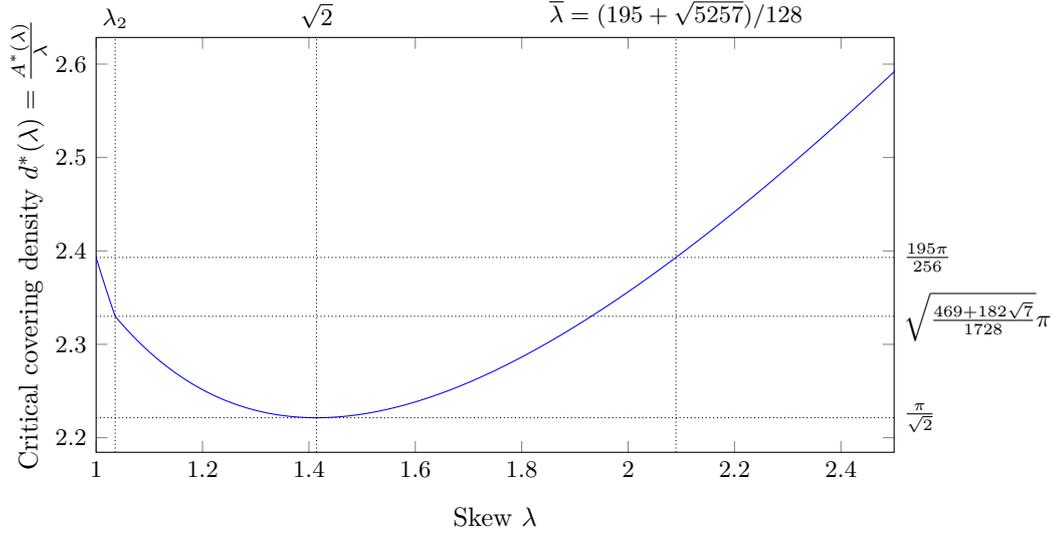

\begin{figure}
        \begin{center}
        \resizebox{.9\linewidth}{!}{\includegraphics{./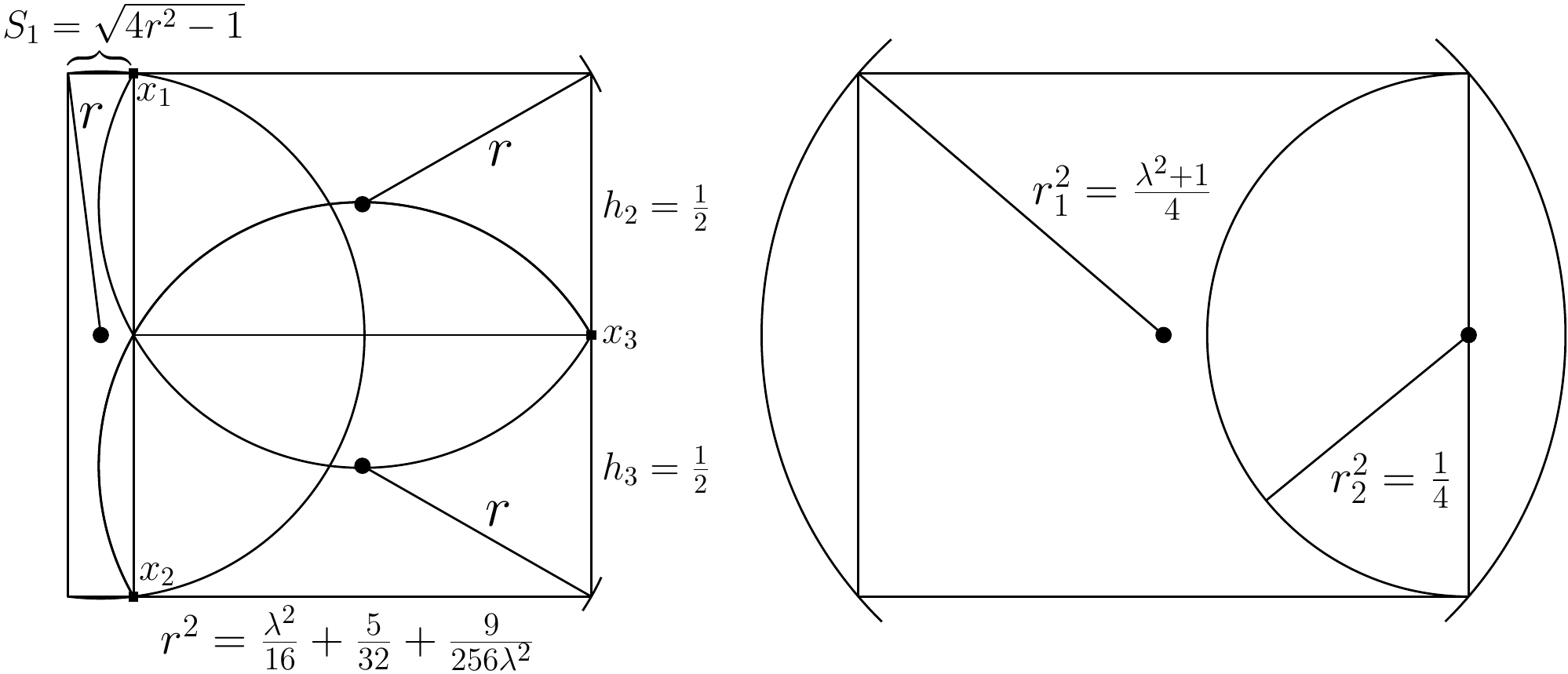}}
        \end{center}
		\caption{Worst-case configurations for small $\lambda \leq \lambda_2$ (left) and for large skew $\lambda \geq \lambda_2$ (right). \revsocg{Shrinking $r$ or $r_1$ by any $\varepsilon > 0$ in either configuration leads to an instance that cannot be covered.}}
        \label{fig:worst-cases-rectangles}
\end{figure}

\section{Preliminaries}
We are given a rectangular container $\mathcal{R}$,
which we assume w.l.o.g. to have height $1$ and some width $\lambda \geq 1$, 
which is called the \emph{skew} of $\mathcal{R}$.
For a collection $D = \{r_1,\ldots,r_n\}$ of radii $r_1 \geq r_2 \geq \cdots \geq r_n$,
we want to decide whether there is a placement of $n$ closed disks with radii $r_1,\ldots,r_n$ on $\mathcal{R}$, 
such that every point $x \in \mathcal{R}$ is covered by at least one disk.
\revsocg{Because we are only given radii and not center points, in a slight abuse of notation, we identify the disks with their radii and use $r_i$ to refer to both the disk and the radius.}

For any set $D$ of disks, the \emph{total disk area} is $A(D) \coloneqq \pi\sum_{r \in D} r^2$.
The \emph{weight} of a disk of radius $r$ is $r^2$, and $W(D) \coloneqq \frac{A(D)}{\pi}$ is the \emph{total weight} of $D$.
For any rectangle $\mathcal{R}$, the \emph{critical covering area}
$A^*(\mathcal{R})$ of $\mathcal{R}$ is the minimum value for which any set $D$
of disks with total area at least $A(D) \geq A^*(\mathcal{R})$ can cover
$\mathcal{R}$.  The \emph{critical covering weight} of $\mathcal{R}$ is
$W^*(\mathcal{R}) \coloneqq \frac{A^*(\mathcal{R})}{\pi}$.
For $\lambda \geq 1$, we define $A^*(\lambda) \coloneqq A^*(\mathcal{R})$ and $W^*(\lambda) \coloneqq W^*(\mathcal{R})$ for 
a $\lambda \times 1$  rectangle 
$\mathcal{R}$. 

For a placement $\mathcal{P}$ of the disks in $D$ \revsocg{fully} covering some area $A$,
the \emph{covering coefficient} of $\mathcal{P}$ is the ratio $\frac{W(D)}{A}$.
For $\lambda \geq 1$, the amount $E^*(\lambda) \coloneqq \frac{W^*(\lambda)}{\lambda}$ of total disk weight 
per unit of rectangle area that is necessary for 
guaranteeing a possible covering is the \emph{(critical) covering coefficient} of $\lambda$.
Analogously, $d^*(\lambda) \coloneqq \frac{A^*(\lambda)}{\lambda}$ is the \emph{(critical) covering density} of $\lambda$.

For proving our result, we use \textsc{Greedy Splitting}
for partitioning a collection of disks into two parts whose weight differs by at most the weight of the
smallest disk in the heavier part:
After sorting the disks by decreasing radius,
we start with two empty lists and continue to place the next disk in the list with smaller total weight.

\section{High-Level Description}\label{sec:high-level}%
Now we present and describe our main result: a theorem that characterizes the worst case for covering rectangles with disks.
This theorem gives a closed-form solution for the \emph{critical covering area} $A^*(\lambda)$ for any $\lambda \geq 1$;
in other words, for any given rectangle $\mathcal{R}$, we determine the total disk area that is \revsocg{(1) sometimes necessary and (2) always sufficient} to cover $\mathcal{R}$.

\begin{theorem}
	Let $\lambda \geq 1$ and let $\mathcal{R}$ be a rectangle of dimensions $\lambda \times 1$.
	Let \[\lambda_2 = \sqrt{\frac{\sqrt{7}}{2}-\frac{1}{4}} \approx 1.035797\ldots\text{, and }
		A^*(\lambda) = \begin{cases}
			3\pi\left(\frac{\lambda^2}{16} + \frac{5}{32} + \frac{9}{256\lambda^2}\right), & \text{if $\lambda < \lambda_2$,}\\
			\pi\frac{\lambda^2 + 2}{4}, & \text{otherwise.}
		\end{cases}
	\]
	\begin{enumerate}
		\item[\textup{(1)}] \revsocg{For any $a < A^*(\lambda)$, there is a set $D^-$ of disks with
			  $A(D^-) = a$ that cannot cover $\mathcal{R}$.}
		\item[\textup{(2)}] \revsocg{Let $D = \{ r_1,\ldots,r_n \} \subset \mathbb{R}$, $r_1 \geq r_2 \geq \ldots \geq r_n > 0$
			  be any collection of disks identified by their radii.
			  If $A(D) \geq A^*(\lambda)$, then $D$ can cover $\mathcal{R}$.}
	\end{enumerate}
	\label{thm:theorem-rectangle-covering-squares-tight}
	\label{thm:mainRectangles}
\end{theorem}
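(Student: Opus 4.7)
\textbf{Proof plan for Theorem \ref{thm:mainRectangles}.}

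For the lower bound (1), the plan is to exhibit worst-case families of disks whose total area approaches $A^*(\lambda)$ from below yet which cannot cover $\mathcal{R}$. For $\lambda \geq \lambda_2$, take a disk of radius $r_1 = \sqrt{(\lambda^2+1)/4}-\delta$ and a disk of radius $r_2 = 1/2 - \delta$ for small $\delta>0$: since the circumradius of $\mathcal{R}$ is $\sqrt{(\lambda^2+1)/4}$, the large disk cannot contain all four corners, so at least one corner of $\mathcal{R}$ must be covered by $r_2$; but then $r_2$ can cover at most a region of diameter strictly less than $1$, which one verifies leaves an uncovered point near the opposite short side. For $\lambda < \lambda_2$, take three equal disks of radius $\sqrt{\lambda^2/16+5/32+9/(256\lambda^2)}-\delta$; the key geometric observation is that each of these disks can contain at most one of the four corners, so at least one corner must remain uncovered. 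The threshold $\lambda_2$ is precisely where the two closed-form expressions coincide, by definition.

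For the upper bound (2), the plan is to split into cases depending on the radii of the largest disk(s), since the worst-case configurations suggest that large disks must be accommodated individually. When the largest disk $r_1$ already has $r_1^2 \geq (\lambda^2+1)/4$, it alone covers $\mathcal{R}$. When $r_1$ is large enough that $r_1$ together with a small number of the next-largest disks suffice to cover $\mathcal{R}$ in a specific geometric arrangement (e.g. $r_1$ covering a subrectangle and $r_2$ covering the complement), we handle these situations via direct geometric analysis, using the area surplus $A(D)-A^*(\lambda)$ to absorb suboptimality. When every disk is small relative to $\mathcal{R}$, we switch to a \emph{wall-building} / shelf strategy: we cover $\mathcal{R}$ by vertical strips built one at a time, allocating disks to each strip via \textsc{Greedy Splitting} so that the disk weight assigned to a strip matches its area up to controlled error.

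The heart of the argument is the wall-building analysis: one must bound the minimum covering weight per unit area of a $w \times 1$ strip as a function of the maximum radius of disks used, and show that the surplus provided by $A^*(\lambda)$ over the naive area lower bound is enough to compensate across all strips of the partition. This reduces to a family of parametrized geometric inequalities, whose worst cases happen at isolated critical configurations and which translate to explicit algebraic inequalities in the radii, strip widths, and $\lambda$.

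The main obstacle, and the reason for the interval-arithmetic component, is that the finite case analysis after these reductions still contains many subcases whose defining inequalities involve algebraic numbers of high degree (arising from circle--rectangle tangencies) and whose valid parameter ranges are intricate. I would therefore formulate each remaining subcase as a certified inequality over a bounded parameter box, prove the few boundary/critical cases manually using the explicit form of $A^*(\lambda)$, and delegate the interior verifications to an interval-arithmetic prover that rigorously rules out counterexamples on a subdivided grid of parameter space. The main conceptual difficulty is engineering the reduction so that each interval verification is robust (i.e.\ the inequality holds with a positive slack except at the critical worst cases identified in part (1)), so that only finitely many branches are needed.
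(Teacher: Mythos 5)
Your claimed key observation in the three-disk lower bound --- that each disk of radius near $r_* = \sqrt{\lambda^2/16 + 5/32 + 9/(256\lambda^2)}$ can cover at most one corner of $\mathcal{R}$ --- is false, and without it the argument collapses. For every $\lambda \in [1,\lambda_2]$ one has $r_* > 1/2$ (e.g.\ at $\lambda=1$, $r_* = \sqrt{65}/16 \approx 0.504$), so each such disk has diameter exceeding $1$ and can cover \emph{two} corners at distance $1$. The paper's actual argument (Lemma~\ref{lem:worst-cases-rectangles}) is forced to be much more delicate: it shows one disk, placeable at an optimal symmetric position, must cover two corners lying on a short side, and then analyses the two \emph{intersection points} of that disk with the top/bottom boundary together with the remaining two corners, showing each of the two remaining disks can cover at most two of these four boundary points and that the leftover pair is at distance $>2r$. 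Some boundary-point analysis of this kind is unavoidable and cannot be replaced by corner counting. (Your two-disk argument is sound and is a mild variant of the paper's: by shrinking $r_2$ strictly below $1/2$ you can argue via corners alone, since a disk of radius strictly less than half the diagonal cannot touch three corners of the rectangle; the paper keeps $r_2^2 = 1/4$ fixed and therefore has to reason about additional boundary points too.)

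On the sufficiency direction (2), your plan --- recurse on the size of the largest disk, use a shelf/wall-building scheme with \textsc{Greedy Splitting} when disks are small, and delegate the residual finite case analysis to interval arithmetic with a handful of hand-verified critical configurations --- is in the same spirit as the paper's approach, but as written it is a one-paragraph sketch, not a proof. The paper needs two auxiliary covering results with radius caps (Lemma~\ref{lem:rectanglesSB} with $r_1 \leq 0.375$ and Lemma~\ref{lem:size-bound-large} with $r_1^2 \leq \sigma$), a simultaneous induction across all three statements to avoid circularity, on the order of thirty distinct placement routines each with an explicit success criterion in at most seven radii, an $8$-dimensional interval-arithmetic verification over $>2^{46}$ hypercuboids, and a final manual analysis (Lemmas~\ref{lem:mainRectangles-three-disk-pocket}, \ref{lem:mainRectangles-two-pockets}) of the two critical families that the interval prover necessarily cannot close because they are tight. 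You have correctly identified that the critical configurations are the same ones witnessing the lower bound and that they must be treated by hand, which is the right instinct, but the distance between your outline and a functioning proof is large.
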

The critical covering area does not depend linearly on the area $\lambda$ of the rectangle; it also depends on the rectangle's skew.
Fig.~\ref{fig:critical-covering-coeff} shows a plot of the dependency of the covering density $d(\lambda)$ on $\lambda$.
In the following, to simplify notation, we factor out $\pi$ if possible; instead of working with the areas $A(D)$ or $A^*(\lambda)$ of the disks, we use their \emph{weight}, i.e., their area divided by $\pi$.
Similarly, we work with the covering coefficient $E^*(\lambda)$ instead of the density $d^*(\lambda)$; a lower covering coefficient corresponds to a more efficient covering.

As shown in Fig.~\ref{fig:critical-covering-coeff}, the critical covering coefficient \(E^*(\lambda)\) is monotonically decreasing from $\lambda = 1$ to $\sqrt{2}$ and monotonically increasing for $\lambda > \sqrt{2}$.
For a square, $E^*(1)=\frac{195}{256}$; the point $\lambda > 1$ for which the covering coefficient becomes as bad as for the square is $\overline{\lambda} \coloneqq \frac{195+\sqrt{5257}}{128} \approx 2.08988\ldots$; for all $\lambda \leq \overline{\lambda}$, the covering coefficient is at most $\frac{195}{256}$.

\subsection{Proof Components}
The proof of Theorem~\ref{thm:mainRectangles} uses a number of 
components. First is a lemma that describes the worst-case configurations and shows 
tightness\revsocg{, i.e., claim (1),} of Theorem~\ref{thm:mainRectangles} for all $\lambda$.
\begin{restatable}{lemma}{lemmaworstcasesrectangles}
	\label{lem:worst-cases-rectangles}
	\revsocg{Let $\lambda \geq 1$ and let $\mathcal{R}$ be a rectangle of dimensions $\lambda\times 1$.}
		(1)~Two disks of weight $r_1^2 = \frac{\lambda^2+1}{4}$ and $r_2^2 = \frac{1}{4}$ suffice to cover $\mathcal{R}$.
		(2)~For any $\varepsilon > 0$, two disks of weight $r_1^2-\varepsilon$ and $r_2^2$ do not suffice to cover $\mathcal{R}$.
		(3)~Three identical disks of weight $r^2 = \frac{\lambda^2}{16} + \frac{5}{32} + \frac{9}{256\lambda^2}$ suffice to cover a rectangle $\mathcal{R}$ of dimensions $\lambda \times 1$.
		(4)~For $\lambda \leq \lambda_2$ and any $\varepsilon > 0$, three identical disks of weight $r^2_- \coloneqq r^2-\varepsilon$ do not suffice to cover $\mathcal{R}$.
\end{restatable}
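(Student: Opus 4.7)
I will prove the four claims in the order (1), (3), (2), (4).

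Claim (1) is immediate: $r_1^2 = (\lambda^2+1)/4$ means $r_1$ is the circumradius of $\mathcal{R}$, so placing the disk of radius $r_1$ centered on $\mathcal{R}$ already covers the rectangle, and the smaller disk may be placed anywhere.

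For Claim (3), set $\alpha := \lambda/2 - 3/(8\lambda)$ and $\beta := \lambda/2 + 3/(8\lambda) = \lambda - \alpha$; note $\alpha > 0$ for $\lambda \geq 1$. Partition $\mathcal{R}$ by the vertical line $x = \alpha$ into an $\alpha \times 1$ left sub-rectangle and a $\beta \times 1$ right sub-rectangle, and split the latter horizontally into two $\beta \times 1/2$ strips. Cover each of the three sub-rectangles by its circumscribed disk. The value $\alpha$ is chosen precisely to equate the two required squared circumradii $(\alpha^2+1)/4$ and $(\beta^2+1/4)/4$, and a direct computation shows both simplify to $\lambda^2/16 + 5/32 + 9/(256\lambda^2) = r^2$.

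For Claim (2), suppose for contradiction that two disks $D_1, D_2$ of squared radii $r_1^2 - \varepsilon$ and $1/4$ cover $\mathcal{R}$. I will show that $D_1$ must contain all four corners of $\mathcal{R}$, which forces its radius to be at least $r_1$, a contradiction. The six pairwise distances of the four corners are $1$, $\lambda \geq 1$, and $\sqrt{\lambda^2+1}$, so $D_2$ (of diameter $1$) can contain at most two corners, and if so they must lie on a short edge. In that case $D_2$ is forced to be centered at the midpoint of that short edge and meets $\mathcal{R}$ in a closed semidisk tangent to the rectangle exactly at the two corners; any neighborhood of either corner within $\mathcal{R}$ contains points outside $D_2$, so by closedness of $D_1$ and $D_1 \cup D_2 \supseteq \mathcal{R}$, $D_1$ contains both corners. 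If $D_2$ contains zero or one corner, then $D_1$ contains at least three corners, and any three corners include a diagonally opposite pair at distance $2 r_1$; either way $D_1$'s radius must be at least $r_1$.

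Claim (4) is the main obstacle and requires a careful case analysis on how three equal disks of radius $r' < r$ could cover $\mathcal{R}$. By pigeonhole at least one disk covers two corners, which lie on a diagonal (distance $\sqrt{\lambda^2+1}$), a long edge (distance $\lambda$), or a short edge (distance $1$). The diagonal case is ruled out immediately since $r^2 < (\lambda^2+1)/4$ for $\lambda \leq \lambda_2$. For the long-edge case I analyze how the remaining two corners and the horizontal middle strip are covered by the other two disks; in the symmetric subcase where two disks each cover one long-edge pair and the third sits in the middle, parametrizing the disk positions and requiring the third disk to cover the residual gaps on both short edges yields a lower bound $(r')^2 \geq \lambda^2/4 + 1/36$, which strictly exceeds $r^2$ for $\lambda \leq \lambda_2$. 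The short-edge and asymmetric cases are the most intricate and form the principal challenge: the four-corner constraint alone only yields $r' \geq 1/2$, so one must exploit additional critical points on the long edges and in the interior to show that each remaining sub-configuration fails to achieve the tight bound $r^2$ attained exactly by the partition of Claim (3).
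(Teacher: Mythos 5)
Your proofs of parts (1)--(3) are correct. Part (1) is the same as the paper's (circumcircle observation). For part (3) you realize exactly the same decomposition as the paper: your $\alpha=\lambda/2-3/(8\lambda)$ equals the paper's strip width $S_1=\sqrt{4r^2-1}$, and the right piece of width $\beta$ is split into two $\beta\times\tfrac12$ boxes, each with circumradius $r$. For part (2) you take an equivalent but slightly cleaner route than the paper: you argue that the radius-$\tfrac12$ disk, even when it covers two adjacent corners, is tangent to the two incident edges there, so the closed disk $D_1$ must also contain those corners (as a limit of uncovered boundary points), and hence $D_1$ contains all four corners and a full diagonal, forcing $r_1'\ge r_1$. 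The paper instead observes that after $r_1$ covers the left short side, four specific uncovered points (the two right corners plus a point on each long side strictly left of $x=\lambda$) cannot all fit into a disk of radius $\tfrac12$. Both work; your version avoids the explicit coordinates.

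The genuine gap is part (4), and you have acknowledged it yourself: you rule out the diagonal case and one symmetric long-edge subcase, and then write that ``the short-edge and asymmetric cases \ldots form the principal challenge'' without an argument. Unfortunately the short-edge case is precisely the tight one (it is the configuration of Fig.~\ref{fig:worst-cases-rectangles}), so it cannot be dispatched with a crude lower bound like $\lambda^2/4+1/36$ the way the symmetric long-edge subcase can --- one has to win by an amount that tends to $0$ as $\varepsilon\to 0$. The paper's closure of this case proceeds in three moves you do not carry out: (i) reduce, by a monotonicity/pushing argument on the center $c_1$ of the first disk, to the \emph{canonical} placement in which $c_1$ sits at height $\tfrac12$ touching a corner, so that the uncovered portion of $\partial\mathcal{R}$ has two extremal points $x_1,x_2$ on the long sides in addition to the two right corners; (ii) argue that no one of the remaining two disks can cover $\{x_1,x_2\}$ or both right corners (otherwise the leftover region is too large for the last disk), so the pairing $r_2\leftrightarrow\{x_1,\text{top-right}\}$, $r_3\leftrightarrow\{x_2,\text{bottom-right}\}$ is forced; and (iii) show, using the $h_2=\tfrac12$ computation from part (3) in the reverse direction, that with $r_-<r$ the point $x_3$ (where $r_2$ last meets the right edge) and $x_2$ are at distance $>2r>2r_-$, which $r_3$ cannot bridge. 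Until you supply (i)--(iii) (and the analogous argument for the long-side pairing, which you also leave open in the asymmetric subcase), part (4) is not proved.

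Two smaller remarks on your part (4) sketch: first, the inequality $r^2<(\lambda^2+1)/4$ holds for \emph{all} $\lambda\ge 1$, not only $\lambda\le\lambda_2$, so the diagonal case is universally excluded and the restriction you mention there is unnecessary; second, be aware that $r^2>\lambda^2/4$ for $\lambda$ near $1$, so a single disk of weight $r^2$ \emph{can} cover a long-edge pair of corners --- your long-edge case is therefore genuinely possible and cannot be dismissed by a radius comparison alone, which is consistent with your needing the $+1/36$ slack.
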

For large $\lambda$, the critical covering coefficient $E^*(\lambda)$ of Theorem~\ref{thm:mainRectangles} becomes worse, as large disks cannot be used to cover the rectangle efficiently.
If the weight of each disk is bounded by some $\sigma \geq r_1^2$, 
we provide the following lemma achieving a better covering coefficient $E(\sigma)$ with $E^*(\overline{\lambda}) \leq E(\sigma) \leq E^*(\lambda)$.
This coefficient is independent of the skew of $\mathcal{R}$.
\begin{restatable}{lemma}{lemmasizeboundlarge}
	\label{lem:size-bound-large}
	\revsocg{
		Let $\hat{\sigma} \coloneqq \frac{195\sqrt{5257}}{16384} \approx 0.8629$.
		Let $\sigma \geq \hat{\sigma}$ and $E(\sigma) \coloneqq \frac{1}{2}\sqrt{\sqrt{\sigma^2+1}+1}$.
		Let $\lambda \geq 1$ and $D = \{r_1,\ldots,r_n\}$ be any collection of disks with $\sigma \geq r_1^2 \geq \ldots \geq r_n^2$ and $W(D) = \sum\limits_{i=1}^{n}r_i^2 \geq E(\sigma)\lambda$.
		Then $D$ can cover a rectangle $\mathcal{R}$ of dimensions $\lambda \times 1$.
	}
\end{restatable}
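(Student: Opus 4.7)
The plan is to cover $\mathcal{R}$ by a constructive, recursive argument that slices off a leftmost vertical slab of $\mathcal{R}$ at each step, using a small group of the largest remaining disks (with disks sorted in decreasing order of weight). After each step the problem reduces to covering a narrower $(\lambda - w)\times 1$ rectangle with the remaining disks, so maintaining the invariant ``the disks consumed in a step have total weight at most $E(\sigma)\,w$'' bounds the total used weight by $E(\sigma)\,\lambda$. Hence any $D$ with $W(D)\ge E(\sigma)\,\lambda$ has enough weight to complete the cover.

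To execute a single step I would select among several sub-strategies whose weight/width ratios are explicitly computable and whose geometric feasibility depends on the sizes of the first few remaining disks and the overall size bound $\sigma\ge\hat{\sigma}>1/2$. Natural candidates include (i) a single disk of weight $r^2\in[1/4,\sigma]$ used as the circumscribing disk of a slab of width $2\sqrt{r^2-1/4}$; (ii) a two-disk covering whose slab width is limited by the requirement that the larger disk have weight at most $\sigma$, inspired by configuration (1) of Lemma~\ref{lem:worst-cases-rectangles}; and (iii) a three-disk triangular arrangement on a near-square slab, analogous to configuration (3) of the same lemma. The value $E(\sigma)=\frac{1}{2}\sqrt{\sqrt{\sigma^2+1}+1}$ is calibrated so that the worst per-step ratio over these moves is exactly $E(\sigma)$; in particular $E(\hat{\sigma})=195/256$ matches the critical covering coefficient of the unit square, anchoring the lemma at its boundary. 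Since $E$ is monotonically increasing in $\sigma$, bounding the per-step ratio by the value attainable from disks of weight up to $\sigma$ suffices.

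The main obstacle is the quantitative verification of the per-step inequality $W\le E(\sigma)\,w$ across the full space of disk prefixes and sub-strategies. The resulting nonlinear inequalities in the disk radii are precisely the kind of task the paper delegates to its interval-arithmetic machinery, so I would expect a modest manual case split followed by computer-assisted checks for the awkward parameter windows. A secondary difficulty is the endgame, when the leftover rectangle becomes too narrow for any of the large-disk sub-strategies: in this regime I would handle the residual directly, either by applying \textsc{Greedy Splitting} (as introduced in Section~2) to balance the tail of the disk sequence, or by observing that the residual width is small enough to be covered by a single remaining disk together with a trivial arrangement of the few disks still left. Combining the recursive step with this endgame yields the desired cover of $\mathcal{R}$ whenever $W(D)\ge E(\sigma)\,\lambda$.
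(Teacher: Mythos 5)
Your approach takes a genuinely different route from the paper, and as sketched it has a concrete gap that I don't think can be patched without essentially rebuilding the paper's argument. You propose peeling vertical slabs off $\mathcal{R}$ with explicit one-, two-, and three-disk configurations and claim that $E(\sigma)=\tfrac12\sqrt{\sqrt{\sigma^2+1}+1}$ is ``calibrated so that the worst per-step ratio over these moves is exactly $E(\sigma)$.'' That is not what $E(\sigma)$ is calibrated for, and the numbers don't bear it out: for instance, at $\sigma=\hat\sigma$ the worst-case two-disk slab (circumcircle of a $\sqrt{4\sigma-1}\times 1$ slab plus a half-disk) has coefficient $(4\sigma+1)/(4\sqrt{4\sigma-1})\approx 0.711$, well below $E(\hat\sigma)=195/256\approx 0.762$. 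The constant $E(\sigma)$ is instead the unique $E$ for which $\sigma = E\bigl(\Lambda(E)-2/\Lambda(E)\bigr)$ with $\Lambda(E)=2E+\sqrt{4E^2-2}$ (the skew at which $E^*(\Lambda)=E$); this is a bookkeeping constant about how much the weight of a single disk can unbalance a partition, not a per-slab coverage efficiency. Without a correct account of what $E(\sigma)$ controls, your inductive invariant $W\le E(\sigma)w$ per slab is not justified and your ``endgame'' claim for the narrow residual is unsupported.

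The paper's proof is much shorter and leans on the simultaneous induction with Theorem~\ref{thm:mainRectangles}: set $E:=E(\sigma)$, note $\sigma\ge\hat\sigma$ gives $E\ge 195/256$, and let $\Lambda(E)\ge\bar\lambda$ be the skew with $E^*(\Lambda(E))=E$. If $\lambda\le\Lambda(E)$, then $E\ge E^*(\lambda)$ and Theorem~\ref{thm:mainRectangles} applies directly (same $n$, no circularity). If $\lambda>\Lambda(E)$, apply \textsc{Greedy Splitting} once; the imbalance is at most $r_1^2\le\sigma$, so after dividing $\mathcal{R}$'s longer side proportionally the narrower piece $\mathcal{R}_2$ has width $\frac{W(D_2)}{E}\ge\frac{\Lambda(E)-\sigma/E}{2}=\frac{1}{\Lambda(E)}$, i.e.\ skew at most $\Lambda(E)$; recurse on each piece via Lemma~\ref{lem:size-bound-large} or Theorem~\ref{thm:mainRectangles}. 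No explicit disk placements, no per-slab case analysis, and no interval-arithmetic verification are needed here --- all the geometric work is delegated to Theorem~\ref{thm:mainRectangles}. Your proposal effectively attempts to re-derive a weaker version of that theorem from scratch inside this lemma, which is why it runs into a large quantitative verification the paper avoids and why your calibration story doesn't close.
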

\revsocg{
	Note that $E(\hat{\sigma}) = \frac{195}{256}$, i.e. the best covering coefficient established by Lemma~\ref{lem:size-bound-large},
	coinciding with the critical covering coefficient of the square established by Theorem~\ref{thm:mainRectangles}.
	Thus, we can cover any rectangle with covering coefficient $\frac{195}{256}$ if the largest disk satisfies $r_1^2 \leq \hat{\sigma}$.
}

The final component is the following Lemma~\ref{lem:rectanglesSB},
which also gives a better covering coefficient 
\revsocg{if the size of the largest disk is bounded.}
The bound required for Lemma~\ref{lem:rectanglesSB} is smaller than for Lemma~\ref{lem:size-bound-large};
in return, the covering coefficient that Lemma~\ref{lem:rectanglesSB} yields is better.
Note that the result of Lemma~\ref{lem:rectanglesSB} is not tight.
\newcommand{\rectanglesSBEff}{0.61}
\newcommand{\rectanglesSBRB}{0.375}
\newcommand{\rectanglesSBWB}{0.140625}
\begin{restatable}{lemma}{lemmarectanglessb}
	\label{lem:rectanglesSB}
	Let $\lambda \geq 1$ and let $\mathcal{R}$ be a rectangle of dimensions $\lambda \times 1$.
	Let $D = \{ r_1, \ldots, r_n \}$, $\rectanglesSBRB \geq r_1 \geq \ldots \geq r_n > 0$ be a collection of disks.
	If $W(D) \geq \rectanglesSBEff\lambda$, or equivalently $A(D) \geq \rectanglesSBEff\pi\lambda \approx 1.9164\lambda$, then $D$ suffices to cover $\mathcal{R}$.
\end{restatable}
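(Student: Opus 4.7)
The plan is to split $\mathcal{R}$ horizontally into two sub-strips of height $\tfrac{1}{2}$ and cover each with a nearly-balanced subset of $D$ obtained via \textsc{Greedy Splitting}. Since the lemma is explicitly not tight, I expect to exploit the slack between the optimal strip-covering coefficient $\tfrac{1}{2}$ and the stated budget $0.61$.

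First, I would apply \textsc{Greedy Splitting} to $D$ to obtain sub-collections $D_1, D_2$ whose weights differ by at most $r_1^2 \leq 9/64$; consequently each satisfies $W(D_i) \geq \tfrac{1}{2}(0.61\lambda - 9/64)$. I would assign $D_i$ to cover the strip $\mathcal{R}_i$ of dimensions $\lambda \times \tfrac{1}{2}$. Each strip is then covered by a shelf (wall-building) routine: process the disks of $D_i$ in decreasing order of radius and place them consecutively along the horizontal midline of $\mathcal{R}_i$. A disk of radius $r \geq \tfrac{1}{4}$ placed this way covers the full strip height over a horizontal span of $2\sqrt{r^2 - 1/16}$, with successive disks overlapping just enough to prevent gaps.

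To finish, I would bound the weight-per-length consumed by the shelf routine. For a strip of height $\tfrac{1}{2}$, a disk of radius $r \in [\tfrac{1}{4}, \tfrac{3}{8}]$ contributes weight-per-length $f(r) = r^2/(2\sqrt{r^2 - 1/16})$; this attains its minimum $\tfrac{1}{4}$ at $r = \tfrac{1}{2\sqrt{2}} \approx 0.354$, yielding per-strip covering coefficient $\tfrac{1}{2}$, below the target $\tfrac{0.61}{2}$. Some accounting must handle the boundary effects from the last disk placed on each strip (which may overshoot $\lambda$) and from the \textsc{Greedy Splitting} imbalance; both contribute $O(r_1^2)$ terms absorbed by the $0.11\lambda$ gap between $0.5\lambda$ and $0.61\lambda$.

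The hardest step will be the regime of disks with radius close to (but above) $\tfrac{1}{4}$, where $f(r)$ grows unboundedly, and of disks with radius strictly below $\tfrac{1}{4}$, which cannot by themselves span a half-strip. To cope with both, I would refine the scheme adaptively: either choose strip heights based on the smallest disk intended to live in that strip (so that $h \leq 2r$ always holds), or group multiple small disks into vertical column covers formed by stacking two disks whose diameters together exceed $1$. The critical technical step is to show that, across all admissible size distributions with $r_1 \leq \tfrac{3}{8}$ and $W(D) \geq 0.61\lambda$, some valid variant of the adaptive shelf procedure succeeds; for the purposes of this not-tight lemma, a coarse bound on the worst-case $f(r)$ (e.g., replacing it by a linear upper bound on $[1/4, 3/8]$) should suffice, avoiding a delicate case analysis.
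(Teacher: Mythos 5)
There is a genuine gap, and it sits exactly where you deferred it. Your fixed-height shelf (two $\lambda\times\frac12$ strips, greedy split, disks spanning the full half-height) only works when every radius lies in roughly $[0.29,\,\rectanglesSBRB]$: the weight-per-length $f(r)=r^2/(2\sqrt{r^2-1/16})$ exceeds the per-strip budget $0.305$ already for $r\lesssim 0.28$, and disks with $r<\frac14$ cannot span the strip at all. You correctly flag this, but the proposed repairs (adaptive strip heights matched to the disks living in them, stacked columns of small disks) are precisely the paper's \emph{wall-building} machinery (Lemma~\ref{lem:rectangleSB-wall-building}), and one can check that this machinery \emph{provably cannot} be closed by ``a coarse bound'': a row of height matched to its largest disk $q_1$ must be abandoned once the radii drop below roughly $0.8\,q_1$ (otherwise the per-disk efficiency blows up), and an abandoned (incomplete) row headed by a disk of radius near $\rectanglesSBRB$ can strand up to about $0.26$ of weight for $\lambda=1$; summing the geometric series of successive incomplete rows gives a potential loss exceeding the entire budget $\rectanglesSBEff\lambda$. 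So for adversarial size distributions (e.g.\ radii decreasing geometrically just below the similarity threshold, or a few disks near $\rectanglesSBRB$ mixed with many tiny ones) no variant of the adaptive shelf alone succeeds, and the slack between $\tfrac12$ and $\rectanglesSBEff$ is not enough to absorb the waste.

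This is why the paper's actual proof of Lemma~\ref{lem:rectanglesSB} is not a single shelf argument but a library of about twenty routines --- balanced and unbalanced recursive splitting, strip building from the six largest disks, wall building with an explicit accounting of incomplete rows (Lemma~\ref{lem:wall-building-suff-crit}), placing $r_1$ or $r_1,r_2$ in corners, and explicit configurations of the three to seven largest disks --- each with a success criterion depending only on $\lambda$ and $r_1,\ldots,r_7$, together with an interval-arithmetic search over more than $2^{46}$ hypercuboids in $8$ dimensions certifying that at least one criterion always holds (after Lemma~\ref{lem:rectanglesSB-handling-large-skew} disposes of $\lambda\ge 2.5$). Your opening moves (greedy splitting, inscribed-rectangle efficiency, shelf stacking) are genuine ingredients of that proof, but the ``critical technical step'' you postpone is not a routine verification --- it is essentially the entire content of the lemma, and as written the proposal does not constitute a proof.
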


\subsection{Proof Overview}
The proofs of Theorem~\ref{thm:mainRectangles} and Lemmas~\ref{lem:size-bound-large}~and~\ref{lem:rectanglesSB} work by induction on the number of disks.
For proving Lemma~\ref{lem:size-bound-large} for $n$ disks, we use Theorem~\ref{thm:mainRectangles} for $n$ disks.
For proving Theorem~\ref{thm:mainRectangles} for $n$ disks, we use Lemma~\ref{lem:rectanglesSB} for $n$ disks; Lemma~\ref{lem:size-bound-large} is only used for fewer than $n$ disks; \revsocg{see Fig.~\ref{fig:proof_flow}.}
\begin{figure}
	\centering
	\resizebox{.5\linewidth}{!}{\includegraphics{./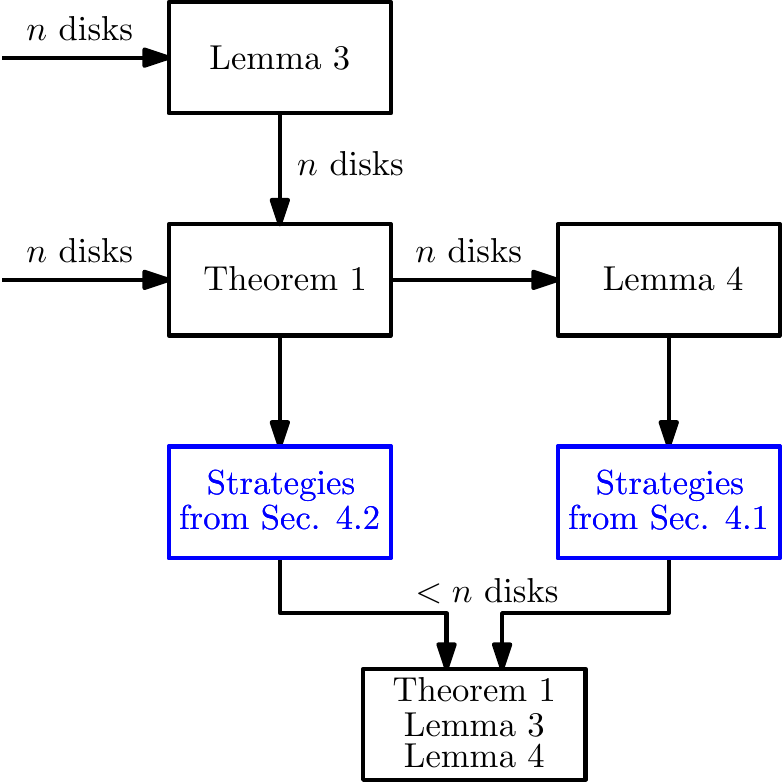}}
	\caption{The inductive structure of the proof; the blue parts are computer-aided.}
	\label{fig:proof_flow}
\end{figure}
For proving Lemma~\ref{lem:rectanglesSB} for $n$ disks, we only use Theorem~\ref{thm:mainRectangles} and Lemma~\ref{lem:size-bound-large} for fewer than $n$ disks.
Therefore, there are no cyclic dependencies in our argument; however, we have to perform the induction for Theorem~\ref{thm:mainRectangles} and Lemmas~\ref{lem:size-bound-large}~and~\ref{lem:rectanglesSB} simultaneously.

\mparagraph{Routines} The proofs of Theorem~\ref{thm:mainRectangles} and Lemma~\ref{lem:rectanglesSB} are constructive; they are based on an efficient recursive algorithm that uses a set of simple \emph{routines}.
\revsocg{We go through the list of rountines in some fixed order.
For each routine, we check a sufficient criterion for the routine to work.
We call these criteria \emph{success criteria}.
They only depend on the total available weight and a constant number of largest disks.
If we cannot guarantee that a routine works by its success criterion, we simply disregard the routine;
this means that our algorithm does not have to backtrack.
We prove that, regardless of the distribution of the disks' weight,
at least one success criterion is met, implying that we can always apply at least one routine.
The number of routines and thus success criteria is large;
this is where the need for automatic assistance comes from.}

\mparagraph{Recursion}
\revsocg{Typical routines are recursive;
they consist of splitting the collection of disks into smaller parts,
splitting the rectangle accordingly, and recursing, or recursing after fixing the position
of a constant number of large disks.

In the entire remaining proof, the criterion we use to guarantee that recursion works is as follows.
Given a collection $D' \subsetneq D$ and a rectangular region $\mathcal{R}' \subsetneq \mathcal{R}$,
we check whether the preconditions of Theorem~\ref{thm:mainRectangles} or Lemma~\ref{lem:size-bound-large}~or~\ref{lem:rectanglesSB} are met
after appropriately scaling and rotating $\mathcal{R}'$ and the disks.
Note that, due to the scaling, the radius bounds of Lemmas~\ref{lem:size-bound-large}~and~\ref{lem:rectanglesSB} depend on the length of the shorter side of $\mathcal{R}'$.
In some cases where we apply recursion, we have more weight than necessary to satisfy the weight requirement for recursion according to Lemma~\ref{lem:size-bound-large}~or~\ref{lem:rectanglesSB},
but these lemmas cannot be applied due to the radius bound.
In that case, we also check whether we can apply Lemma~\ref{lem:size-bound-large}~or~\ref{lem:rectanglesSB} after increasing the length of the shorter side of $\mathcal{R}'$ as far as the disk weight allows.
This excludes the case that we cannot recurse on $\mathcal{R}'$ due to the radius bound, but there is some $\mathcal{R}'' \supset \mathcal{R}'$ on which we could recurse.
}

\subsection{Interval Arithmetic}\label{sec:interval-arithmetic}
We use interval arithmetic to prove that there always is a successful routine.
In interval arithmetic, operations like addition, multiplication or taking a square root are performed on intervals $[a,b] \subset \mathbb{R}$ instead of numbers.
Arithmetic operations on intervals are derived from their real counterparts as follows.
The result of an operation $\circ$ in interval arithmetic is \[[a_1,b_1] \circ [a_2,b_2] \coloneqq \left[\min\limits_{x_1 \in [a_1,b_1], x_2 \in [a_2,b_2]}x_1 \circ x_2,\max\limits_{x_1 \in [a_1,b_1], x_2 \in [a_2,b_2]} x_1 \circ x_2\right].\]
Thus, the result of an operation is the smallest interval that contains all possible results of $x \circ y$ for $x \in [a_1,b_1], y \in [a_2,b_2]$.
Unary operations are defined analogously.
For square roots, division or other operations that are not defined on all of $\mathbb{R}$, a result is undefined 
iff the input interval(s) contain values for which the real counterpart of the operation is undefined.

\mparagraph{Truth values} In interval arithmetic, inequalities such as $[a_1,b_1] \leq [a_2,b_2]$ can have three possible truth values.
An inequality can be \emph{definitely true}; this means that the inequality holds for any value of $x \in [a_1,b_1], y \in [a_2,b_2]$.
In the example $[a_1,b_1] \leq [a_2,b_2]$, this is the case if $b_1 \leq a_2$.
An inequality can be \emph{indeterminate}; this means that there are some values $x,x' \in [a_1,b_1], y,y' \in [a_2,b_2]$ such that the inequality holds for $x,y$ and does not hold for $x',y'$.
In the example $[a_1,b_1] \leq [a_2,b_2]$, this is the case if $a_1 \leq b_2$ and $b_1 > a_2$.
Otherwise, an inequality is \emph{definitely false}.
An inequality that is either \emph{definitely true} or \emph{indeterminate} is called \emph{possibly true}; an inequality that is either \emph{indeterminate} or \emph{definitely false} is called \emph{possibly false}.
These truth values can also be interpreted as intervals $[0,0],[0,1],[1,1]$.

\mparagraph{Using interval arithmetic} \revsocg{We apply interval arithmetic in our proof as follows.
Recall that for each routine, we have a \emph{success criterion}.
These criteria only consider $\lambda \geq 1$ and the largest $k \in \mathcal{O}(1)$ disks $r_1 \geq \cdots \geq r_k$ as well as the remaining weight
$R_{k+1} \coloneqq \sum_{i=k+1}^{n} r_i^2$, which can be computed from $\lambda$ and $r_1,\ldots,r_k$, assuming w.l.o.g.\ that the total disk weight $W(D)$ is exactly $W^*(\lambda)$.

If we can manually perform induction base and induction step of our result for all $\lambda \geq \hat{\lambda}$ for some finite value $\hat{\lambda}$,
we can also provide an upper bound $\hat{r}_1$ for $r_1$ such that all cases that remain to be considered (in our induction base and induction step) correspond to a point in the $(k+1)$-dimensional space $\Psi$ given by 
\[\lambda \in [1,\hat{\lambda}], r_1 \in [0,\hat{r}_1],r_2 \in [0,r_1],\ldots,r_k \in [0,r_{k-1}],\sum\limits_{i=1}^kr_i^2 \leq W^*(\lambda).\]
This is due to the fact that there is nothing to prove if $r_1$ can cover $\mathcal{R}$ on its own; $r_1$ can have no more than the total disk weight $W(D)$ and $r_k \leq \cdots \leq r_2 \leq r_1$.
Furthermore, observe that the induction base is just a special case with $r_i = r_{i+1} = \cdots = 0$ for some $1 < i \leq k$.

This allows subdividing (a superset of) $\Psi$ into a large finite number of \emph{hypercuboids}
by splitting the range of each of the variables $\lambda,r_1,\ldots,r_k$ into a number of smaller intervals.
For each hypercuboid, we then use interval arithmetic to verify that there is a routine whose success criterion is met.
If we find such a routine, we have eliminated all points in that hypercuboid from further consideration.
Hypercuboids for which this does not succeed are called \emph{critical} and must be resolved manually;
note that, in particular, hypercuboids containing (tight) worst-case configurations cannot be handled by interval arithmetic.
The restriction to critical hypercuboids makes the overall analysis feasible, 
while a manual analysis of the entire space is impractical due to the large number of routines and variables.
}

\mparagraph{Implementation} \revsocg{
We implemented the subdivision outlined above and all success criteria of our routines using interval arithmetic\footnote{The source code of the implementation is available online:\\\url{https://github.com/phillip-keldenich/circlecover} .}.
Because most of our success criteria use the squared radii $r_i^2$ instead of the radii $r_i$, we use $\lambda$ and $r_i^2$ instead of $r_i$ as variables.
Moreover, for efficiency reasons, instead of the simple grid-like subdivision outlined above, we use a search-tree-like subdivision strategy where we begin by subdividing the range of $\lambda$, continue by subdividing $r_1^2$, followed by $r_2^2$, and so on.
Whenever a success criterion only needs the first $i < k$ disks, we can check this criterion farther up in the tree, thus potentially avoiding visits to large parts of the search tree; see Fig.~\ref{fig:ia_overview} for a sketch of this procedure.
Even with this pruning in place, the number of hypercuboids that we have to consider is still very large; this is a result of the fact that, depending on the claim at stake, we have $5$ or even $8$ dimensions.
Therefore, we implemented the checks for our success criteria on a CUDA-capable GPU to perform them in a massively parallel fashion.

Moreover, to provide a finer subdivision where necessary, we run our search in several \emph{generations} (our proof uses 11 generations).
Each generation yields a set of critical hypercuboids that could not be handled automatically.
After each generation, for each subinterval of $\lambda$, we collect all critical hypercuboids and merge those for which the
$r_1^2$-subintervals are overlapping by taking the smallest hypercuboid containing all points in the merged hypercuboids.
This procedure typically yields only 1-3 hypercuboids per subinterval of $\lambda$.
The next generation is run on each of these, starting with the bounds given by these hypercuboids.
\begin{figure}[p]%
	\centering
	\resizebox{.99\linewidth}{!}{\includegraphics{./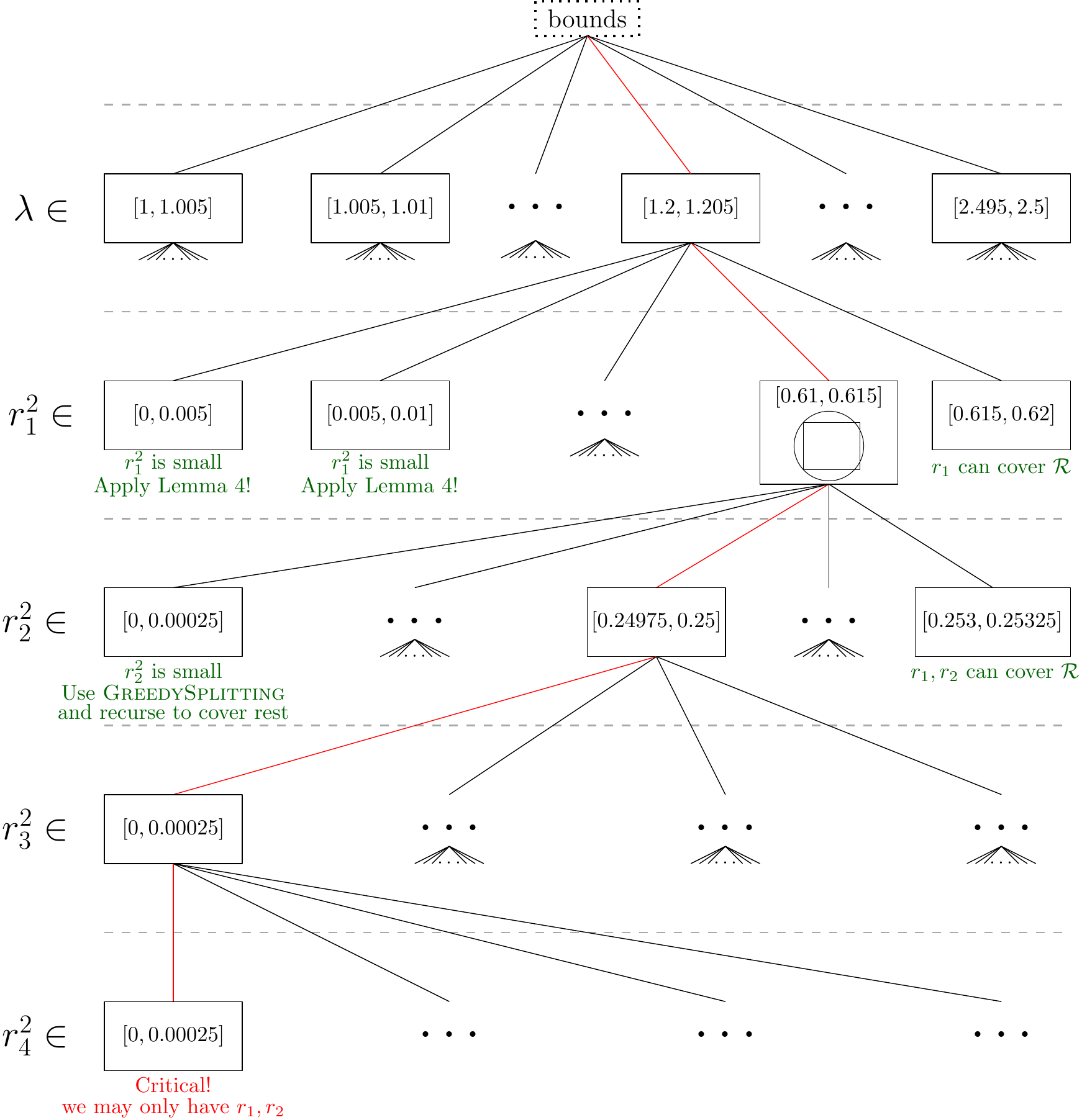}}
	\caption{\revsocg{Sketch of our interval arithmetic-based search procedure.
			 The red edges denote a path leading to a critical cuboid containing a tight two-disk worst-case configuration.
			 Green text indicates that the children of the corresponding node do not have to be considered.}}
	\label{fig:ia_overview}
\end{figure}
}

\mparagraph{Numerical issues} When performing computations on a computer with limited-precision floating-point numbers instead of real numbers, there can be rounding errors, underflow errors and overflow errors.
Our implementation of interval arithmetic performs all operations using appropriate rounding modes; this technique is also used by the implementation of interval arithmetic in the well-known Computational Geometry Algorithms Library (CGAL)~\cite{cgal}.
This means that any operation $\circ$ on two intervals $A,B$ yields an interval $I \supseteq A \circ B$ to ensure that the result of any operation contains all values that are possible outcomes of $x \circ y$ for $x,y \in A,B$.
This guarantees soundness of our results in the presence of numerical errors.

\FloatBarrier
\section{Proof Structure}\label{sec:proof-structure}
In this section, we give an overview of the structure of the proofs of Theorem~\ref{thm:mainRectangles} and
Lemmas~\ref{lem:worst-cases-rectangles},~\ref{lem:size-bound-large}~and~\ref{lem:rectanglesSB}.
\ifthenelse{\boolean{applemmaworstcasesrectangles}}{%
	For the proof of Lemma~\ref{lem:worst-cases-rectangles}, we refer to the full version~\cite{fullversion} of our paper.
}{%
	We prove Lemma~\ref{lem:worst-cases-rectangles} in Section~\ref{sec:proof-lemma-worst-cases-rectangles} using a straightforward argument and simple case analysis.
}%
Lemma~\ref{lem:size-bound-large} is proven in Section~\ref{sec:proof-lemma-size-bound-large} using a simple recursive algorithm;
basically, we show that we can always split the disks using \textsc{Greedy Splitting}, split the rectangle accordingly, and recurse using Theorem~\ref{thm:mainRectangles}.
The proofs of Theorem~\ref{thm:mainRectangles} and Lemma~\ref{lem:rectanglesSB} involve a larger number of routines and
make use of an automatic prover based on interval arithmetic as described in Section~\ref{sec:interval-arithmetic}.

\subsection{Proof Structure for Lemma~\ref{lem:rectanglesSB}}
Proving Lemma~\ref{lem:rectanglesSB} means proving that, for any skew $\lambda$,
any collection $D$ of disks of radius $r_1 \leq \rectanglesSBRB$ and with total weight $W(D) = E\lambda$ suffices to cover $\mathcal{R}$,
where $E = \rectanglesSBEff$ is the covering coefficient guaranteed by Lemma~\ref{lem:rectanglesSB}.
We first reduce the number of cases that we have to consider in our induction base and induction step to a finite number.
As described in Section~\ref{sec:interval-arithmetic}, this requires handling the case of arbitrarily large skew $\lambda$.
Finding a bound $\hat{\lambda}$ and reducing Lemma~\ref{lem:rectanglesSB} for $\lambda \geq \hat{\lambda}$ to the case of $\lambda < \hat{\lambda}$ yields bounds for $\lambda$ and $r_1,\ldots,r_k$ that allow a reduction to finitely many cases using interval arithmetic.
\begin{lemma}\label{lem:rectanglesSB-handling-large-skew}
	Let $\hat{\lambda} = 2.5$.
	Given disks $D$ according to the preconditions of Lemma~\ref{lem:rectanglesSB} and $\lambda \geq \hat{\lambda}$, we can cover $\mathcal{R}$ using a simple recursive routine.
\end{lemma}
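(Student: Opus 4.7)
The plan is to use a \textsc{Greedy Splitting}-style approach along the long axis of $\mathcal{R}$: cut off a vertical strip of width $\lambda_1$ close to $1$, cover it by recursion on Lemma~\ref{lem:rectanglesSB} with a carefully chosen subset $D_1 \subseteq D$, and then recurse on the remaining $(\lambda-\lambda_1)\times 1$ sub-rectangle using the complementary set $D_2 = D \setminus D_1$. Writing $E = 0.61$ for the covering coefficient and $\sigma = 0.375^2 = 0.140625$ for the squared-radius bound of Lemma~\ref{lem:rectanglesSB}, I would pick $\lambda_1$ adaptively so that $W(D_1) = E\lambda_1$ on the nose; this is essentially the only way to guarantee that both sides of the cut satisfy the weight hypothesis of the lemma when we start from the worst case $W(D) = E\lambda$.

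Concretely, sort $D$ by decreasing radius and accumulate disks into $D_1$ one at a time until the running sum $W(D_1)$ first satisfies $W(D_1) \geq E$; then set $\lambda_1 \coloneqq W(D_1)/E$. Since each added disk contributes at most $\sigma$ to the sum, we obtain
\[
  1 \;\leq\; \lambda_1 \;\leq\; 1 + \frac{\sigma}{E} \;=\; 1 + \frac{0.140625}{0.61} \;<\; 1.231,
\]
so in particular $\lambda - \lambda_1 \geq 2.5 - 1.231 > 1$, and both sub-rectangles have their longer side horizontal (orientation $\mathrm{width}\geq\mathrm{height}=1$, as required by the lemma). The bound $r_1^2 \leq \sigma$ is inherited by both $D_1$ and $D_2$; by construction $W(D_1) = E\lambda_1$, and $W(D_2) = W(D) - W(D_1) \geq E\lambda - E\lambda_1 = E(\lambda - \lambda_1)$. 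Moreover $D_1 \neq \emptyset$ trivially, and $D_2 \neq \emptyset$ because $W(D) \geq 2.5\,E > E + \sigma > W(D_1)$, so the accumulation stops before exhausting $D$. Hence $|D_1|, |D_2| < |D|$ and the induction hypothesis of Lemma~\ref{lem:rectanglesSB} applies to each side, completing the recursion.

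The main obstacle --- and what pins down the constant $\hat{\lambda} = 2.5$ --- is the interplay between the upper bound $1 + \sigma/E \approx 1.2306$ on $\lambda_1$ and the requirement $\lambda - \lambda_1 \geq 1$ needed for the right-hand sub-rectangle to satisfy the aspect-ratio hypothesis of Lemma~\ref{lem:rectanglesSB}. Any $\hat{\lambda} > 1 + \sigma/E$ would in principle suffice; the choice $\hat{\lambda} = 2.5$ leaves a comfortable margin, which is presumably convenient for the subsequent finite case analysis by interval arithmetic. Everything else reduces to routine bookkeeping.
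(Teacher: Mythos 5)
Your proposal is correct and follows essentially the same route as the paper's proof: greedily accumulate disks into $D_1$ until $W(D_1)\geq E$, cut off a strip of width exactly $W(D_1)/E$ so no weight is wasted, bound that width by $1+\sigma/E$ via the last added disk, and use $\lambda\geq 2.5$ to guarantee the remaining strip still has width at least $1$. The only (immaterial) difference is that your bound $\lambda_1<1.231$ is slightly sharper than the paper's stated $<1.5$.
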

\begin{proof}
	\revsocg{
	The routine works as follows.
	We build a list of disks $D_1$ by adding disks in decreasing order of radius until $W\left(D_1\right) \geq E$.
	Due to the radius bound, this procedure always stops before all disks are used, i.e., $D_1 \subsetneq D$.
	Let $D_2 \coloneqq D\setminus D_2$ be the remaining disks.
	We then place a vertical rectangular strip $\mathcal{R}_1$ of height $1$ and width
	$\beta_{\mathcal{R}_1} \coloneqq \frac{W\left(D_1\right)}{E} \geq 1$ at the left side of $\mathcal{R}$.
	By induction, we can recurse on $\mathcal{R}_1$ using Lemma~\ref{lem:rectanglesSB} and the disks from $D_1$,
	because both side lengths are at least $1$ and the efficiency we require is \emph{exactly} $E$.
	Note that, due to adapting the width $\beta_{\mathcal{R}_1}$ according to the actual weight $W(D_1)$, we actually achieve an efficiency of $E$;
	in other words, there is no \emph{waste} of disk weight.
	This means that we also require an efficiency of exactly $E$ on the remaining rectangle $\mathcal{R}_2 \coloneqq \mathcal{R}\setminus\mathcal{R}_1$.
	Therefore, provided that the largest disk in $D_2$ satisfies the size bound of Lemma~\ref{lem:rectanglesSB}, we can inductively apply Lemma~\ref{lem:rectanglesSB} to $\mathcal{R}_2$ and $D_2$ and are done.
	This can be guaranteed by proving that the shorter side of $\mathcal{R}_2$ is at least $1$ as well.
	We have $W(D_1) \leq E+r_1^2 \leq E + \rectanglesSBRB^2$ which implies $\beta_{\mathcal{R}_1} \leq 1 + \frac{\rectanglesSBRB^2}{E} < 1.5$;
	therefore, $\lambda \geq 2.5$ ensures that the width of $\mathcal{R}_2$ is at least $1$.
	}
\end{proof}

\revsocg{As outlined in Section~\ref{sec:interval-arithmetic}, the remainder of the proof of Lemma~\ref{lem:rectanglesSB}
is based on a list of simple covering routines and their success criteria.
We prove that there always is a working routine in that list using an automatic prover based on interval arithmetic,
as described in Section~\ref{sec:interval-arithmetic}.
This automatic prover considers the 8-dimensional space spanned by the variables $\lambda$ and $r_1^2,\ldots,r_7^2$ and
subdivides it into a total of more than $2^{46}$ hypercuboids in order to prove that there always is a working routine,
i.e., no critical hypercuboids remain to be analyzed manually; this only works because the result of Lemma~\ref{lem:rectanglesSB} is not tight.}

In the following, we give a brief description of the routines that we use.
\ifthenelse{\boolean{appproofsizebounded}}{
	\sbnewsubsec{strat:rectanglesSB-eurs}%
	\sbnewsubsec{strat:rectanglesSB-bas}%
	\sbnewsubsec{strat:rectanglesSB-wb}%
	\sbnewsubsec{strat:rectanglesSB-r1c}%
	\sbnewsubsec{strat:rectanglesSB-r12oc}%
	\sbnewsubsec{strat:rectanglesSB-3ld}%
	\sbnewsubsec{strat:rectanglesSB-4ld}%
	\sbnewsubsec{strat:rectanglesSB-5ld}%
	\sbnewsubsec{strat:rectanglesSB-6ld}%
	\sbnewsubsec{strat:rectanglesSB-7ld}%
	Due to space constraints, for a detailed description of the routines, we refer to the full version of our paper~\cite{fullversion}.
}{
	The routines are described in detail in Section~\ref{sec:proof-rectanglesSB}. 
}%

\mparagraph{Recursive splitting}
Routines~(\sbstratref{strat:rectanglesSB-eurs}{1})~and~(\sbstratref{strat:rectanglesSB-eurs}{2}) work by splitting $D$ into two parts, splitting $\mathcal{R}$ accordingly, and recursing on the two sub-rectangles.
This split is either performed as balanced as possible using \textsc{Greedy Splitting}, or in an unbalanced manner; in the latter case, we choose an unbalanced split to accommodate large disks that violate the radius bound of Lemma~\ref{lem:rectanglesSB} w.r.t.\ a rectangle of half the width of $\mathcal{R}$.

\mparagraph{Building a strip}
Routine~(\sbstratref{strat:rectanglesSB-bas}{1}) works by either covering the left or the bottom side of a rectangular strip $\mathcal{R}$; see Fig.~\ref{fig:rectangleSB-vertical-strip}.
This strip uses a subset of the largest six disks and tries several configurations for placing the disks.
The remaining area is covered by recursion.
\begin{figure}%
	\centering
	\resizebox{.85\linewidth}{!}{\includegraphics{./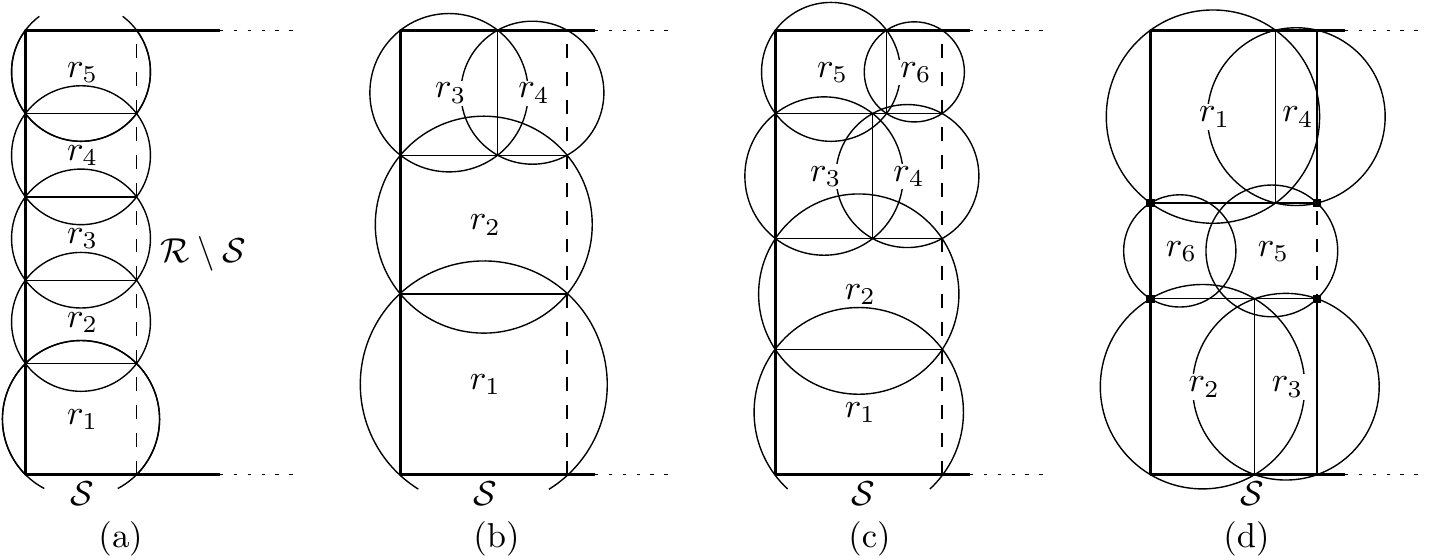}}
	\caption{Some placements considered by Routine~\sbstratref{strat:rectanglesSB-bas}{1} to build a vertical strip; horizontal strips are analogous.
	(a) Simply stacking a subset $T$ of the six largest disks on top of each other.
	(b) Stacking $r_1,r_2$ on top of each other, and placing $r_3,r_4$ horizontally next to each other on top.
	(c) Same as (b), but with an additional row built from $r_5,r_6$.
	(d) Building two rows at the top and the bottom consisting of $r_1,r_4$ and $r_2,r_3$, and covering the remaining region by $r_5,r_6$.
	The points on the boundary defining the position of $r_5$ and $r_6$ are marked by squares.
	Note that $r_5$ and $r_6$ are not big enough to cover the entire rectangular area between the top and the bottom row.
	}
	\label{fig:rectangleSB-vertical-strip}
\end{figure}%

\mparagraph{Wall building} Routines~(\sbstratref{strat:rectanglesSB-wb}{1})~and~(\sbstratref{strat:rectanglesSB-r1c}{1}) are based on the idea of covering a rectangular strip of fixed length $\ell$ and variable width $b$ with covering coefficient exactly $E$.
We call this \emph{wall building}.
To achieve this covering coefficient, we stack disks of similar size on top of (or horizontally next to) each other; each disk placed in this way covers a rectangle of variable height, but width $b$.
We provide sufficient conditions \ifthenelse{\boolean{appproofsizebounded}}{for}{(see Lemma~\ref{lem:rectangleSB-wall-building}) for}
this procedure to result in a successful covering of a strip of length $\ell$.
Routine~(\sbstratref{strat:rectanglesSB-wb}{1}) uses this idea to build a column of stacked disks at the left side of $\mathcal{R}$; see Fig.~\ref{fig:rectangleSB-wall-building}.
\begin{figure}%
		\centering
		\resizebox{.75\linewidth}{!}{\includegraphics{./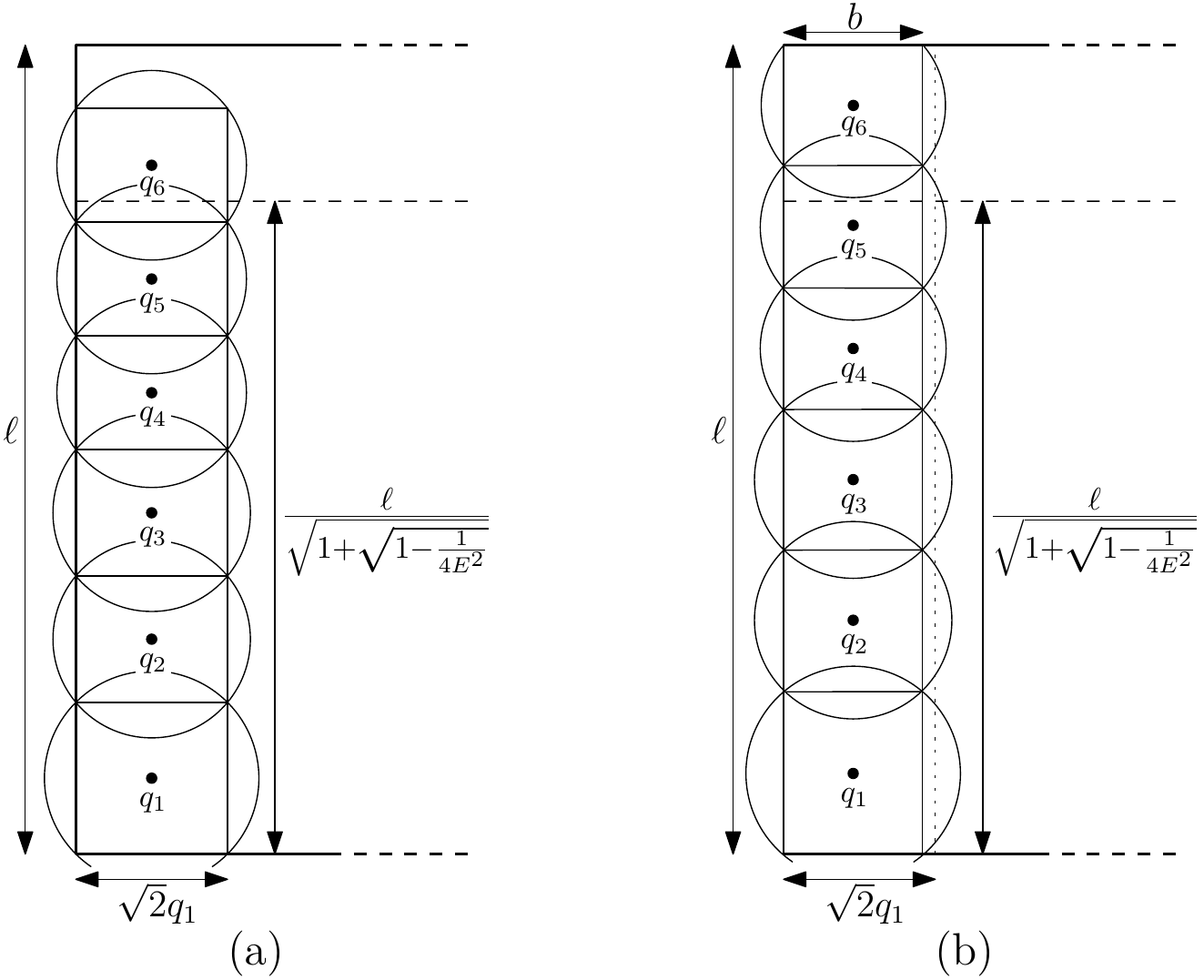}}
		\caption{The wall-building procedure.
		(a) Using an initial guess of $b = \sqrt{2}q_1$ as width, where $q_1$ is the largest disk that we use, we stack disks until they exceed a certain fraction of the length $\ell$.
		(b) We decrease $b$ until the disks exactly cover a strip of length $\ell$.}
		\label{fig:rectangleSB-wall-building}
\end{figure}%
Routine~(\sbstratref{strat:rectanglesSB-r1c}{1}) uses this idea by placing $r_1$ in the bottom-left corner of $\mathcal{R}$ and filling the area above $r_1$ with horizontal rows of disks; see Fig.~\ref{fig:rectanglesSB-r1-wall-building}.
\begin{figure}%
	\centering
	\resizebox{.8\linewidth}{!}{\includegraphics{./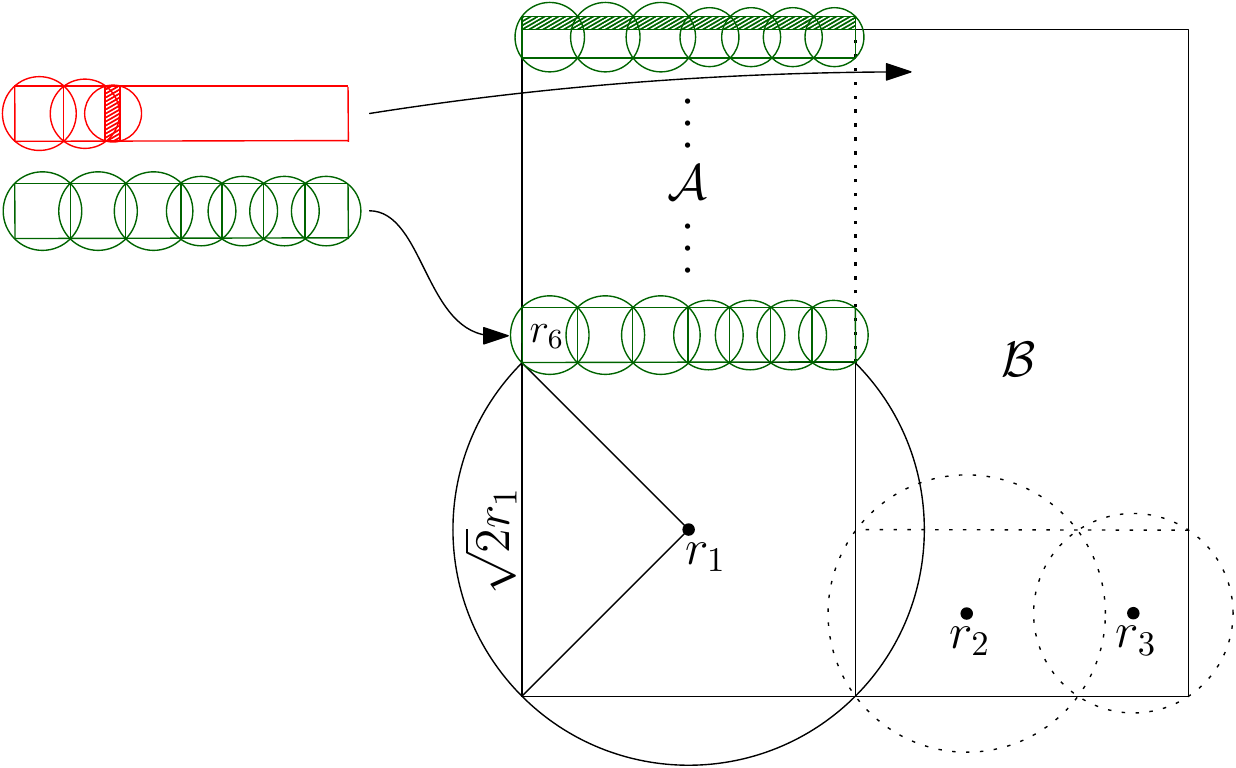}}
	\caption{
		Routine~\sbstratref{strat:rectanglesSB-r1c}{1} places $r_1$ in the bottom-left corner and tries to cover $\mathcal{A}$ using either recursion or wall building.
		In the latter case, whenever the disk radius drops too much while building a row of length $\ell = \sqrt{2}r_1$, we move the disks constituting this incomplete row to $\mathcal{B}$ (red).
		Otherwise, a complete row is built (green) and we continue with the next row.
		This process stops once the entire area $\mathcal{A}$ is covered, including some potential overhead (shaded green region).
		We compensate for the overhead by the area gained by placing $r_1$ covering a square.
		In case $r_2$ does not fit into $\mathcal{B}$ recursively, we try placing $r_2,r_3$ (or $r_2,r_3,r_4$) at the bottom of $\mathcal{B}$ (dotted outline).
	}
	\label{fig:rectanglesSB-r1-wall-building}
\end{figure}%
Intuitively speaking, these routines are necessary to handle cases in which there are large disks that interfere with recursion, but small disks, for which we do not know the weight distribution, significantly contribute to the total weight.

\mparagraph{Using the two largest disks}
Routine~(\sbstratref{strat:rectanglesSB-r12oc}{1}) places the two largest disks in diagonally opposite corners, each disk covering its inscribed square; see Fig.~\ref{fig:rectanglesSB-r1_r2_opposite_corners}.
The remaining area is subdivided into three rectangular regions; we cover these regions recursively, considering several ways to split the remaining disks.
\begin{figure}%
	\centering
	\resizebox{.80\linewidth}{!}{\includegraphics{./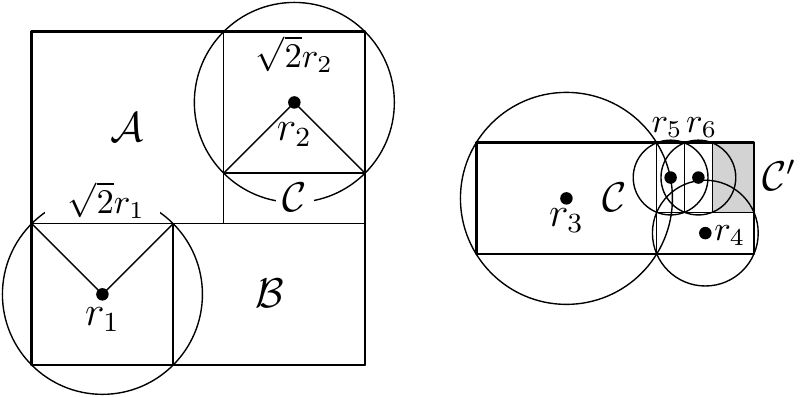}}
	\caption{The routine~\sbstratref{strat:rectanglesSB-r12oc}{1} places $r_1$ and $r_2$ in diagonally opposite corners, each covering a square.
	We cover three remaining rectangular areas $\mathcal{A}, \mathcal{B}, \mathcal{C}$ using the remaining disks (left).
	Regions $\mathcal{A}, \mathcal{B}$ and $\mathcal{C}$ are covered by recursion; we also consider using disks $r_3,\ldots,r_6$ to reduce $\mathcal{C}$ to $\mathcal{C}'$ (light gray) before recursing.}
	\label{fig:rectanglesSB-r1_r2_opposite_corners}
\end{figure}%

\mparagraph{Using the three largest disks} Routines~(\sbstratref{strat:rectanglesSB-3ld}{1})~and~(\sbstratref{strat:rectanglesSB-3ld}{2}) consider two different placements of the largest three disks as shown in Fig.~\ref{fig:rectanglesSB-l-shaped-recursion}.
\begin{figure} 
	\centering
	\includegraphics[width=.9\linewidth]{./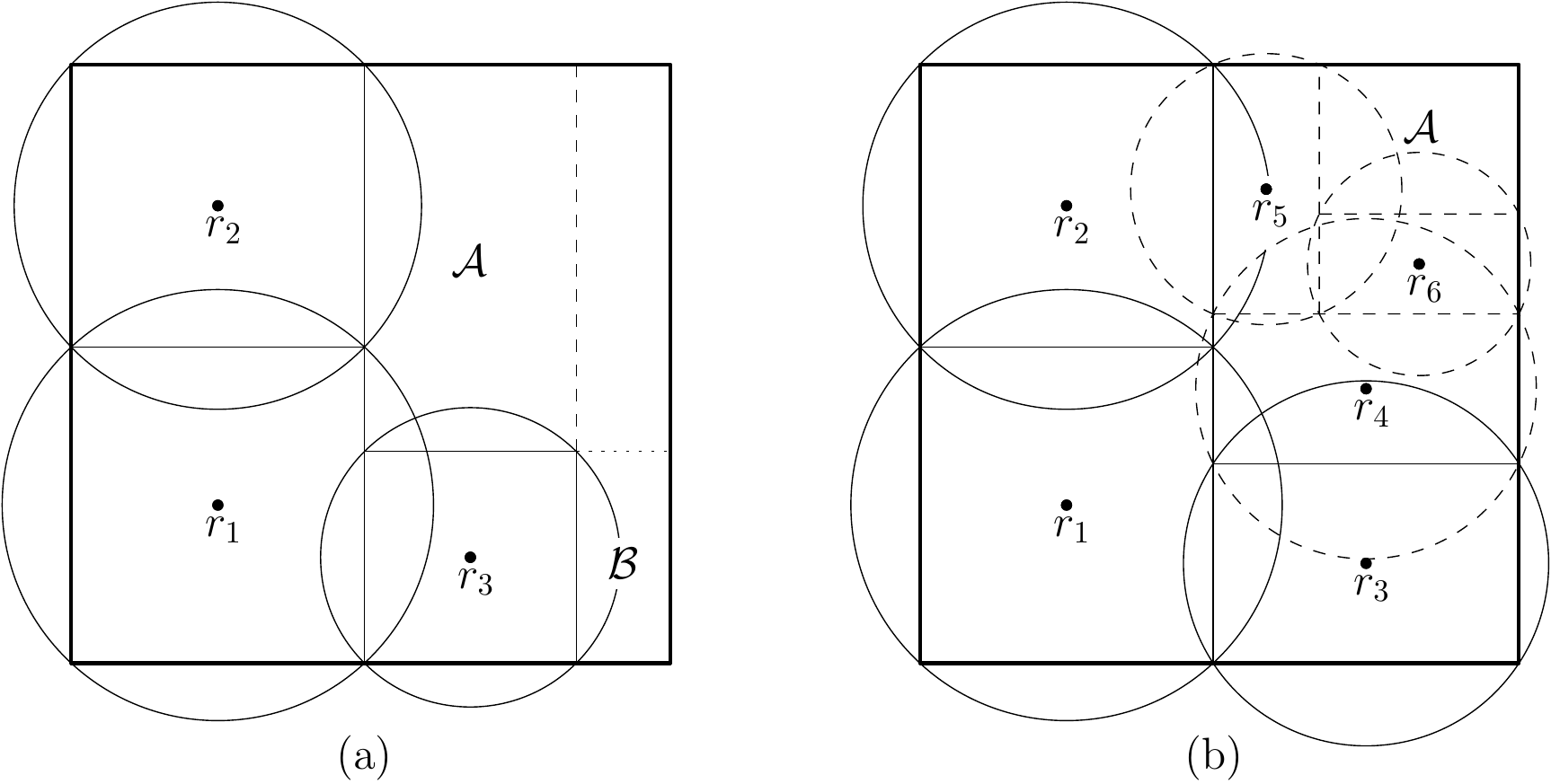}
	\caption{Two routines based on using the three largest disks. We use $r_1$ and $r_2$ to cover a vertical strip of height $1$ and maximal width.
	(a) In Routine~\sbstratref{strat:rectanglesSB-3ld}{1}, we place $r_3$ to the right of $r_1$, covering its inscribed square at the lower left corner of the remaining rectangle; the remaining region can be subdivided into two rectangles $\mathcal{A},\mathcal{B}$ in two ways (dashed and dotted line).
	(b) In Routine~\sbstratref{strat:rectanglesSB-3ld}{2}, we cover a horizontal strip of the remaining rectangle using $r_3$; we either recurse on the remaining rectangle directly or place some of the disks $r_4,r_5,r_6$ to cut off pieces of the longer side of the remaining rectangle (dashed outlines).}
	\label{fig:rectanglesSB-l-shaped-recursion}
\end{figure}

\mparagraph{Using the four largest disks} Routines~(\sbstratref{strat:rectanglesSB-4ld}{1})--(\sbstratref{strat:rectanglesSB-4ld}{3}) consider different placements of the four largest disks and recursion to cover $\mathcal{R}$; see Fig.~\ref{fig:rectanglesSB-r1_in_corner_right_34_and_2}.
\begin{figure}
	\centering
	\includegraphics[width=.60\textwidth]{./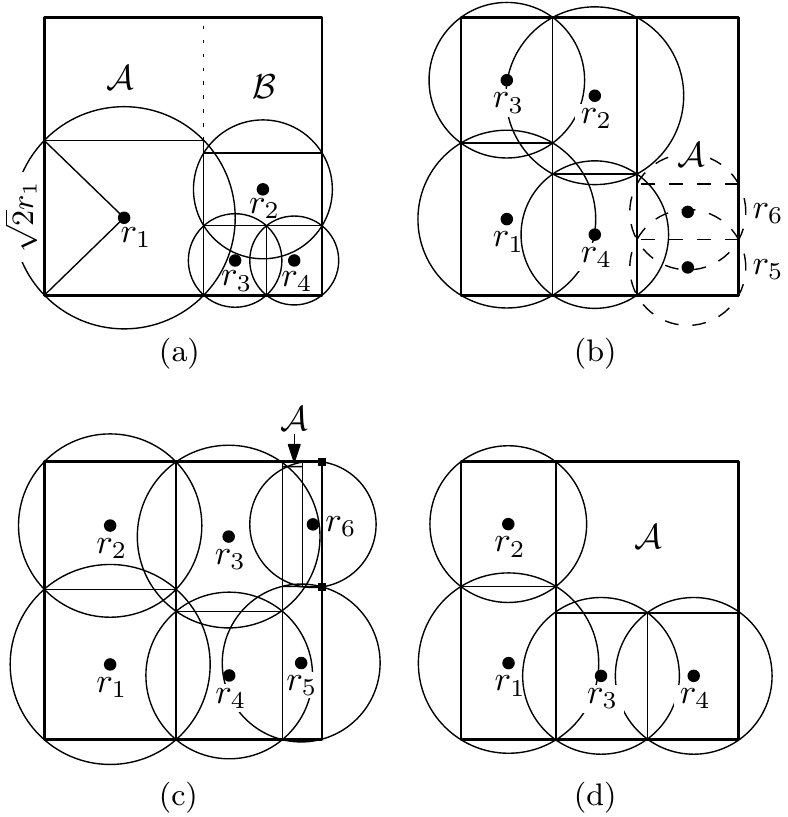}
	\caption{Covering routines that mainly rely on the four largest disks to cover $\mathcal{R}$.
	(a) In Routine~\sbstratref{strat:rectanglesSB-4ld}{1}, we place $r_1$ in the bottom-left corner, covering its inscribed square; use $r_2,r_3$ and $r_4$ to cover as much of the vertical strip remaining to the left of $r_1$, and recurse on $\mathcal{A}$ and $\mathcal{B}$.
	(b) In Routine~\sbstratref{strat:rectanglesSB-4ld}{2}, in the first case, we cover a rectangular strip using $r_1,\ldots,r_4$.
	Either use recursion immediately on the remainder $\mathcal{A}$, or recurse after placing $r_5$ and possibly $r_6$ covering a rectangle at the bottom of $\mathcal{A}$.
	(c) In Routine~\sbstratref{strat:rectanglesSB-4ld}{2}, in the second case, we cover a rectangular strip using $r_1,\ldots,r_4$ and place $r_5$ at the bottom of the remainder as in (b);
	however, we change the placement of $r_6$ to cover the remaining part of the right side of $\mathcal{R}$.
	The points that determine the position of $r_6$ are marked by black squares in the figure.
	We use recursion to cover the bounding box $\mathcal{A}$ of the area that remains uncovered.
	(d) In Routine~\sbstratref{strat:rectanglesSB-4ld}{3}, we cover an $L$-shaped region of $\mathcal{R}$ using the four largest disks, and recurse on the remaining region $\mathcal{A}$.}
	\label{fig:rectanglesSB-r1_in_corner_right_34_and_2}
\end{figure}

\mparagraph{Using the five largest disks}
Routines~(\sbstratref{strat:rectanglesSB-5ld}{1})~and~(\sbstratref{strat:rectanglesSB-5ld}{2}) consider different placements of the five largest disks and recursion to cover $\mathcal{R}$; see Fig.~\ref{fig:rectanglesSB-five-disks}.
\begin{figure}%
	\begin{center}
		\resizebox{.75\linewidth}{!}{\includegraphics{./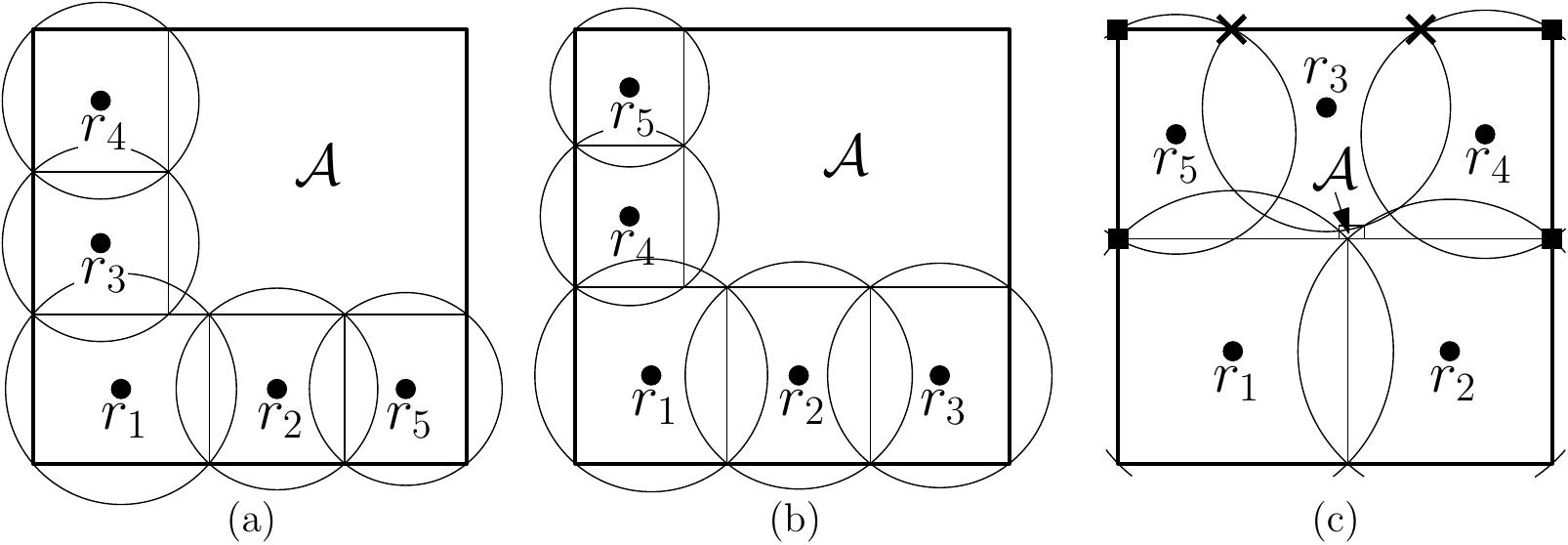}}
	\end{center}
	\caption{Routines for covering $\mathcal{R}$ using the five largest disks and recursion.
		According to Routine~\sbstratref{strat:rectanglesSB-5ld}{1}, in (a) and (b), we first cover a horizontal strip of maximum height using three disks ($r_1,r_2,r_3$ or $r_1,r_2,r_5$) and then cover a vertical strip using the other two disks.
		(c) Routine~\sbstratref{strat:rectanglesSB-5ld}{2} places the five largest disks such that everything but a small region $\mathcal{A}$ is covered.
		The points that define the placement of $r_4$ and $r_5$ in are marked by boxes; those that define the placement of $r_3$ are marked $\times$.
		All three routines use recursion to cover $\mathcal{A}$.}
	\label{fig:rectanglesSB-five-disks}
\end{figure}%

\mparagraph{Using the six largest disks}
Routines~(\sbstratref{strat:rectanglesSB-6ld}{1})--(\sbstratref{strat:rectanglesSB-6ld}{3}) consider different placements of the six largest disks and recursion to cover $\mathcal{R}$; see Fig.~\ref{fig:rectanglesSB-six-disks}.
\begin{figure}%
	\begin{center}
		\resizebox{.94\linewidth}{!}{\includegraphics{./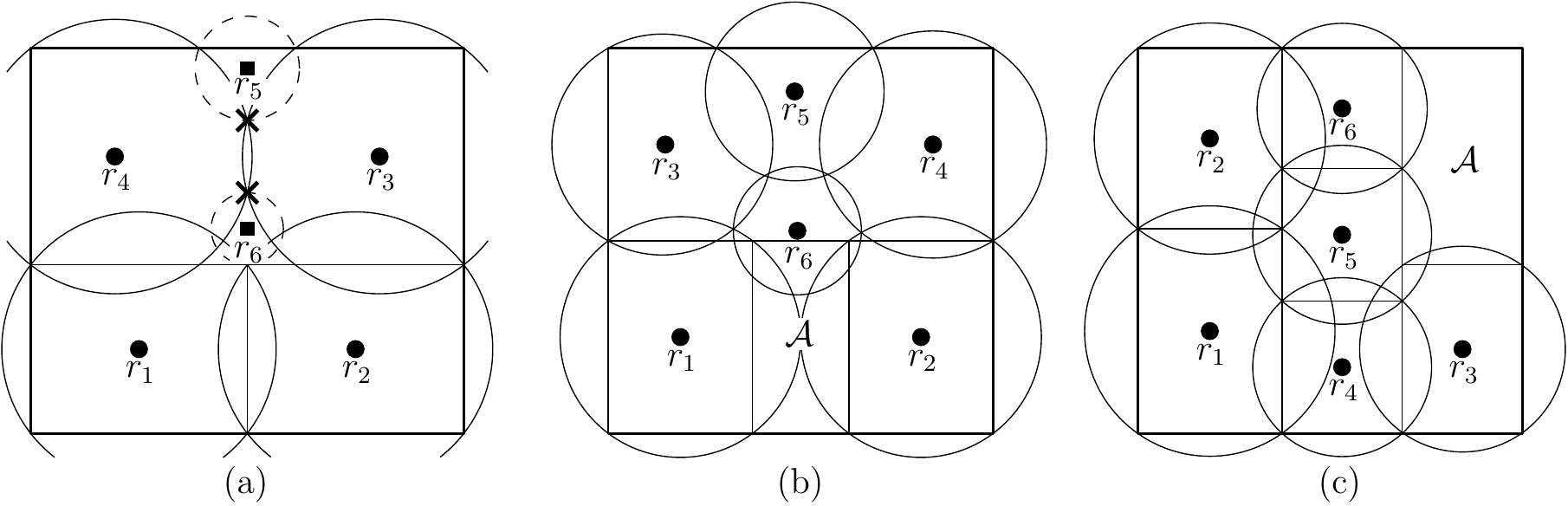}}
	\end{center}
	\caption{(a) Routine~\sbstratref{strat:rectanglesSB-6ld}{1} covers $\mathcal{R}$ using the six largest disks.
	(b) In Routine~\sbstratref{strat:rectanglesSB-6ld}{2}, we also use recursion on the remaining disks to cover an additional rectangular region $\mathcal{A}$.
	(c) In Routine~\sbstratref{strat:rectanglesSB-6ld}{3}, we cover two vertical strips using $r_1,r_2$ and $r_4,r_5,r_6$, using $r_3$ and recursion to cover the remaining strip.}
	\label{fig:rectanglesSB-six-disks}
\end{figure}%

\mparagraph{Using the seven largest disks} 
Routines~(\sbstratref{strat:rectanglesSB-7ld}{1})--(\sbstratref{strat:rectanglesSB-7ld}{8}) consider different placements of the seven largest disks, together with recursion, to cover $\mathcal{R}$; see Figs.~\ref{fig:rectanglesSB-six-disks-max-width},~\ref{fig:rectanglesSB-six_disk_222_with_recursion}~and~\ref{fig:rectanglesSB-seven-disk-strategies}.
\begin{figure}[t]
	\centering
	\resizebox{.95\linewidth}{!}{\includegraphics{./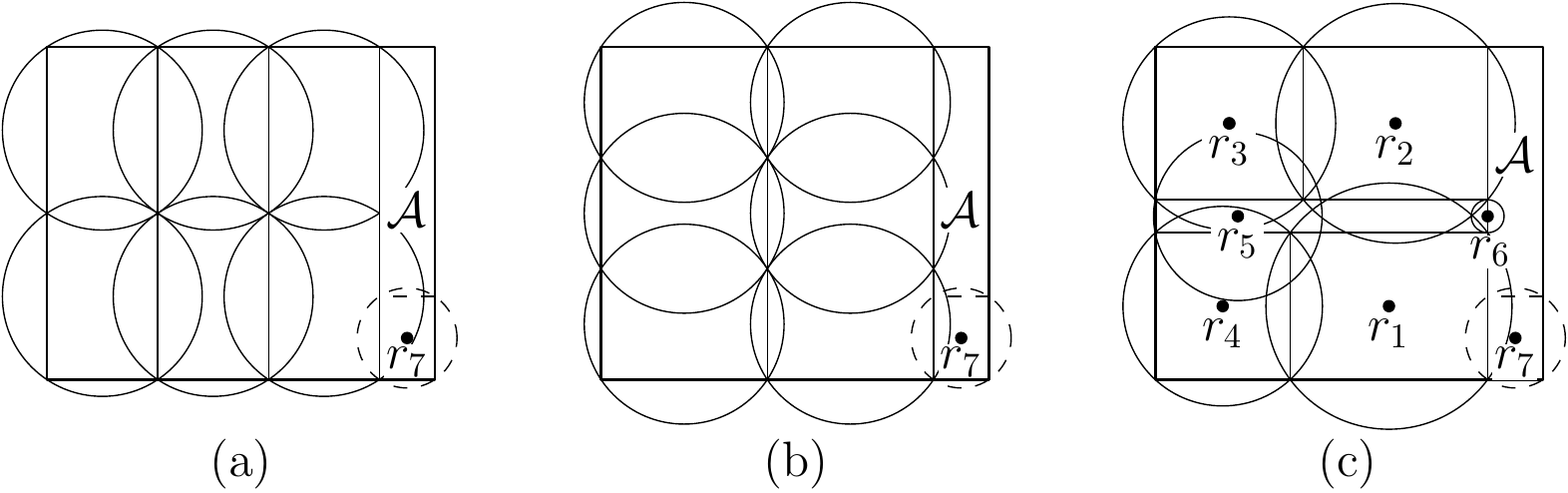}}
	\caption{Routine~\sbstratref{strat:rectanglesSB-7ld}{1} considers the following three configurations to cover a strip of maximal width $w$.
	(a) Using any partition of $r_1,\ldots,r_6$ into three groups of two disks, each covering a strip of height $1$ and maximal width,
	(b) using any partition of $r_1,\ldots,r_6$ into two groups of three disks, each covering a strip of height $1$ and maximal width, or
	(c) using the disks $r_1,r_4$ and $r_2,r_3$ to cover strips of width $w$ and maximal height and covering the uncovered pockets using $r_5$ and $r_6$.}
	\label{fig:rectanglesSB-six-disks-max-width}
\end{figure}%
\begin{figure}%
	\begin{center}
		\resizebox{.85\linewidth}{!}{\includegraphics{./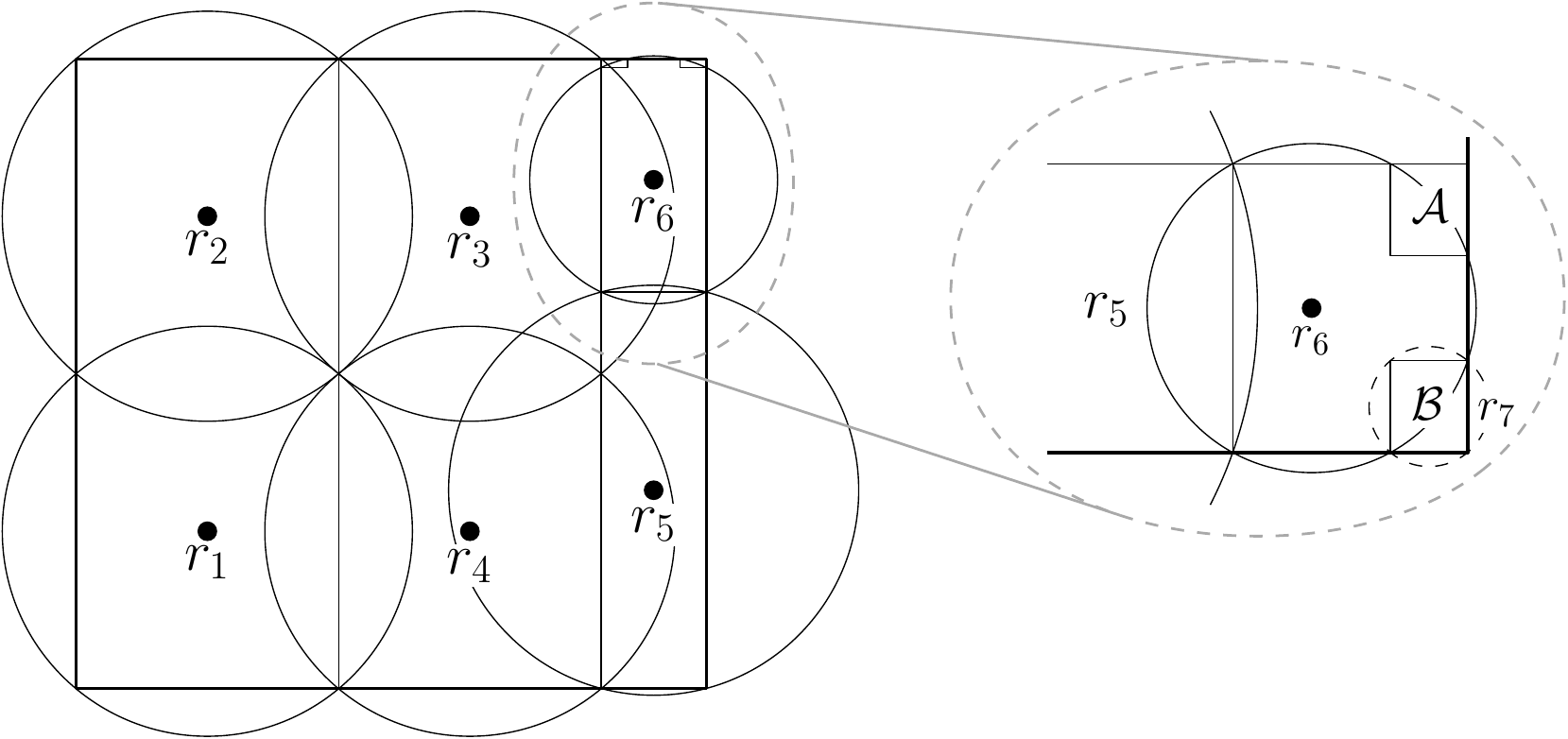}}
	\end{center}
	\caption{Routine~\sbstratref{strat:rectanglesSB-7ld}{2} covers as much width as possible using disks $r_1,\ldots,r_4$, using $r_5,r_6$ and $r_7$ on the remaining strip.}
	\label{fig:rectanglesSB-six_disk_222_with_recursion}
\end{figure}%
\begin{figure}%
	\centering
	\includegraphics[width=.99\linewidth]{./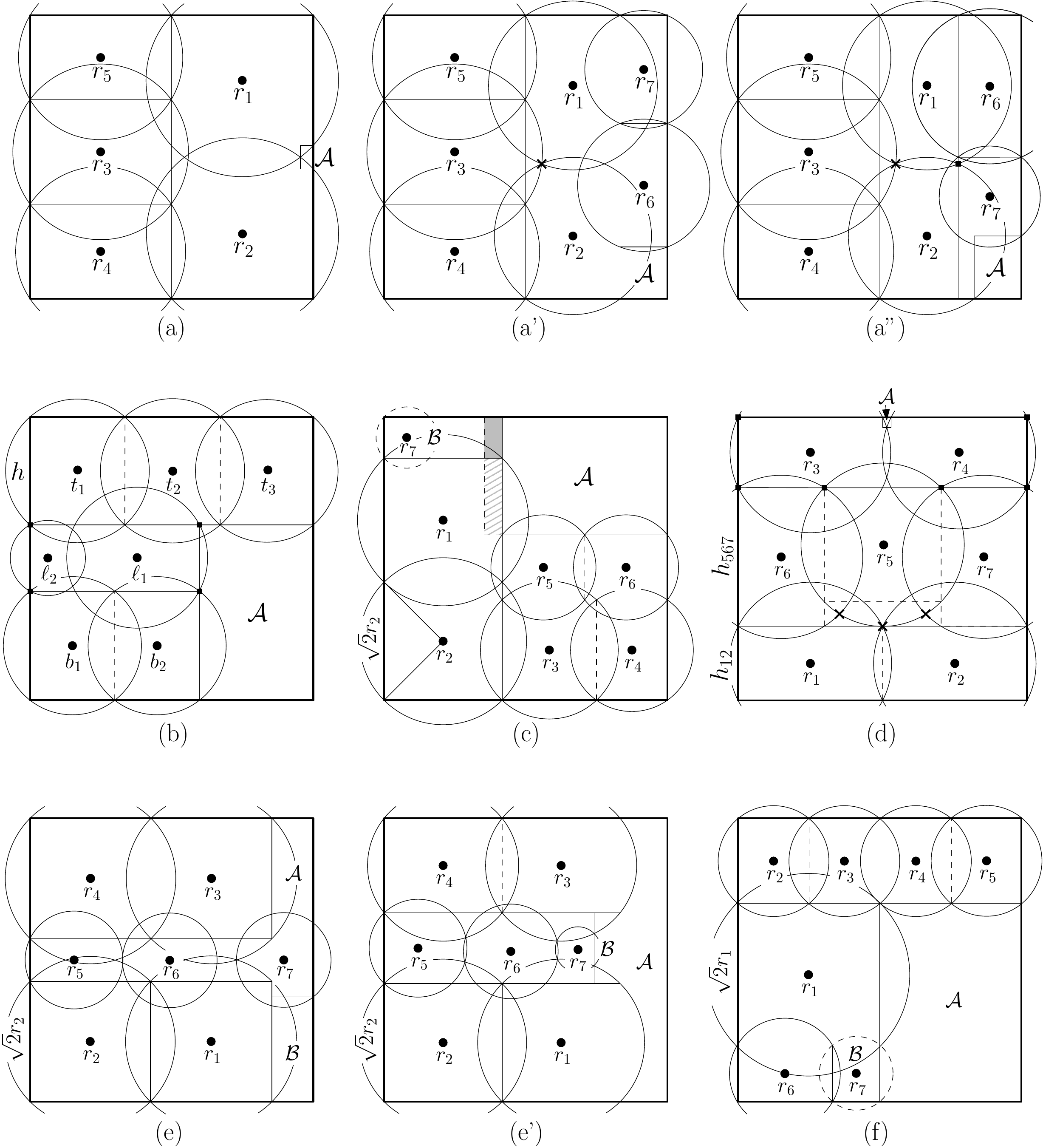}
	\caption{Routines~\sbstratref{strat:rectanglesSB-7ld}{3}--\sbstratref{strat:rectanglesSB-7ld}{8} using disks $r_1,\ldots,r_7$ and recursion to cover $\mathcal{R}$.}
	\label{fig:rectanglesSB-seven-disk-strategies}
\end{figure}%

\FloatBarrier
\subsection{Proof Structure for Theorem~\ref{thm:mainRectangles}}
Tightness of the result claimed by Theorem~\ref{thm:mainRectangles} is proved by Lemma~\ref{lem:worst-cases-rectangles}.
Therefore, proving Theorem~\ref{thm:mainRectangles} means proving that, for any skew $\lambda$, any collection of disks $D$ with $A(D) = A^*(\lambda)$ suffices to cover $\mathcal{R}$.
As in the proof of Lemma~\ref{lem:rectanglesSB}, we begin by reducing the number of cases we have to consider to a finite number.
Again, we begin by proving our result for all rectangles with sufficiently large skew.
\begin{restatable}{lemma}{lemmathmlargeskew}\label{lem:main-large-skew}
	Let $\lambda \geq \overline{\lambda}$ and let $D$ be a collection of disks with $W(D) = W^*(\lambda)$.
	We can cover $\mathcal{R}$ using the disks from $D$.
\end{restatable}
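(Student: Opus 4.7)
The plan is to perform a case split on the size of the largest disk $r_1$ using the threshold $\hat{\sigma}$ from Lemma~\ref{lem:size-bound-large}, and to handle the ``hard'' case by an inductive recursion on the number of disks. In the easy case $r_1^2 \leq \hat{\sigma}$, Lemma~\ref{lem:size-bound-large} applies directly to the entire configuration. In the hard case $r_1^2 > \hat{\sigma}$, the disk $r_1$ is large enough to cover a full-height slab on its own, which we carve off before recursing on the remaining strip via the inductive hypothesis of Theorem~\ref{thm:mainRectangles}.

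For the easy case, the key observation is that the covering coefficient $E^*(\lambda) = (\lambda^2 + 2)/(4\lambda)$ is monotonically increasing for $\lambda \geq \sqrt{2}$, and by the very definition of $\overline{\lambda}$ it satisfies $E^*(\overline{\lambda}) = 195/256 = E(\hat{\sigma})$. Hence $W(D) = W^*(\lambda) = E^*(\lambda)\lambda \geq E(\hat{\sigma})\lambda$ for all $\lambda \geq \overline{\lambda}$, which is precisely the precondition of Lemma~\ref{lem:size-bound-large}. In the hard case, $\hat{\sigma} > 1/4$ forces $r_1 > 1/2$, so $r_1$ covers an inscribed rectangle of dimensions $a \times 1$ with $a \coloneqq \sqrt{4r_1^2 - 1} > 1$. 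If $a \geq \lambda$ then $r_1$ already covers $\mathcal{R}$; otherwise, place $r_1$ centered on the left edge of $\mathcal{R}$, covering the slab $[0, a] \times [0,1]$, and recurse via Theorem~\ref{thm:mainRectangles} on the remaining $(\lambda - a) \times 1$ rectangle (rotating and scaling by $1/(\lambda - a)$ to reach canonical form when $\lambda - a < 1$).

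The principal technical hurdle is certifying the weight condition for the recursive call. A direct computation using $r_1^2 = (a^2 + 1)/4$ produces the slack $(2a(\lambda-a) - 1)/4$ in the case $\lambda - a \geq \lambda_2$, which is nonnegative whenever $a, \lambda - a \geq 1$; and the slack $(\lambda - a)(3a - \lambda)/4$ in the case $\lambda - a < 1$, which is positive since $a > \lambda - 1 \geq \lambda/3$ holds for $\lambda \geq 3/2$ (in particular for $\lambda \geq \overline{\lambda}$). A minor residual subtlety is the narrow parameter band in which the recursive skew falls into $[1, \lambda_2)$ and $W^*$ switches to the cubic-like formula $3(\lambda'^2/16 + 5/32 + 9/(256\lambda'^2))$; there, the extra weight demanded is at most $3/256$ per unit of scaled area, which is comfortably dominated by the slack already established.
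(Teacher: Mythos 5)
Your overall strategy closely tracks the paper's. The paper also splits on the size of $r_1$, and its \textsc{Large Disk} routine for the big-disk case is exactly your second branch: place $r_1$ covering a full-height strip of maximal width $a=\sqrt{4r_1^2-1}$ at the left side and recurse on the $(\lambda-a)\times 1$ remainder via Theorem~\ref{thm:mainRectangles} with one fewer disk. Your slack computations $\tfrac{2a(\lambda-a)-1}{4}$ and $\tfrac{(\lambda-a)(3a-\lambda)}{4}$ are correct, and your treatment of the band where the recursive skew falls into $[1,\lambda_2)$ (an additive overhead of at most $3/256$, dominated by slack of order $0.5$) is sound; the paper instead cases on the remaining width relative to $\overline{\lambda}$ and uses the uniform bound $E^*\le\frac{195}{256}$ there, which comes to the same thing. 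The real divergence is the small-disk case: the paper uses the threshold $\overline{r}\approx 0.6586$ and handles $r_1\le\overline{r}$ by running \textsc{Greedy Splitting} inline, verifying directly that both halves keep skew at most $\lambda$; you use the threshold $\hat{\sigma}$ and delegate the entire small case to Lemma~\ref{lem:size-bound-large} applied to all of $D$.

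That delegation is the one genuine gap. The paper's simultaneous induction is deliberately arranged so that, inside the proof of Theorem~\ref{thm:mainRectangles} for $n$ disks, Lemma~\ref{lem:size-bound-large} is invoked only for \emph{fewer} than $n$ disks, because the proof of Lemma~\ref{lem:size-bound-large} for $n$ disks calls Theorem~\ref{thm:mainRectangles} for $n$ disks whenever $\lambda\le\Lambda(E)$. With $\sigma=\hat{\sigma}$ one has $\Lambda(E(\hat{\sigma}))=\overline{\lambda}$, so at $\lambda=\overline{\lambda}$ — a value your lemma must cover — the argument you cite reduces to ``apply Theorem~\ref{thm:mainRectangles} to $D$,'' i.e., to the very statement being proved: the reasoning is circular there. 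For $\lambda>\overline{\lambda}$ the cycle can be unrolled (Lemma~\ref{lem:size-bound-large} then splits $D$ greedily into two nonempty groups, so every downstream appeal to the theorem involves strictly fewer disks), but that observation must be made explicitly; citing the lemma as a black box places a cycle in the dependency graph. The cheapest repair is to perform the \textsc{Greedy Splitting} step directly, as the paper's \textsc{Split Cover} does: with $r_1^2\le\hat{\sigma}$ and $W(D)=W^*(\lambda)$ the split yields two nonempty groups whose associated sub-rectangles have width at least $\tfrac{\lambda}{2}-\tfrac{\hat{\sigma}}{2E^*(\lambda)}\ge\tfrac{1}{\overline{\lambda}}\ge\tfrac{1}{\lambda}$, hence skew at most $\lambda$, and one recurses via Theorem~\ref{thm:mainRectangles} on fewer than $n$ disks.
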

\ifthenelse{\boolean{appproofmainthm}}%
{%
	Due to space restrictions, for the full proof we refer to the full version~\cite{fullversion} of our paper.
}{%
	This lemma is proved in Section~\ref{sec:proof-lemma-thm-large-skew}.
}%
The proof is manual and uses the two simple routines \textsc{Split Cover} (\wcstratref{strat:main-clr}{1}) and \textsc{Large Disk} (\wcstratref{strat:main-clr}{2});
see Fig.~\ref{fig:mainRectangles-large-skew}.%
\begin{figure}[b]
	\centering
	\vspace{-0.5cm}
	\includegraphics[width=.75\linewidth]{./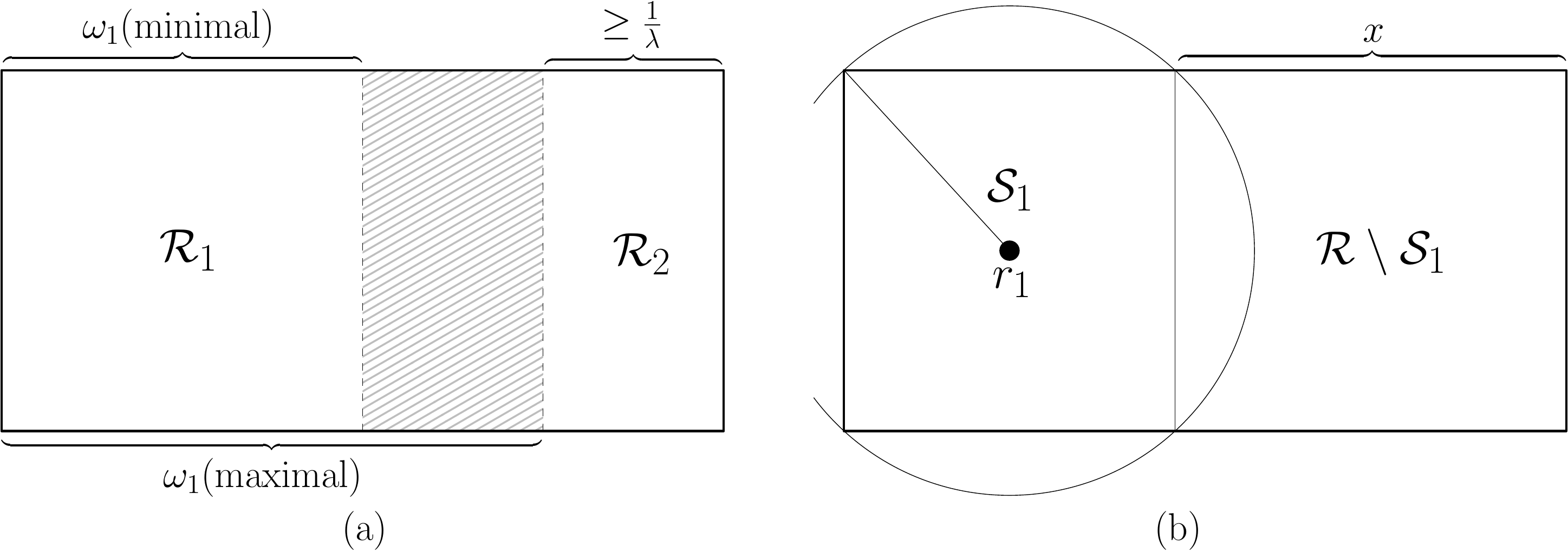}
	\caption{
	(a) The routine \textsc{Split Cover} (\wcstratref{strat:main-clr}{1}) applies \textsc{Greedy Splitting} to the input disks, splits $\mathcal{R}$ into $\mathcal{R}_1,\mathcal{R}_2$ according to the split and recurses.
	The resulting split must not be too unbalanced for this routine to succeed.
	(b) The routine \textsc{Large Disk} (\wcstratref{strat:main-clr}{2}) places $r_1$ covering a rectangle $\mathcal{S}_1$ at the right border of $\mathcal{R}$ and recurses on the remaining rectangle.}
	\label{fig:mainRectangles-large-skew}
\end{figure}
Intuitively speaking, if $r_1$ is small, we split $D$ using \textsc{Greedy Splitting}, split $\mathcal{R}$ accordingly, and recurse on the two resulting regions.
On the other hand, if $r_1$ is big, we cover the left side of $\mathcal{R}$ using $r_1$ and recurse on the remaining region.

The remainder of the proof of Theorem~\ref{thm:mainRectangles} is again based on a list of simple covering routines, which our algorithm tries to apply until it finds a working routine.
We prove that there always is a working routine in the list using an automatic prover based on interval arithmetic as described in Section~\ref{sec:interval-arithmetic}.
After automatic analysis, several critical cases remain.
\ifthenelse{\boolean{appproofmainthm}}%
{%
	We complete our proof by manually analyzing these critical cases.
}{%
	In Section~\ref{sec:manual-analysis}, we perform a manual analysis of these critical cases in order to complete our proof.
}%
In the following, we give a brief description of the routines we use.
\ifthenelse{\boolean{appproofmainthm}}{%
	Due to space constraints, for details, we refer to the full version of our paper~\cite{fullversion}.%
	\wcnewsubsec{strat:main-clr}%
	\wcnewsubsec{strat:main-hsd}%
	\wcnewsubsec{strat:main-cld}%
	\wcnewsubsec{strat:main-2ld}%
	\wcnewsubsec{strat:main-3ld}%
	\wcnewsubsec{strat:main-4ld}%
}{%
	The routines are described in detail in Section~\ref{sec:proof-mainthm}.
}%

\mparagraph{Small disks} Because the covering coefficient guaranteed by Lemma~\ref{lem:rectanglesSB} is always better than $E^*(\lambda)$, Routine~(\wcstratref{strat:main-hsd}{1}) attempts to apply Lemma~\ref{lem:rectanglesSB} directly; this works if the largest disk is not too big.

\mparagraph{Using the largest disk} Routines~(\wcstratref{strat:main-cld}{1})--(\wcstratref{strat:main-cld}{3}) try several placements for the largest disk $r_1$; see Fig.~\ref{fig:mainRectangles-largest-disk}.
\begin{figure}
	\begin{center}
		\resizebox{.98\textwidth}{!}{\includegraphics{./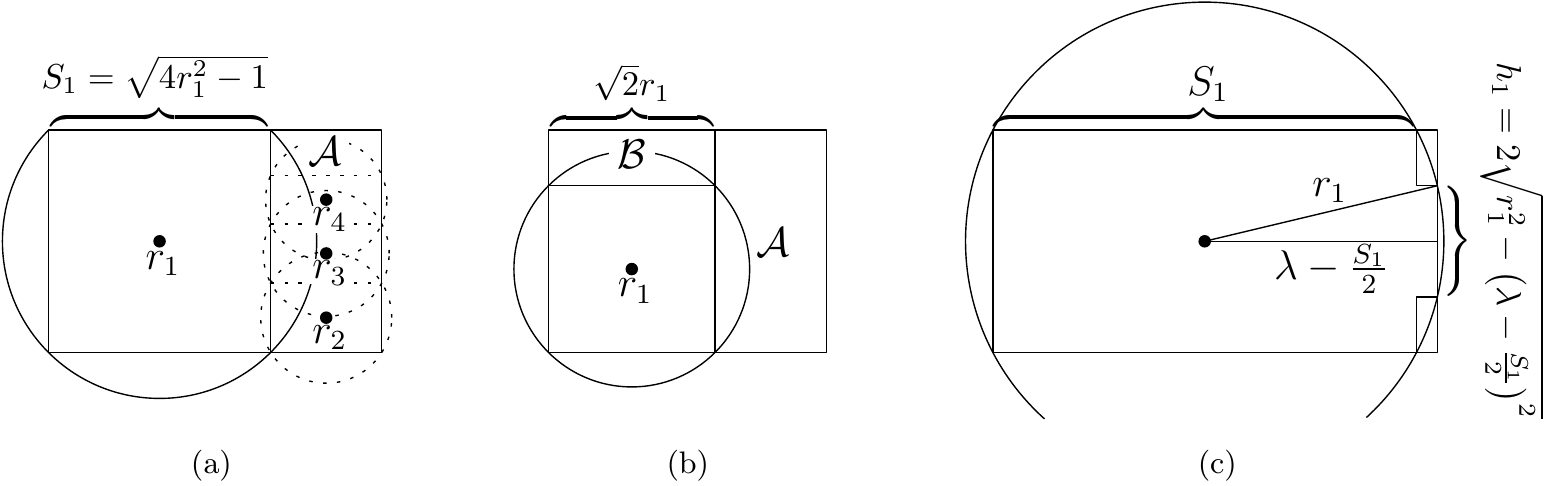}}
	\end{center}
	\caption{(a) In Routine~\wcstratref{strat:main-cld}{1}, we place $r_1$ covering a strip at the left side of $\mathcal{R}$ and try to recurse on $\mathcal{A}$.
	If that does not work, we also try to place $r_2,r_3$ and potentially $r_4$ covering horizontal strips at the bottom of the remaining rectangle before we try recursing.
	(b) In Routine~\wcstratref{strat:main-cld}{2}, we place $r_1$ covering its inscribed square at the bottom-left corner of $\mathcal{R}$, covering the two remaining regions $\mathcal{A},\mathcal{B}$ recursively.
	(c) In Routine~\wcstratref{strat:main-cld}{3}, we place $r_1$ covering a strip at the left side of $\mathcal{R}$; if placed like this, $r_1$ intersects the right border of $\mathcal{R}$, only leaving two small uncovered pockets.}
	\label{fig:mainRectangles-largest-disk}
\end{figure}

\mparagraph{Using the two largest disks} 
Routines~(\wcstratref{strat:main-2ld}{1})~and~(\wcstratref{strat:main-2ld}{2}) try several placements for the largest two disks $r_1,r_2$; see Fig.~\ref{fig:mainRectangles-two-disks}.
\begin{figure}
	\begin{center}
		\resizebox{.99\linewidth}{!}{\includegraphics{./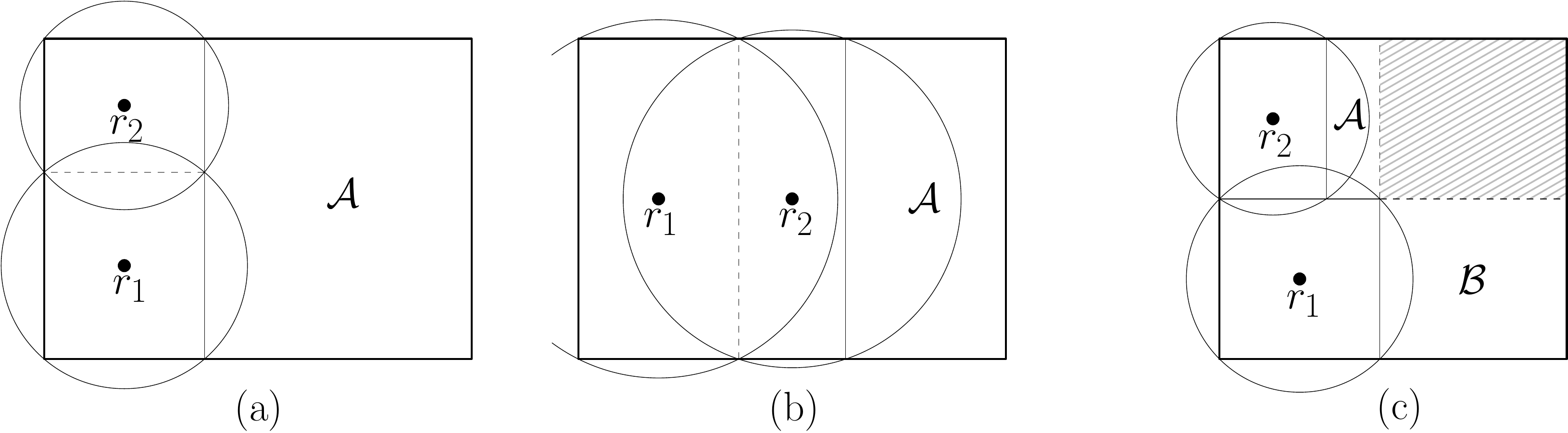}}
	\end{center}
	\caption{(a) and (b) depict Routine~\wcstratref{strat:main-2ld}{1}. The two largest disks are used to cover as wide a strip as possible at the left side of $\mathcal{R}$; the remaining disks are used for recursion on $\mathcal{A}$.
	(c)~Routine~\wcstratref{strat:main-2ld}{2} places $r_1$ covering its inscribed square and covers the remaining part of $\mathcal{R}$'s left boundary using $r_2$. Two regions $\mathcal{A}$ and $\mathcal{B}$ remain.
	The shaded area can be added to either $\mathcal{A}$ or $\mathcal{B}$; we try both options.}
	\label{fig:mainRectangles-two-disks}
\end{figure}

\mparagraph{Using the three largest disks}
Routines~(\wcstratref{strat:main-3ld}{1})--(\wcstratref{strat:main-3ld}{5}) consider several placements for the largest three disks; see Figs.~\ref{fig:strategy_cover_with_three_disks},~\ref{fig:mainRectangles-three-disks},~\ref{fig:strategy_three_disk_recursion_small_corner},~and~\ref{fig:strategy_three_disk_recursion_r1_r3_pocket}.
\begin{figure}
	\begin{center}
		\resizebox{.5\linewidth}{!}{\includegraphics{./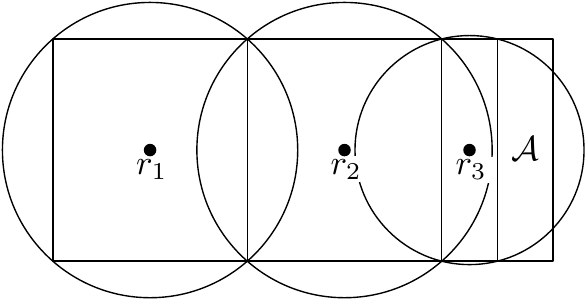}}
	\end{center}
	\caption{
		Routine~\wcstratref{strat:main-3ld}{1} places the three largest disks next to each other, each covering a vertical strip of height $1$.
		If this does not cover the entire rectangle, we recurse on the bounding box $\mathcal{A}$ of the remaining area.
	}
	\label{fig:strategy_cover_with_three_disks}
\end{figure}
\begin{figure}
	\begin{center}
		\resizebox{.9\linewidth}{!}{\includegraphics{./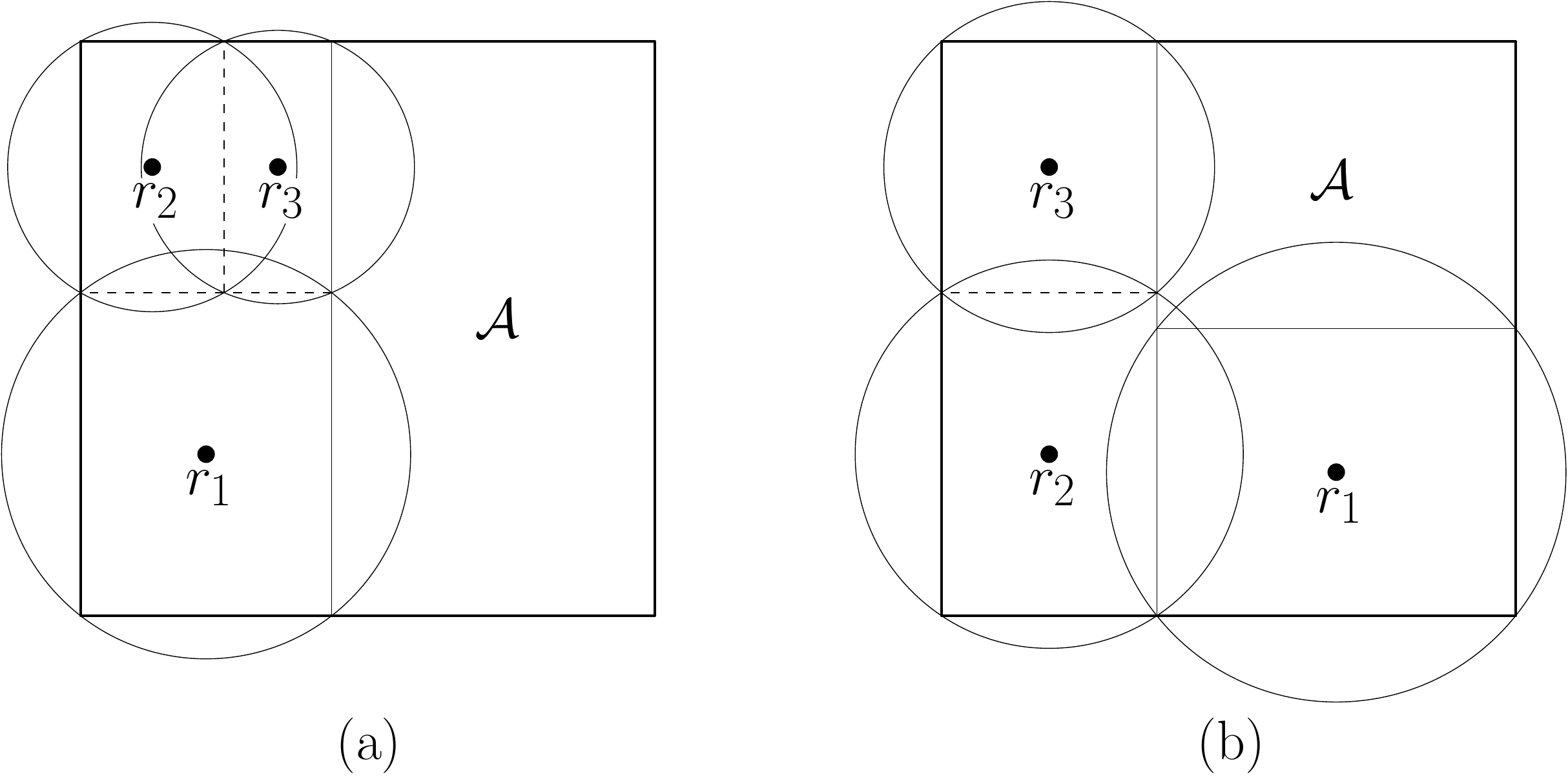}}
	\end{center}
	\caption{
		(a) Routine~\wcstratref{strat:main-3ld}{2} builds a strip of maximum possible width by placing $r_1$ at the bottom and $r_2$ besides $r_3$ on top.
		(b) Routine~\wcstratref{strat:main-3ld}{3} builds a vertical strip of maximum possible width by placing $r_2,r_3$ on top of each other, and covers the remaining part of the lower boundary using $r_1$.}
	\label{fig:mainRectangles-three-disks}
\end{figure}
\begin{figure}
	\begin{center}
		\resizebox{.95\linewidth}{!}{\includegraphics{./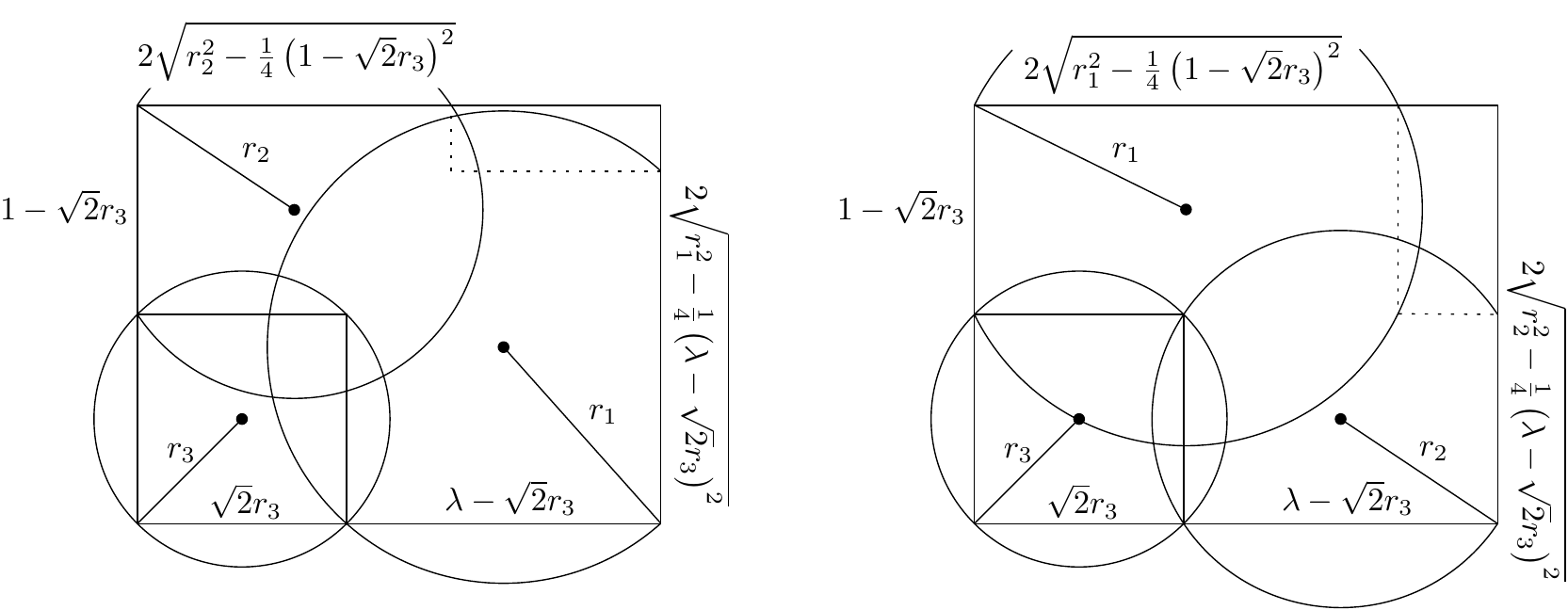}}
	\end{center}
	\caption{
		Routine~\wcstratref{strat:main-3ld}{4} covers the rectangle using the third-largest disk to cover a square at the bottom-left corner.
		The remaining rectangle that we recurse on is drawn with dashed outline.
		{\bf Left: }Placing the largest disk to the right of the third-largest disk and the second-largest disk on top of the third-largest disk.
		{\bf Right: }Placing the largest disk on top of the third-largest disk and the second-largest disk to the right of the third-largest disk.
	}
	\label{fig:strategy_three_disk_recursion_small_corner}
\end{figure}
\begin{figure}
	\begin{center}
		\resizebox{.95\linewidth}{!}{\includegraphics{./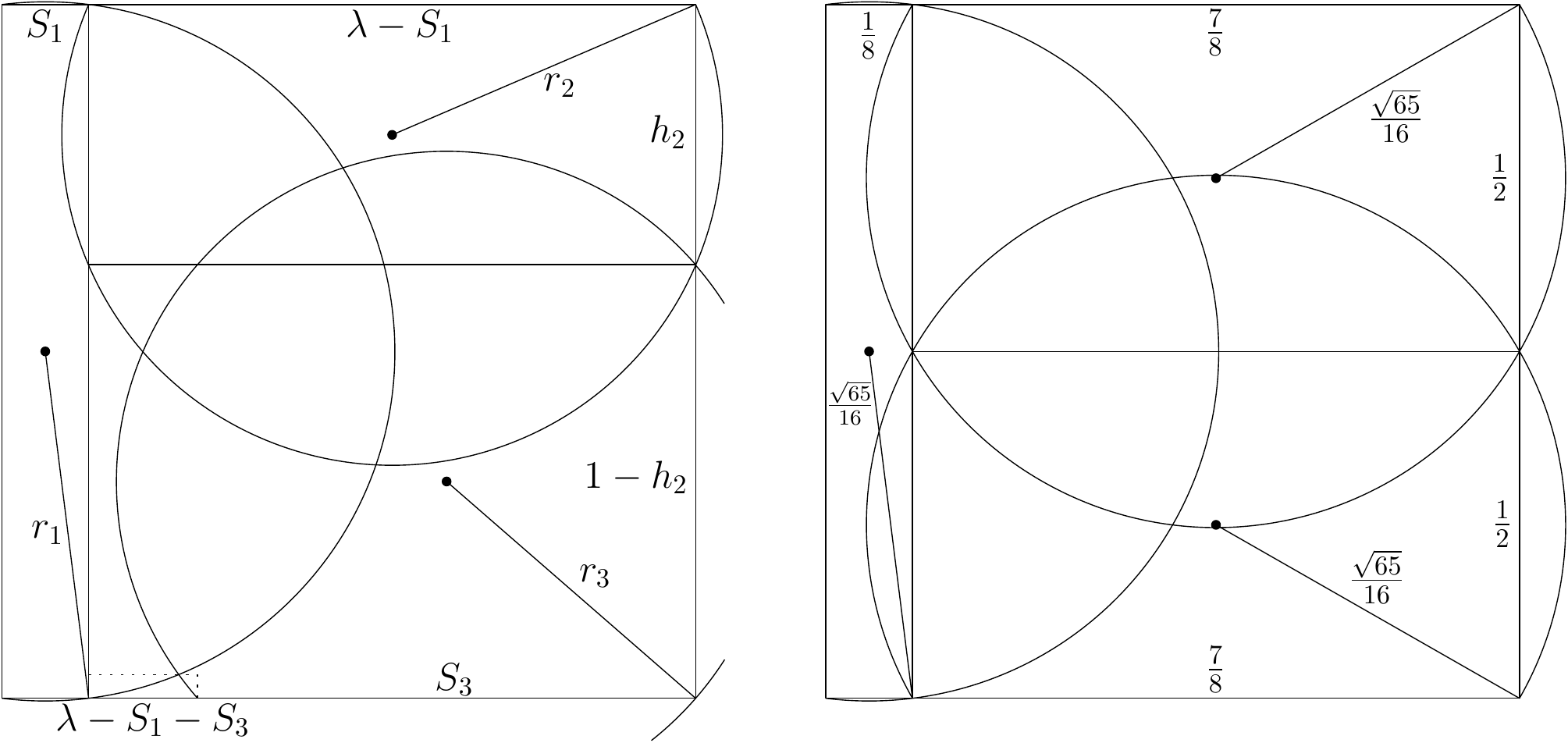}}
	\end{center}
	\caption{
		{\bf Left:} Routine~\wcstratref{strat:main-3ld}{5} covers the rectangle using the largest disk to cover a strip of width $S_1$, using the second- and third-largest disks to cover the remaining corners.
		The bounding box of the uncovered pocket between the largest and third-largest disk is drawn with dashed outline.
		{\bf Right:} The worst-case example for a square, consisting of three equal disks with radius $\frac{\sqrt{65}}{16}$.
		The covering of Routine~\wcstratref{strat:main-3ld}{5} converges to this covering for disks converging to this worst-case example.
	}
	\label{fig:strategy_three_disk_recursion_r1_r3_pocket}
\end{figure}

\mparagraph{Using the four largest disks}
Routines~(\wcstratref{strat:main-4ld}{1})--(\wcstratref{strat:main-4ld}{3}) consider several placements for the largest four disks; see Fig.~\ref{fig:mainRectangles-four-disks}.
\begin{figure}
	\begin{center}
		\resizebox{.99\linewidth}{!}{\includegraphics{./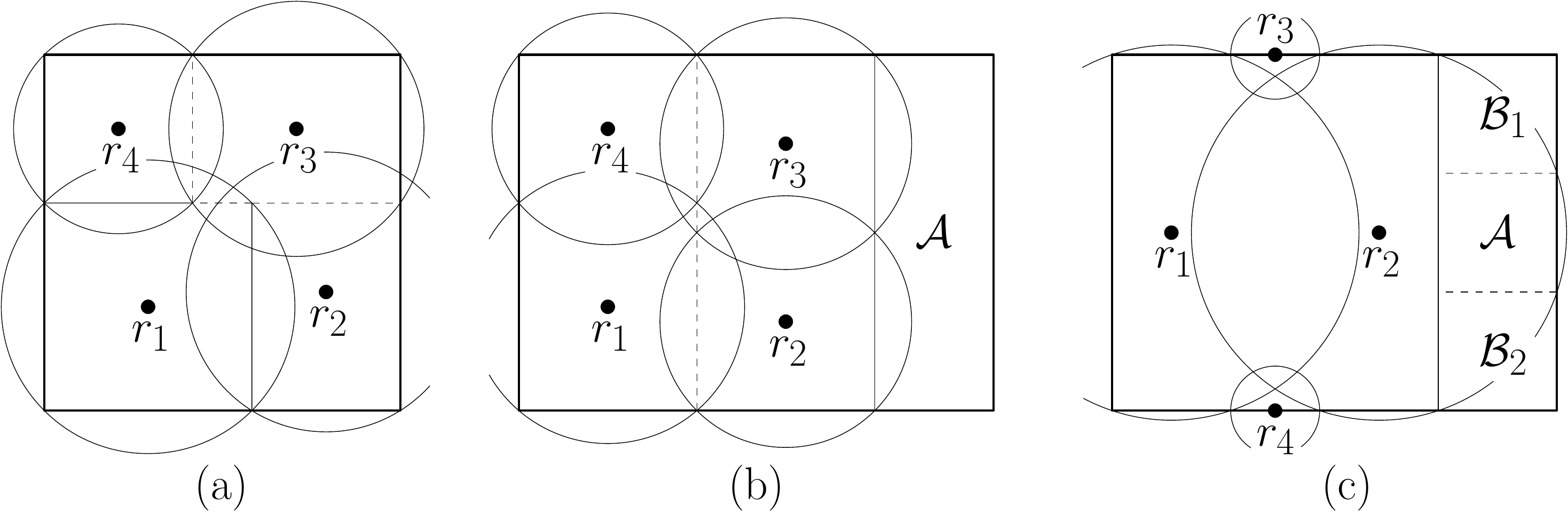}}
	\end{center}
	\caption{(a)~Routine~\wcstratref{strat:main-4ld}{1} covers $\mathcal{R}$ using only the four largest disks. The dashed outline depicts the rectangle that $r_3$ has to be able to cover.
	(b)~Routine~\wcstratref{strat:main-4ld}{2} covers a strip of maximum possible width using two groups of two disks and recurses on the remaining rectangle $\mathcal{A}$.
	(c)~Routine~\wcstratref{strat:main-4ld}{3} covers $\mathcal{R}$ by placing the two largest disks besides each other, filling the gaps between the disks using $r_3,r_4$.
	If this does not cover $\mathcal{R}$, we either recurse on the remaining strip $\mathcal{A}$ or on the bounding box of two pockets $\mathcal{B}_1,\mathcal{B}_2$ if $r_2$ intersects $\mathcal{R}$'s right border.}
	\label{fig:mainRectangles-four-disks}
\end{figure}

\def\proofdetailsmaintheorem{%
\subsection{Covering Without Weight Bound --- Proof of Theorem~\ref{thm:mainRectangles}}\label{sec:proof-mainthm}
It remains to prove Theorem~\ref{thm:mainRectangles}.
Similar to the proof of Lemma~\ref{lem:rectanglesSB}, the proof is based on an algorithm that tries to apply a sequence of simple routines until it finds a working one.
As input, the algorithm receives a rectangle $\mathcal{R}$ of dimensions $\lambda \times 1$ and a collection $D$ of disks $r_1 \geq \cdots \geq r_n$ with $W(D) \coloneqq \sum_{i=1}^{n} r_i^2 \geq W^*(\lambda)$; w.l.o.g., we assume $W(D) = W^*(\lambda)$.
If $D$ contains only one disk $r_1$, that disk has greater weight than $\mathcal{R}$'s circumcircle, and we can use it to cover $\mathcal{R}$ completely.
Therefore, in the following, we may assume that we are given at least two disks.
Recall that $E^*(\lambda) \coloneqq \frac{W^*(\lambda)}{\lambda}$ is the covering coefficient the algorithm has to achieve.

\subsubsection{Covering Long Rectangles}\label{sec:proof-lemma-thm-large-skew}\wcnewsubsec{strat:main-clr}
In this section, we prove Lemma~\ref{lem:main-large-skew}, i.e., the result of Theorem~\ref{thm:mainRectangles} for large skew $\lambda \geq \overline{\lambda} \approx 2.08988$.
\restatethm{\lemmathmlargeskew*}{lem:main-large-skew}

In this situation, we have $W^*(\lambda) = \frac{\lambda^2 + 2}{4}$, corresponding to the weight $\frac{\lambda^2+1}{4}$ of the circumcircle of $\mathcal{R}$ and another circle of radius $\frac{1}{2}$ that suffices to cover one of $\mathcal{R}$'s shorter sides; see Fig.~\ref{fig:worst-cases-rectangles}.
In this case, our algorithm uses two simple routines to cover $\mathcal{R}$; see Fig.~\ref{fig:mainRectangles-large-skew}.

\pwcstratref{strat:main-clr}{1} If $r_1 \leq \overline{r} \coloneqq \sqrt{\frac{195\overline{\lambda}}{128} - \frac{11}{4}} \approx 0.6586$, we apply the routine \textsc{Split Cover}.
This routine works by applying \textsc{Greedy Splitting} to $D$, which results in two non-empty collections $D_1,D_2$.
We partition $\mathcal{R}$ into two rectangles $\mathcal{R}_1,\mathcal{R}_2$ such that $\frac{W(D_1)}{W(D_2)} = \frac{\abs*{\mathcal{R}_1}}{\abs*{\mathcal{R}_2}}$ by dividing the longer side of $\mathcal{R}$ in that ratio.
We then recursively cover $\mathcal{R}_1$ and $\mathcal{R}_2$.

\pwcstratref{strat:main-clr}{2} Otherwise, if $r_1 > \overline{r}$, we apply the routine \textsc{Large Disk}.
It works by placing $r_1$ covering a vertical rectangular strip $\mathcal{S}_1$ of height $1$ and maximum possible width at the left border of $\mathcal{R}$.
After placing $r_1$ in this manner, we recurse on the remaining $\mathcal{R} \setminus \mathcal{S}_1$ using all remaining disks.
We prove the following two lemmas stating that these routines suffice to cover $\mathcal{R}$.
\begin{lemma}\label{lem:mainRectangles-split-cover}
	Let $\lambda \geq \overline{\lambda}$ and $r_1 \leq \overline{r}$.
	Then, \textsc{Split Cover} can be used to cover $\mathcal{R}$ completely.
\end{lemma}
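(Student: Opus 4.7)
The plan is to verify that both recursive calls issued by \textsc{Split Cover} satisfy the premise of Theorem~\ref{thm:mainRectangles}; since the proof of that theorem proceeds by induction on the number of disks and each sub-call receives a strictly smaller, non-empty subset of $D$, this closes the induction step under the present hypotheses. We may assume $n\geq 2$, since otherwise $r_1$ alone has weight $\geq W^*(\lambda)\geq \tfrac{\lambda^2+1}{4}$, the weight of $\mathcal{R}$'s circumcircle, and covers $\mathcal{R}$ by itself.

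I first unpack the output of \textsc{Greedy Splitting}: two non-empty parts $D_1, D_2$ with, w.l.o.g., $W(D_1)\geq W(D_2)$ and $W(D_1)-W(D_2)\leq r_1^2 \leq \overline{r}^2$. Setting $\mu_i\coloneqq W(D_i)/W(D)$, the routine cuts the longer side of $\mathcal{R}$ in ratio $\mu_1:\mu_2$, so $\mathcal{R}_i$ has dimensions $\mu_i\lambda\times 1$. The covering coefficient of each sub-instance equals that of the original, $W(D_i)/(\mu_i\lambda)=E^*(\lambda)$, so the premise $W(D_i)\geq W^*(\mathcal{R}_i)$ required for recursion becomes $E^*(\lambda)\geq E^*(\sigma_i)$, where $\sigma_i\coloneqq \max(\mu_i\lambda, 1/(\mu_i\lambda))\geq 1$ is the skew of $\mathcal{R}_i$.

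Next I invoke the unimodal shape of $E^*$ visible in Fig.~\ref{fig:critical-covering-coeff}: $E^*$ decreases on $[1,\sqrt{2}]$ and increases on $[\sqrt{2},\infty)$, with $E^*(1)=E^*(\overline{\lambda})=195/256$. Combined with $\lambda\geq\overline{\lambda}$, this yields $E^*(\sigma)\leq E^*(\lambda)$ for every $\sigma\in[1,\lambda]$. It thus suffices to show $\sigma_i\leq\lambda$ for both $i$, equivalently $\mu_i\lambda\in[1/\lambda,\lambda]$, i.e., $\mu_i\geq 1/\lambda^2$; the upper bound $\mu_i\leq 1$ is trivial, and the larger part clearly satisfies $\mu_1\geq 1/2 \geq 1/\lambda^2$ since $\lambda\geq\overline{\lambda}>\sqrt{2}$.

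The only non-trivial inequality is $\mu_2\geq 1/\lambda^2$, which is where the specific bound $r_1\leq\overline{r}$ enters and constitutes the main obstacle of the proof. Using the greedy bound $\mu_2\geq 1/2 - r_1^2/(2W(D))$ together with $W(D)=(\lambda^2+2)/4$ (valid since $\lambda\geq\overline{\lambda}>\lambda_2$), the inequality simplifies algebraically to the polynomial bound $r_1^2\leq (\lambda^4-4)/(4\lambda^2)=\lambda^2/4 - 1/\lambda^2$. The right-hand side is strictly increasing in $\lambda>0$ and is therefore minimized at $\lambda=\overline{\lambda}$; a direct numerical check using $\overline{r}^2=195\overline{\lambda}/128-11/4$ together with the defining relation $\overline{\lambda}^2+2=195\overline{\lambda}/64$ confirms the inequality with slack. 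With both sub-rectangles satisfying the premise, Theorem~\ref{thm:mainRectangles} applies inductively to each, and concatenating the two sub-coverings covers $\mathcal{R}$.
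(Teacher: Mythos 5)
Your proof is correct and follows essentially the same route as the paper's: reduce to showing the narrower sub-rectangle has skew at most $\lambda$ (using that $E^*(\lambda') \leq E^*(\lambda)$ for all $\lambda' \leq \lambda$ when $\lambda \geq \overline{\lambda}$), bound $W(D_2) \geq \tfrac{W(D)-r_1^2}{2}$ via \textsc{Greedy Splitting}, and conclude by a monotonicity-plus-endpoint check at $\lambda = \overline{\lambda}$. The only cosmetic difference is that you isolate the condition as $r_1^2 \leq \tfrac{\lambda^2}{4} - \tfrac{1}{\lambda^2}$ (arguably a cleaner monotone function) whereas the paper checks $\hat{\omega}_2 - \tfrac{1}{\lambda} > 0$ directly; both yield the same conclusion with the same slack.
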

\begin{lemma}\label{lem:mainRectangles-large-disk}
	Let $\lambda \geq \overline{\lambda}$ and $r_1 \geq \overline{r}$.
	Then, \textsc{Large Disk} can be used to cover $\mathcal{R}$ completely.
\end{lemma}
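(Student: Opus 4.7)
The plan is to follow \textsc{Large Disk} and reduce the remaining problem to a smaller instance of Theorem~\ref{thm:mainRectangles}, to which I appeal by induction on the number of disks. Since $r_1 \geq \overline{r} > 1/2$, placing $r_1$ centered vertically at the left of $\mathcal{R}$ covers a strip $\mathcal{S}_1$ of height $1$ and width $w_1 \coloneqq 2\sqrt{r_1^2 - 1/4} > 0$. If $w_1 \geq \lambda$ then $r_1$ already covers all of $\mathcal{R}$; otherwise I set $\lambda' \coloneqq \lambda - w_1 \in (0,\,\lambda - w_{\min}]$, with $w_{\min} \coloneqq 2\sqrt{\overline{r}^2 - 1/4}$, so that $\mathcal{R}\setminus\mathcal{S}_1$ is a rectangle of dimensions $\lambda' \times 1$ to be covered recursively.

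The central computation is that the total weight of the remaining disks equals
\[ W^*(\lambda) - r_1^2 \;=\; \frac{\lambda^2 - w_1^2 + 1}{4} \;=\; \frac{\lambda'(2\lambda - \lambda') + 1}{4}. \]
For the inductive application of Theorem~\ref{thm:mainRectangles} to $\mathcal{R}\setminus\mathcal{S}_1$ to succeed, it suffices to show that this is at least the critical covering weight $W^*_{\text{rem}}(\lambda')$ of a $\lambda' \times 1$ rectangle, which by scaling equals $W^*(\lambda')$ for $\lambda' \geq 1$ and $\lambda'^2\,W^*(1/\lambda')$ for $\lambda' < 1$. As the left-hand side is monotonically increasing in $\lambda$, for each fixed $\lambda'$ I only need to verify the inequality at $\lambda = \max\!\left(\overline{\lambda},\,\lambda' + w_{\min}\right)$.

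I then split into four ranges of $\lambda'$, delimited by $1/\lambda_2$, $1$, and $\lambda_2$, according to which closed form for $W^*$ is active. The two extreme ranges $\lambda' \geq \lambda_2$ and $\lambda' \leq 1/\lambda_2$ involve only the circumcircle formula; after simplification they collapse to the routine inequalities $2w_1\lambda' \geq 1$ and $\lambda \geq 3\lambda'/2$, respectively, both of which follow easily from $w_{\min} > 1/2$ and $\overline{\lambda} > 3/2$.

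The hard part is the two intermediate ranges $\lambda' \in [1,\lambda_2)$ and $\lambda' \in (1/\lambda_2,1]$, where the three-disk formula is active and the target inequality becomes a polynomial of degree up to four in $\lambda'$. A short derivative computation in each sub-range shows that the minimum occurs at the common endpoint $\lambda' = 1$ with $\lambda = \overline{\lambda}$. There the inequality reduces to $\overline{\lambda}/2 \geq 195/256$, equivalently $128\overline{\lambda} \geq 195$, which holds strictly since $\overline{\lambda} = (195 + \sqrt{5257})/128 > 195/128$. It is this single verification that pins down $\overline{\lambda}$ as the threshold beyond which \textsc{Large Disk} is always applicable.
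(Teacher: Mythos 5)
Your approach is correct and the algebra checks out, but it takes a noticeably different decomposition from the paper's. Both proofs write the surplus as $\frac{\lambda^2+2}{4} - r_1^2 = \frac{\lambda'(2\lambda-\lambda')+1}{4}$ with $\lambda' = \lambda - w_1$ and show this meets the recursion requirement, but the casework diverges. You partition $\lambda'$ at the junctions $1/\lambda_2$, $1$, $\lambda_2$ where the closed form for $W^*$ changes, invoke the exact three-disk polynomial in the two middle intervals (giving degree-four expressions), and locate the minimum at $\lambda'=1$ by two derivative sign computations (one showing $\Delta' \ge 0$ on $[1,\lambda_2)$, the other $\Delta' \le 0$ on $(1/\lambda_2,1]$); the final check $\overline{\lambda}/2 \ge \frac{195}{256}$ is right. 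The paper instead splits $x = \lambda'$ at $1/\overline{\lambda}$ and $\overline{\lambda}$ and in the middle interval never touches the three-disk formula: it uses only the coarse bound $E^*(\lambda') \le \frac{195}{256}$ for all $\lambda' \in [1/\overline{\lambda},\overline{\lambda}]$, an immediate consequence of the definition of $\overline{\lambda}$ and the single interior minimum of $E^*$ at $\sqrt{2}$; this reduces the claim to $-x^2 + cx \le 1$ with $c = \tfrac{195}{64} - 4\sqrt{\overline{r}^2 - \tfrac14}$, whose maximum $\tfrac{c^2}{4}\approx 0.44$ is trivially below $1$. Your route pays with more algebra but avoids any appeal to the global shape of $E^*$; the paper's is shorter because it exploits it. One correction to your closing remark: the check $\overline{\lambda}/2 \ge \frac{195}{256}$ holds with wide slack ($\approx 1.045$ vs.\ $\approx 0.762$), so it does \emph{not} ``pin down'' $\overline{\lambda}$. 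That threshold is fixed a priori by $E^*(\overline{\lambda}) = E^*(1) = \frac{195}{256}$, which is what makes the companion routine \textsc{Split Cover} (Lemma~\ref{lem:mainRectangles-split-cover}) go through; the present lemma is nowhere near tight at $\lambda = \overline{\lambda}$.
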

\begin{proof}[Proof of Lemma~\ref{lem:mainRectangles-split-cover}]
	The routine \textsc{Split Cover} cuts $\mathcal{R}$ into two rectangles $\mathcal{R}_1,\mathcal{R}_2$ of width $\omega_1 = \lambda \cdot \frac{W(D_1)}{W(D)}, \omega_2 = \lambda \cdot \frac{W(D_2)}{W(D)}$ and height $1$.
	Because $\lambda \geq \overline{\lambda}$, we know that $E^*(\lambda') \leq E^*(\lambda)$ for all $\lambda' \leq \lambda$, i.e., according to Theorem~\ref{thm:mainRectangles}, any rectangle with some skew $\lambda' \leq \lambda$ can be covered at least as efficiently as $\mathcal{R}$.
	Therefore, to prove that we can recurse on $\mathcal{R}_1,\mathcal{R}_2$, it suffices to prove that the skew of $\mathcal{R}_1,\mathcal{R}_2$ is at most $\lambda$.
	W.l.o.g., let $\omega_1 \geq \omega_2$.
	Because $\omega_1 < \lambda$, the skew of $\mathcal{R}_1,\mathcal{R}_2$ can only become larger than $\lambda$ if the width $\omega_2$ is too small, i.e., $\omega_2 < \frac{1}{\lambda}$.
	
	The weight $W(D_2)$ assigned to $\mathcal{R}_2$ is at least $\frac{W(D) - r_1^2}{2} = \frac{\lambda^2 + 2}{8} - \frac{r_1^2}{2}$.
	Hence, the width of $\mathcal{R}_2$ satisfies \[\omega_2 = \frac{\lambda}{W(D)} \cdot W(D_2) \geq \frac{\lambda}{2} - \frac{\lambda r_1^2}{2W(D)} \geq \frac{\lambda}{2} - \frac{2\lambda\left(\frac{195\overline{\lambda}}{128}-\frac{11}{4}\right)}{\lambda^2+2} \eqqcolon \hat{\omega}_2.\]
	As a function of $\lambda \geq \overline{\lambda}$, $\hat{\omega}_2$ is monotonically increasing and $\frac{1}{\lambda}$ is monotonically decreasing.
	Therefore, $\hat{\omega}_2 - \frac{1}{\lambda}$ is monotonically increasing in $\lambda$.
	To prove $\omega_2 \geq \frac{1}{\lambda}$, it therefore suffices to observe that, for $\lambda = \overline{\lambda}$, $\hat{\omega}_2 - \frac{1}{\lambda} \approx 0.7601 - 0.4784 > 0$.
\end{proof}
\begin{proof}[Proof of Lemma~\ref{lem:mainRectangles-large-disk}]
	Because $\overline{r} \approx 0.6586 > \frac{1}{2}$, we can always place $r_1$ such that a vertical rectangular strip $\mathcal{S}_1$ of height $1$ and positive width $\omega_1$ is covered.
	Therefore, to prove that \textsc{Large Disk} works, it suffices to prove that the remaining weight $W(D) - r_1^2$ suffices to apply Theorem~\ref{thm:mainRectangles} to the remaining rectangle $\mathcal{R} \setminus \mathcal{S}_1$.
	Let $x \coloneqq \lambda - \omega_1$ be the width of the remaining rectangle; we have $r_1 = \frac{1}{2}\sqrt{1+\left(\lambda-x\right)^2}$ and $\lambda = x + 2\sqrt{r_1^2-\frac{1}{4}}$.
	We consider the three cases (a) $x \geq \overline{\lambda}$, (b) $\frac{1}{\overline{\lambda}} \leq x < \overline{\lambda}$ and (c) $x < \frac{1}{\overline{\lambda}}$.

	In case (a), in order to apply Theorem~\ref{thm:mainRectangles}, we have to prove that the remaining weight $R_2 \coloneqq W(D)-r_1^2$ is at least $\frac{x^2 + 2}{4}$.
	We have
	\[\frac{\lambda^2+2}{4} - r_1^2 = \frac{1+2\lambda x-x^2}{4} \geq \frac{x^2+2}{4} \Leftrightarrow 2\underbrace{\left(x+2\sqrt{r_1^2-\frac{1}{4}}\right)}_{=\lambda}x \geq 2x^2+1 \Leftrightarrow 4x\sqrt{r_1^2-\frac{1}{4}} \geq 1,\]
	which follows from $x \geq \overline{\lambda} > 2$ and $r_1 \geq \overline{r}$.

	In case (b), let $\lambda_x \coloneqq \max\{\frac{1}{x},x\}$ be the skew of the remaining rectangle and observe that $E^*(\lambda_x) \leq \frac{195}{256}$.
	Therefore, it suffices to show that the remaining weight is at least $\frac{195x}{256}$.
	We have
	\[\frac{\lambda^2+2}{4} - r_1^2 \geq \frac{195x}{256} \Leftrightarrow 1+2\lambda x-x^2 \geq \frac{195x}{64} \Leftrightarrow 1 \geq -x^2 + x\left(\frac{195}{64} - 4\sqrt{r_1^2-\frac{1}{4}}\right).\]
	To prove this inequality, observe that $r_1^2 \geq \overline{r}$ yields $-x^2 + x\left(\frac{195}{64} - 4\sqrt{r_1^2-\frac{1}{4}}\right) \leq -x^2 + xc$ for $c \coloneqq \left(\frac{195}{64} - 4\sqrt{\overline{r}^2-\frac{1}{4}}\right)$.
	The function $-x^2 + xc$ attains its global maximum $\frac{c^2}{4}$ at $x = \frac{c}{2}$.
	Because $\frac{c^2}{4} \approx 0.4435 < 1$, the inequality holds and the remaining weight suffices to recurse on $\mathcal{R} \setminus \mathcal{S}_1$.

	In case (c), $x$ is the length of the shorter side of $\mathcal{R} \setminus \mathcal{S}_1$.
	Because the skew $\lambda_x = \frac{1}{x}$ is at least $\overline{\lambda}$, in order to apply Theorem~\ref{thm:mainRectangles}, the remaining weight must be at least $\frac{1+2x^2}{4}$.
	Moreover, we have $\frac{3}{2}x \leq \frac{3}{2}\cdot\frac{1}{\overline{\lambda}} < 1 \leq \lambda$.
	This yields \(\frac{\lambda^2+2}{4} - r_1^2 = \frac{1+2\lambda x - x^2}{4} \geq \frac{1+3x^2-x^2}{4} = \frac{1+2x^2}{4}\).
\end{proof}
This concludes the proof of Theorem~\ref{thm:mainRectangles} for rectangles with large skew $\lambda \geq \overline{\lambda}$; in the following, we may assume $\lambda < \overline{\lambda}$.

\subsubsection{Handling Small Disks}\wcnewsubsec{strat:main-hsd}
\pwcstratref{strat:main-hsd}{1} For the case of skew $\lambda < \overline{\lambda}$, the algorithm begins by checking whether it can use Lemma~\ref{lem:rectanglesSB}; because $E^*(\lambda) > E = \rectanglesSBEff$ for all $\lambda$, this only depends on the size of the largest disk.
We use the following lemma as success criterion.
\begin{lemma}
	If the largest disk $r_1 \in D$ satisfies $r_1^2 \leq \frac{E^*(\lambda)}{E} \cdot \rectanglesSBRB^2$, $D$ suffices to cover $\mathcal{R}$.
\end{lemma}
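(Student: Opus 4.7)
The plan is to apply Lemma~\ref{lem:rectanglesSB} as a black box to a suitably rescaled copy of $\mathcal{R}$, exploiting the slack between the covering coefficient $E = \rectanglesSBEff$ guaranteed by that lemma and the strictly larger critical covering coefficient $E^*(\lambda)$. Because lengths and radii scale linearly while weights and areas scale quadratically, rescaling the statement of Lemma~\ref{lem:rectanglesSB} by a factor $s > 0$ yields the following equivalent claim: a rectangle of dimensions $\lambda s \times s$ (with $\lambda \geq 1$) can be covered by any collection of disks satisfying $r_1 \leq \rectanglesSBRB \cdot s$ and $W(D) \geq E \lambda s^2$.

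The scaling factor to pick is $s := \sqrt{E^*(\lambda)/E}$, which exceeds $1$ because the global minimum of $E^*(\lambda)$ is $\frac{1}{\sqrt{2}} > \rectanglesSBEff = E$ (attained at $\lambda = \sqrt{2}$; see Fig.~\ref{fig:critical-covering-coeff}). The target of the argument is then any rectangle $\mathcal{R}' \supseteq \mathcal{R}$ of dimensions $\lambda s \times s$ and skew $\lambda$.

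It then remains to verify the two preconditions of the scaled version of Lemma~\ref{lem:rectanglesSB} for $\mathcal{R}'$. The weight condition $W(D) \geq E \lambda s^2$ reduces to $W(D) \geq E^*(\lambda) \lambda = W^*(\lambda)$, which holds under the standing assumption $W(D) = W^*(\lambda)$. The radius condition $r_1 \leq \rectanglesSBRB \cdot s$ rearranges to $r_1^2 \leq \rectanglesSBRB^2 \cdot E^*(\lambda)/E$, which is exactly the hypothesis of the lemma. Invoking the scaled version of Lemma~\ref{lem:rectanglesSB} then produces a covering of $\mathcal{R}' \supseteq \mathcal{R}$, and hence of $\mathcal{R}$.

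No substantive obstacle is expected; the entire argument is essentially bookkeeping around the scale invariance of the covering problem and the fact that $E^*(\lambda) > E$ uniformly in $\lambda$, so the inflated rectangle $\mathcal{R}'$ genuinely contains $\mathcal{R}$.
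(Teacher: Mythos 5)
Your proof is correct and takes a cleaner route than the paper's. Both rest on the same observation — the slack $E^*(\lambda)/E > 1$ lets one enlarge $\mathcal{R}$ to some $\mathcal{R}' \supseteq \mathcal{R}$ while keeping the required covering coefficient at $E$, and the enlargement relaxes the radius bound of Lemma~\ref{lem:rectanglesSB} to precisely the hypothesis — but you choose the isotropic scaling $s = \sqrt{E^*(\lambda)/E}$, which preserves the skew and makes the shorter side of $\mathcal{R}'$ unambiguously $s$, so no case distinction is needed. The paper instead stretches $\mathcal{R}$ anisotropically, passing to a $\lambda \times \bigl(E^*(\lambda)/E\bigr)$ rectangle when $\lambda \geq \sqrt{E^*(\lambda)/E}$ and to a square of side $\sqrt{E^*(\lambda)/E}$ otherwise, which forces a case split on which side of $\mathcal{R}'$ is the short one. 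The weight check is identical in substance in both arguments (the needed weight is at most $\lambda E^*(\lambda) = W^*(\lambda)$, available by the standing normalization), and the extra rectangle area is discarded in either version, so there is no downside to your approach: same conclusion, slightly less bookkeeping.
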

\begin{proof}
	We distinguish two cases.
	If $\lambda \geq \sqrt{\frac{E^*(\lambda)}{E}}$, we apply Lemma~\ref{lem:rectanglesSB} to a rectangle $\mathcal{R}' \supseteq \mathcal{R}$ of dimensions $\lambda \times \frac{E^*(\lambda)}{E}$ instead of $\mathcal{R}$.
	The total disk weight is $\lambda \cdot E^*(\lambda)$ and the area of $\mathcal{R}'$ is $\lambda \cdot \frac{E^*(\lambda)}{E}$;
	therefore, the covering coefficient we have to achieve for this rectangle is $\lambda \cdot E^*(\lambda) / \frac{\lambda \cdot E^*(\lambda)}{E} = E$, which is what Lemma~\ref{lem:rectanglesSB} guarantees.

	For a rectangle whose shorter side has length $h$, Lemma~\ref{lem:rectanglesSB} requires the weight of the largest disk to satisfy $r_1^2 \leq \rectanglesSBRB^2 \cdot h^2$.
	In this case, $h = \lambda$ if $\sqrt{\frac{E^*(\lambda)}{E}} \leq \lambda \leq \frac{E^*(\lambda)}{E}$ and $h = \frac{E^*(\lambda)}{E}$ otherwise.
	In either case, we have $h \geq \sqrt{\frac{E^*(\lambda)}{E}}$; therefore, $r_1^2 \leq \frac{E^*(\lambda)}{E}\rectanglesSBRB^2$ suffices to apply Lemma~\ref{lem:rectanglesSB} to $\mathcal{R}'$, and we are done.

	Otherwise, we have $\lambda < \sqrt{\frac{E^*(\lambda)}{E}}$.
	In this case, we apply Lemma~\ref{lem:rectanglesSB} to a square $\mathcal{R}' \supseteq \mathcal{R}$ of side length $\sqrt{\frac{E^*(\lambda)}{E}}$.
	The area of $\mathcal{R}'$ is $\frac{E^*(\lambda)}{E}$ and we have at least weight $E^*(\lambda) \cdot \lambda \geq 1 \cdot E^*(\lambda)$, thus the covering coefficient $E$ guaranteed by Lemma~\ref{lem:rectanglesSB} suffices.
	Therefore, because $r_1^2 \leq \frac{E^*(\lambda)}{E} \cdot \rectanglesSBRB^2$, we can apply Lemma~\ref{lem:rectanglesSB} to $\mathcal{R}'$, thus concluding the proof.
\end{proof}

\subsubsection{Covering Using the Largest Disk}\label{sec:two-pockets}\wcnewsubsec{strat:main-cld}
\pwcstratref{strat:main-cld}{1} If $r_1$ is too large to apply Lemma~\ref{lem:rectanglesSB} directly, we consider the following routines.
The first, depicted in Fig.~\ref{fig:mainRectangles-largest-disk}~(a), places $r_1$ covering a vertical rectangular strip of width $S_1 = \sqrt{4r_1^2-1}$ at the left side of $\mathcal{R}$.
We disregard this routine if such a placement is impossible, i.e., if $r_1^2 \leq \frac{1}{2}$.
Afterwards, we check whether we can guarantee successful recursion on the remaining rectangle $\mathcal{A}$.
If this does not work, we also consider covering horizontal rectangular strips at the bottom of $\mathcal{A}$ using disks $r_2,r_3$ and $r_4$, checking whether we can guarantee successful recursion after each additional disk.

\pwcstratref{strat:main-cld}{2} If $\sqrt{2}r_1 < 1$, we also consider placing $r_1$ covering its inscribed square of side length $\sqrt{2}r_1$ in the bottom-left corner of $\mathcal{R}$, as depicted in Fig.~\ref{fig:mainRectangles-largest-disk}~(b).
The remaining two rectangular regions $\mathcal{A}$ to the right of $r_1$ and $\mathcal{B}$ above $r_1$ are covered recursively as follows.
We construct a collection $D_{\mathcal{A}}$ of disks for $\mathcal{A}$ by collecting disks in decreasing order of radius, starting with $r_2$, until the weight in $D_{\mathcal{A}}$ suffices to guarantee successful recursion on $\mathcal{A}$, which may use Theorem~\ref{thm:mainRectangles} or Lemmas~\ref{lem:size-bound-large}~or~\ref{lem:rectanglesSB}.
We bound the cost of this split, i.e., the amount of weight in $D_{\mathcal{A}}$ that exceeds the weight requirement for recursion, as follows.
If $r_2$ or $r_2,r_3$ are sufficient, we can directly compute the cost.
Otherwise, we use $r_4^2$ as an upper bound.
Using this, we check whether we can guarantee successful recursion on $\mathcal{B}$ using the remaining disks.

\pwcstratref{strat:main-cld}{3} If $r_1$ is large enough to intersect the right border of $\mathcal{R}$ when placed covering the left border as depicted in Fig.~\ref{fig:mainRectangles-largest-disk}~(c), we attempt to apply the routine \textsc{Two Pockets}.
This routine places $r_1$ on the left border such that two identical pockets remain to be covered at the right border.
If $r_3$ is big enough to cover one of the pockets, we cover the pockets using $r_2$ and $r_3$.
Otherwise, if $r_2$ is big enough to cover both pockets simultaneously, we cover both pockets using $r_2$.
Otherwise, if $r_2$ is big enough to cover one pocket on its own, we cover one pocket using $r_2$ and check whether we can guarantee successful recursion on the bounding box $\mathcal{B}$ of the remaining pocket.
Finally, if $r_2$ is not big enough to cover one pocket, we subdivide the disks into two groups $D_{\mathcal{A}},D_{\mathcal{B}}$ using \textsc{Greedy Splitting} and check whether we can guarantee successful recursion on the bounding boxes $\mathcal{A},\mathcal{B}$.

\subsubsection{Covering Using the Two Largest Disks}\wcnewsubsec{strat:main-2ld}
\pwcstratref{strat:main-2ld}{1} We continue with routines that mainly rely on the two largest disks $r_1,r_2$.
The first routine, depicted in Fig.~\ref{fig:mainRectangles-two-disks}~(a)~and~(b), uses $r_1$ and $r_2$ to cover a vertical strip of height $1$ and maximum possible width at the left side of $\mathcal{R}$.
To achieve this, the disks are either placed on top of each other or horizontally next to each other.
If this covers $\mathcal{R}$, we are done; otherwise, we check whether we can guarantee successful recursion on the remaining rectangle $\mathcal{A}$ using the remaining disks.

\pwcstratref{strat:main-2ld}{2} If this does not work and if $\sqrt{2}r_1 < 1$, we continue using the following routine, depicted in Fig.~\ref{fig:mainRectangles-two-disks}~(c).
We place $r_1$ covering its inscribed square of side length $\sqrt{2}r_1$ in the bottom-left corner of $\mathcal{R}$.
On top of $r_1$, we place $r_2$ such that it covers the remaining part of $\mathcal{R}$'s left border.
The remaining area can be subdivided into two rectangles $\mathcal{A},\mathcal{B}$ either horizontally or vertically; we consider both options separately.
For both options, we build collections $D_{\mathcal{A}}$ and $D_{\mathcal{B}}$ to recurse on the rectangles as follows.
We begin building either $D_{\mathcal{A}}$ or $D_{\mathcal{B}}$, again considering both options, by collecting disks in decreasing order of radius, starting with $r_3$, until the collected weight suffices for recursion;
this may, depending on $r_3, r_4$ and the dimensions of $\mathcal{A}$ and $\mathcal{B}$, use Theorem~\ref{thm:mainRectangles} or Lemmas~\ref{lem:size-bound-large}~or~\ref{lem:rectanglesSB}.
We check that there is enough weight for this.
Finally, we use $r_3$ or $r_4$ to bound the cost of the split into $D_{\mathcal{A}}$ and $D_{\mathcal{B}}$, and verify that the remaining disks can be used to successfully recurse on the remaining region.

\subsubsection{Covering Using the Three Largest Disks}\label{sec:three-disk-pocket}\wcnewsubsec{strat:main-3ld}
In this section, we describe several routines that mainly rely on the three largest disks and recursion to cover $\mathcal{R}$.

\pwcstratref{strat:main-3ld}{1} If $r_3 > \frac{1}{2}$ is large enough to cover a strip of height $1$ and positive width, we consider placing $r_1,r_2,r_3$ horizontally next to each other as depicted in Fig.~\ref{fig:strategy_cover_with_three_disks}.
If this covers the entire rectangle, we are done; otherwise, we check whether we can guarantee successful recursion on the bounding box $\mathcal{A}$ of the remaining uncovered region.

\pwcstratref{strat:main-3ld}{2} We also consider placing $r_1,r_2,r_3$ as depicted in Fig.~\ref{fig:mainRectangles-three-disks}~(a), together covering a strip of maximum possible width, using recursion on the remaining rectangle.

\pwcstratref{strat:main-3ld}{3} If that does not work, we consider covering a vertical rectangular strip of height $1$ and maximum possible width at the left of $\mathcal{R}$ using $r_2$ and $r_3$; see Fig.~\ref{fig:mainRectangles-three-disks}~(b).
After placing $r_2$ and $r_3$ in that manner, we cover the remaining part of $\mathcal{R}$'s bottom side using $r_1$.
We discard this routine if any of these placements are infeasible and check whether we can guarantee successful recursion on the remaining rectangular region $\mathcal{A}$.

\pwcstratref{strat:main-3ld}{4} In the next routine, we cover the bottom-left corner of $\mathcal{R}$ using $r_3$.
We place $r_3$ such that its inscribed square is contained in $\mathcal{R}$, covering $\sqrt{2}r_3$ of the bottom and left side of $\mathcal{R}$; see Fig.~\ref{fig:strategy_three_disk_recursion_small_corner}.
We then consider two placements for $r_1$ and $r_2$ as depicted in Fig.~\ref{fig:strategy_three_disk_recursion_small_corner}.
A rectangle $\mathcal{A}$ in the top-right corner of $\mathcal{R}$ may remain uncovered by this placement; we check whether we can guarantee successful recursion on this rectangle using the remaining disks.

\pwcstratref{strat:main-3ld}{5} If this does not work, we also consider the routine \textsc{Three Disk Pocket}, placing $r_1,r_2,r_3$ as depicted in Fig.~\ref{fig:strategy_three_disk_recursion_r1_r3_pocket}.
In this routine, we cover a strip of width $S_1 \coloneqq 2\sqrt{r_1^2 - \frac{1}{4}}$ at the left side of the rectangle using $r_1$; if such a placement is impossible, we disregard the routine.
We then cover the uncovered part of the top side of the rectangle using the second-largest disk $r_2$ while maximizing the height $h_2 \coloneqq 2\sqrt{r_2^2 - \frac{1}{4}\left(\lambda-S_1\right)^2}$ of the rectangle it covers.
Finally, we place $r_3$ covering the uncovered part of the right side while maximizing the width $S_3 \coloneqq 2\sqrt{r_3^2 - \frac{1}{4}\left(1-h_2\right)^2}$ of the its covered rectangle.
This routine may leave an uncovered pocket between the largest and the third-largest disk; we check whether we can guarantee successful recursion on the bounding box of that pocket.

\subsubsection{Covering Using the Four Largest Disks}\wcnewsubsec{strat:main-4ld}
Finally, in this section, we describe three routines that rely mostly on the four largest disks to cover $\mathcal{R}$.

\pwcstratref{strat:main-4ld}{1} The first routine, depicted in Fig.~\ref{fig:mainRectangles-four-disks}~(a), places $r_1$ covering its inscribed square of side length $\sqrt{2}r_1$ in the bottom-left corner of $\mathcal{R}$.
On top of $r_1$, we place $r_4$ covering the remainder of $\mathcal{R}$'s left side.
To the right of $r_1$, we place $r_2$ covering the remainder of $\mathcal{R}$'s bottom side.
The routine succeeds if we can cover the entire remaining region in the top-right corner by $r_3$; note that this routine only uses the largest four disks.

\pwcstratref{strat:main-4ld}{2} The second routine, depicted in Fig.~\ref{fig:mainRectangles-four-disks}~(b), partitions $\{r_1,\ldots,r_4\}$ into two groups, each consisting of two disks.
We consider each possible partition.
We place the disks from each group on top of each other such that they cover two vertical strips of height $1$ and maximum possible width at $\mathcal{R}$'s left side.
If this covers $\mathcal{R}$, we are done; otherwise, we check whether we can guarantee successful recursion on the remaining strip $\mathcal{A}$.

\pwcstratref{strat:main-4ld}{3} In the third routine, depicted in Fig.~\ref{fig:mainRectangles-four-disks}~(c), we place $r_1$ and $r_2$ each covering a vertical strip of height $1$ and maximum possible width;
we disregard the routine if such placements are infeasible.
The strip covered by $r_1$ is placed at the left side of $\mathcal{R}$.
The strip covered by $r_2$ is placed to the right of the first strip; between the two strips, we leave a gap of width $g_w$ such that $r_1$ and $r_2$ still intersect each other.
Due to this gap, between $r_1,r_2$ and the top and bottom border of $\mathcal{R}$, there are two symmetric uncovered pockets.
We maximize $g_w$ such that $r_4$ suffices to cover one such pocket; we use $r_3$ to cover the other pocket.

If this leaves some uncovered region to the right of $r_2$, we proceed as follows.
If $r_2$ does not intersect the right boundary of $\mathcal{R}$, we check whether we can guarantee successful recursion on the bounding box of the remaining region.
Otherwise, the uncovered part of $\mathcal{R}$ consists of two disconnected pockets with bounding boxes $\mathcal{B}_1, \mathcal{B}_2$; see Fig.~\ref{fig:mainRectangles-four-disks}.
In that case, we also consider recursion on the bounding box $\mathcal{A}$ of the union of these pockets.
Additionally, we consider applying \textsc{Greedy Splitting} to the remaining disks, bounding the difference between the weight of the two resulting groups $D_{\mathcal{B}_1}, D_{\mathcal{B}_2}$ by $r_4^2$.
Using this bound, and $r_4$ as bound for the largest remaining disk, we check whether we can guarantee successful recursion on each of the remaining regions.

\subsubsection{Concluding the Proof}\label{sec:manual-analysis}
We implemented the success criteria of the routines described above using interval arithmetic and ran the resulting automatic prover.
This process yields a set of critical hypercuboids, for which we have to provide manual analysis to prove that the rectangle $\mathcal{R}$ can be covered in these cases as well.
For all other cases, our automatic analysis guarantees that there is at least one routine whose success criterion holds.
We classified these critical hypercuboids into the following two categories, which correspond to the two types of worst-case configurations depicted in Fig.~\ref{fig:worst-cases-rectangles}.
\begin{enumerate}
	\item[(I)] $1 \leq \lambda \leq 1.0359$, and $r_1^2,r_2^2,r_3^2 \in r^2_{*} + \left[-7 \cdot 10^{-5}, 2.5 \cdot 10^{-5}\right]$, where $r^2_{*} = \frac{\lambda^2}{16} + \frac{5}{32} + \frac{9}{256\lambda^2}$ is such that three disks of weight $r^2_{*}$ exactly suffice to cover $\mathcal{R}$ according to Lemma~\ref{lem:worst-cases-rectangles}.
	\item[(II)] $1.0356 \leq \lambda \leq \overline{\lambda}$, $r_1^2 \in \frac{\lambda^2+1}{4} + \left[-5 \cdot 10^{-5}, 3.5 \cdot 10^{-5}\right]$, and $r_2^2 \in \frac{1}{4} + \left[-5 \cdot 10^{-5}, 6.5 \cdot 10^{-7}\right]$.
\end{enumerate}
In the following, we handle these two cases.
\begin{lemma}\label{lem:mainRectangles-three-disk-pocket}
	For any input corresponding to a critical hypercuboid of type (I), the routine \textsc{Three Disk Pocket} from Section~\ref{sec:three-disk-pocket} allows us to cover $\mathcal{R}$.
\end{lemma}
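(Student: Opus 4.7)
The plan rests on the fact that \textsc{Three Disk Pocket} is by design a continuous deformation of the tight three-disk worst-case placement from Lemma~\ref{lem:worst-cases-rectangles}. At the exact point $r_1 = r_2 = r_3 = r_*$, the computation of $S_1, h_2, S_3$ from the formula for $r_*^2$ yields $S_1 + S_3 = \lambda$ and the two circular bulges of $r_1$ and $r_3$ meet exactly on the horizontal line $y = 1 - h_2$, so the pocket has zero area. The type~(I) window restricts $(r_1^2, r_2^2, r_3^2)$ to a box of diameter below $10^{-4}$ around $(r_*^2, r_*^2, r_*^2)$, and by continuity both the pocket and the residual weight become small together; the task is to quantify this so that whatever small pocket remains can be covered by whatever small weight remains.

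I would carry out the argument as follows. Write $r_i^2 = r_*^2 + \delta_i$ with $\delta_i \in [-7\cdot 10^{-5},\, 2.5\cdot 10^{-5}]$ for $i = 1,2,3$. Since the total weight is fixed to $W^*(\lambda) = 3r_*^2$, the residual weight available for recursion equals $R_4 = -(\delta_1 + \delta_2 + \delta_3)$, and since radii are sorted we also have $r_4^2 \leq \min\{r_3^2,\, R_4\}$. Derive closed-form expressions for $S_1, h_2, S_3$ and for the bounding box of the pocket as functions of $\delta_1, \delta_2, \delta_3, \lambda$, noting that the pocket is not the full rectangle $[S_1, \lambda - S_3] \times [0, 1-h_2]$ but is cut further by the circular bulges of $r_1$ and $r_3$ extending beyond the strips.

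I would then split the analysis into two regimes. In regime (A), $\delta_1 + \delta_2 + \delta_3 \geq 0$ so $R_4 \leq 0$ and no weight remains for recursion; here I must show the three disks alone cover $\mathcal{R}$, which reduces to proving that the circular bulges of $r_1$ and $r_3$ meet on or above $y = 1 - h_2$. Linearizing this bulge-meeting condition around the worst-case point, I would verify that its first-order gradient in $(\delta_1, \delta_2, \delta_3)$ forces overlap whenever $\sum_i \delta_i \geq 0$, yielding the required coverage. In regime (B), $\sum_i \delta_i < 0$ so $R_4 > 0$; here I would bound the pocket's bounding box using the closed-form expressions above, showing that both its sides are $O(\sqrt{R_4})$ while $r_4^2 \leq R_4$. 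Consequently the radius bound of Lemma~\ref{lem:rectanglesSB} (which requires $r_4 \leq 0.375 \cdot h$, where $h$ is the shorter side of the pocket) is satisfied for sufficiently small $R_4$, and the weight requirement $R_4 \geq 0.61 \cdot (\text{area of pocket})$ follows from the quadratic scaling of the pocket's area against the linear scaling of $R_4$.

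The main obstacle will be making regime (A) airtight: the bulge-meeting condition is not monotone in the individual $\delta_i$ because $h_2$ depends on $S_1$ via $\delta_1$, so a second-order analysis might be needed to rule out pathological mixes of signs. My fallback is that the type~(I) window is already narrow and explicitly known; if the closed-form first-order argument leaves residual uncertainty, a final pass of the interval-arithmetic prover on the reduced three-dimensional region $(\delta_1, \delta_2, \delta_3) \in [-7\cdot 10^{-5}, 2.5\cdot 10^{-5}]^3$, using only this single routine's success criterion, will verify the remaining inequalities. This is justified because the Jacobian of the pocket-area-minus-residual-weight function at the worst case is well-defined, so the critical hypercuboid resolves in a bounded number of subdivisions.
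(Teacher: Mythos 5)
Your overall strategy --- perturbing around the tight three-disk configuration and comparing the residual weight $R_4$ to the size of the remaining pocket --- is the same as the paper's, but two of your key steps fail as stated. First, you propose to cover the pocket recursively via Lemma~\ref{lem:rectanglesSB}, whose radius bound requires $r_4 \leq 0.375\,h$ for the shorter side $h$ of the pocket. You only know $r_4 \leq \sqrt{R_4}$, while $h$ is itself only of order $\sqrt{R_4}$ with an unfavorable constant: from $h \leq w/4$ and the best obtainable bound $R_4 \geq \frac{9}{32}w^2$ one gets $0.375\,h \leq 0.18\sqrt{R_4}$, so a single remaining disk carrying all of $R_4$ violates the radius bound. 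This is why the paper instead recurses with Theorem~\ref{thm:mainRectangles} (which has no radius bound) on the pocket's bounding box, after first proving by a separate geometric argument (a ray of slope $\frac{1}{4}$ through the intersection of $r_1$ with the bottom edge) that the bounding box has height at most one quarter of its width, so that the required weight is only $\frac{9}{32}\left(\lambda-S_1-S_3\right)^2$. Your proposal contains no substitute for this aspect-ratio lemma, and without it the weight needed for the recursion is not controlled.

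Second, your fallback --- rerunning the interval-arithmetic prover on the three-dimensional $\delta$-box --- cannot close the argument: that box contains the exact worst case $\delta_1=\delta_2=\delta_3=0$, where the inequality to be certified holds with equality. Interval arithmetic can never verify a non-strict inequality that is tight at a point of the region, no matter how many subdivisions are used; the paper states this explicitly, and it is precisely why this lemma needs a manual proof at all. What actually closes the case is the paper's monotonicity analysis: writing $\Delta = R_4 - \frac{9}{32}\max^2\left(0,\lambda-S_1-S_3\right)$ and showing, in three separate sign regimes for the deviations and using the numerical enclosures of $S_1,h_2,S_3$, that the partial derivatives of $\Delta$ with respect to the perturbation parameters are all nonnegative, so that $\Delta$ attains its minimum value $0$ exactly at the tight configuration. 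Your claim that the pocket's sides are $O(\sqrt{R_4})$ is essentially equivalent to this conclusion, but it does not follow from generic continuity: the pocket width is linear in the individual deviations $\delta_i$, which can be much larger than $R_4=-\sum_i\delta_i$ when they have mixed signs, so the gradient comparison across all sign patterns is unavoidable. You correctly identify this as the main obstacle, but the proposal leaves it open and the proposed fallback does not work.
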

\begin{lemma}\label{lem:mainRectangles-two-pockets}
	For any input corresponding to a critical hypercuboid of type (II), the routine \textsc{Two Pockets} from Section~\ref{sec:two-pockets} allows us to cover $\mathcal{R}$.
\end{lemma}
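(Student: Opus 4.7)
The plan is to introduce the defect parameters $s_1 \coloneqq \frac{\lambda^2+1}{4} - r_1^2$ and $s_2 \coloneqq \frac{1}{4} - r_2^2$, each of magnitude at most $5\cdot 10^{-5}$ throughout the type~(II) hypercuboid, and to use the weight constraint $W(D) = W^*(\lambda) = (\lambda^2+2)/4$ in the equivalent form $s_1 + s_2 \geq 0$. If $s_1 \leq 0$, then $r_1$ already reaches the circumradius of $\mathcal{R}$ and covers $\mathcal{R}$ by itself, so \textsc{Two Pockets} succeeds trivially with empty pockets. I therefore focus on the case $s_1 > 0$.

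When $s_1 > 0$, \textsc{Two Pockets} centers $r_1$ at $(c_1, \tfrac12)$ with $c_1 = \sqrt{\lambda^2/4 - s_1}$, so that $r_1$ passes through $(0,0)$ and $(0,1)$. A monotonicity argument in $\lambda$ and $s_1$ shows $r_1 + c_1 > \lambda$ by more than $0.1$ across the entire hypercuboid, so the routine is applicable. Using the identity $r_1^2 = c_1^2 + \tfrac14$, each of the two symmetric pockets lies in a bounding box of dimensions $a \times b$ with
\[
a \;=\; \lambda - 2c_1 \;=\; \frac{4\,s_1}{\lambda + \sqrt{\lambda^2 - 4s_1}}, \qquad b \;=\; \tfrac12 - \sqrt{\tfrac14 - \lambda a} \;=\; \frac{\lambda a}{\tfrac12+\sqrt{\tfrac14-\lambda a}}.
\]
Standard concavity bounds on $\sqrt{\cdot}$ yield the uniform estimates $a \leq 2.001\, s_1/\lambda$, $b \leq 1.001\, \lambda a$, and hence $a^2 \leq 4.01\, s_1^2/\lambda^2$ and $b/a \leq 1.001\,\lambda$ throughout the hypercuboid.

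It then suffices to show that at least one sub-strategy of \textsc{Two Pockets} succeeds for every valid $(\lambda, s_1, s_2)$ with $s_1 > 0$ and $s_2 \geq -s_1$. Sub-strategy~(b), which centers $r_2$ at $\bigl(\tfrac{\lambda+2c_1}{2}, \tfrac12\bigr)$, covers both pockets once $s_2 \leq -a^2/4$. Sub-strategy~(c) uses $r_2$ to cover a single pocket---trivial since $r_2 > 0.499$ dominates the pocket's half-diameter of at most $10^{-4}$---and recurses on the other pocket via the inductive hypothesis of Theorem~\ref{thm:mainRectangles} applied to $\{r_3,\ldots,r_n\}$, succeeding once $s_1 + s_2 \geq a^2\, W^*(b/a)$. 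Since $b/a$ barely exceeds $\lambda \leq \overline{\lambda}$ and $W^*$ is bounded by a constant $C < 2$ on $[1, 1.001\,\overline{\lambda}]$, the recursion condition becomes $s_2 \geq Ca^2 - s_1$. The union of the two conditions covers the entire range $s_2 \geq -s_1$ as soon as $Ca^2 - s_1 \leq -a^2/4$, i.e., $s_1 \geq (C+\tfrac14)a^2$; substituting the bound on $a^2$ reduces this to $\lambda^2 \geq 10\, s_1$, which holds by many orders of magnitude for $s_1 \leq 5\cdot 10^{-5}$ and $\lambda \geq 1.0356$.

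The main obstacle will be promoting the asymptotic estimates on $a$, $b$, and $W^*(b/a)$ to rigorous uniform inequalities over the entire hypercuboid. I would handle this either via explicit concavity bounds of the form $\sqrt{x}-y/\sqrt{x} \leq \sqrt{x-y} \leq \sqrt{x}-y/(2\sqrt{x})$ for $0\leq y\leq x$, or via an interval-arithmetic evaluation at the corners of the hypercuboid---the same technique used elsewhere in the proof. A minor side check handles the degenerate case $n=2$: there $s_1+s_2=0$ forces $s_2=-s_1$, which automatically satisfies sub-strategy~(b)'s condition $s_2 \leq -a^2/4$ (equivalent to $s_1 \geq a^2/4$, immediate from $a \leq 2.001\,s_1$), so the recursion in sub-strategy~(c) is only invoked when $n \geq 3$ and the inductive hypothesis of Theorem~\ref{thm:mainRectangles} is genuinely available.
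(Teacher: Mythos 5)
Your proposal is correct, and it follows the same overall strategy as the paper's proof: the same routine \textsc{Two Pockets}, the same dichotomy between ``$r_2$ covers both pockets'' (your condition $s_2 \leq -a^2/4$ is exactly the paper's $r_2^2 \geq \tfrac14(1+(\lambda-S_1)^2)$) and ``$r_2$ covers one pocket, recurse on the bounding box of the other,'' and the same key geometric fact that the pocket's bounding box of width $a = \lambda - S_1$ has height $b \geq \lambda a$ --- which the paper establishes via a partial-derivative argument ($\Delta_s \geq 0$) and you get for free from the closed form $b = \lambda a/(\tfrac12 + \sqrt{\tfrac14 - \lambda a})$. Where you genuinely diverge is the final verification. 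The paper measures $r_2$'s deficit against the pocket-covering threshold ($\varepsilon_2 = s_2 + a^2/4$), enlarges the pocket to skew exactly $\lambda$ so the coefficient $\tfrac{195}{256}$ applies, and shows by explicit partial derivatives that the remaining slack is monotone in $(\varepsilon_1,\varepsilon_2)$ with minimum $0$ attained exactly at the tight worst case. You instead exploit the scaling $a = \Theta(s_1)$: the recursion cost is $O(s_1^2)$ while the available slack $R_3 = s_1 + s_2$ is linear in $s_1$, so the two success regions overlap once $s_1 \geq (C+\tfrac14)a^2$, which reduces to $\lambda^2 \geq 10\,s_1$ and holds by orders of magnitude on the hypercuboid. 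This is more elementary and arguably more transparent, at the cost of leaning more crudely on the numerical smallness of the hypercuboid, whereas the paper's monotonicity argument pins down the exact extremal configuration. Your side checks (applicability via $c_1 + r_1 > \lambda$, the degenerate case $n=2$, and $r_2$ trivially covering one tiny pocket) are all sound.
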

\begin{proof}[Proof of Lemma~\ref{lem:mainRectangles-three-disk-pocket}]
	For all critical hypercuboids of type (I), we let our automatic prover compute intervals for the values $S_1, h_2, S_3$ and $\lambda - S_1 - S_3$ occurring when placing the three largest disks according to routine \textsc{Three Disk Pocket}; see Fig.~\ref{fig:strategy_three_disk_recursion_r1_r3_pocket}.
	Combining these intervals, this yields the following bounds.
	For any collection $D$ of disks corresponding to a critical hypercuboid of type (I),
	when we place the three largest disks according to routine \textsc{Three Disk Pocket}, $r_1$ covers a strip of height $1$ and width $S_1 \in \left[0.1248,0.1560\right]$, $r_2$ covers a rectangle of width $\lambda-S_1$ and height $h_2 \in \left[0.4995,0.5002\right]$, and $r_3$ covers a rectangle of height $1-h_2$ and width $S_3 \in \left[0.8746,0.8801\right]$.
	The width of the remaining pocket is at most $\lambda - S_1 - S_3 \leq 0.00056$.
	If $r_1^2 \geq r_2^2 \geq r_3^2 \geq r^2_{*}$, then by Lemma~\ref{lem:worst-cases-rectangles}, the three largest disks, when placed in this manner, cover $\mathcal{R}$.
	Therefore, in the following, we assume $r_3^2 \leq r^2_{*}$.
	In order to prove that the remaining weight $R_4 = \sum_{i=4}^{n} r_i^2 \geq 0$ suffices to cover the remaining pocket of width $\lambda - S_1 - S_3 \geq 0$, we begin by bounding the height of the pocket in terms of its width.
	\begin{lemma}
		The height of the bounding box of the pocket that remains after placing the largest three disks according to \textsc{Three Disk Pocket} is at most $\frac{1}{4}\left(\lambda-S_1-S_3\right)$.
		\label{claim:strategy_three_disk_recursion_height_bound}
	\end{lemma}
	\begin{proof}
		To prove the lemma, we shoot a ray $g$ with slope $\frac{1}{4}$ through the right intersection point $P_1$ of $r_1$ with the bottom side of $\mathcal{R}$; the situation is depicted in Fig.~\ref{fig:strategy_three_disk_recursion_bound_height}.
		\begin{figure}
			\begin{center}
				\resizebox{.35\linewidth}{!}{\includegraphics{./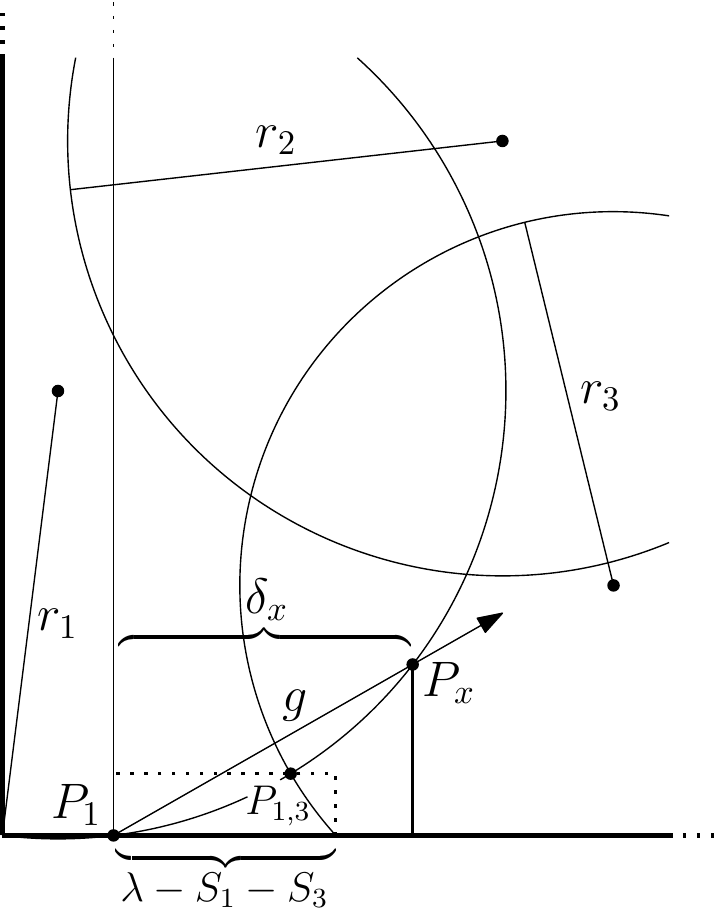}}
			\end{center}
			\caption{
				Shooting a ray $g$ through the intersection point $P_1$ to bound the height of the remaining pocket in terms of its width by proving that the distance $\delta_x$ is greater than $\lambda - S_1 - S_3$.
				To improve readability, the pocket in the figure is much larger than it actually is; to account for this fact, the slope in of $g$ in the figure is greater than $\frac{1}{4}$.
			}
			\label{fig:strategy_three_disk_recursion_bound_height}
		\end{figure}%
		This ray intersects the disk $r_1$ a second time at a point $P_x$ which is at distance $\delta_x = \frac{2\sqrt{68r_1^2 - 16 - 8S_1 - S_1^2}+2-8S_1}{17}$ to the right of $P_1$.
		We know that the intersection point $P_{1,3}$ of the largest and third-largest disk lies below $g$ as long as $\lambda - S_1 - S_3 \leq \delta_x$.
		By $r_1^2 = \frac{S_1^2+1}{4}$, we have $\delta_x = \frac{2\sqrt{16S_1^2 - 8S_1 + 1} + 2 - 8S_1}{17}$.
		Using $S_1 \in \left[0.1248,0.1560\right]$, we see that $\delta_x \geq 0.046 > 0.00056 \geq \lambda - S_1 - S_3$.
		Therefore, the height of the bounding box of the remaining pocket is at most $\frac{1}{4}$ its width.
	\end{proof}
	By induction using Theorem~\ref{thm:mainRectangles}, this means that to cover the bounding box of the remaining pocket, we need at most weight \[ R_4 \geq \frac{1}{4}\left(\lambda-S_1-S_3\right)^2 + \frac{1}{2} \cdot \frac{1}{16} \cdot \left(\lambda-S_1-S_3\right)^2 = \frac{9}{32}\left(\lambda-S_1-S_3\right)^2.\]
	In other words, it suffices to prove that $\Delta \coloneqq R_4 - \frac{9}{32}\max^2\left(0,\lambda-S_1-S_3\right) \geq 0.$

	We distinguish the three cases (1) $r_2^2 \leq r_1^2 \leq r_{*}^2$, (2) $r_1^2 \geq r_2^2 \geq r_{*}^2$, and (3) $r_2^2 \leq r^2_{*}, r_1^2 \geq r^2_{*}$.
	For case (1), we let $\varepsilon_1, \varepsilon_2, \varepsilon_3 \geq 0$, $r_1^2 = r^2_{*} - \varepsilon_1$, $r_2^2 = r^2_{*} - \varepsilon_1 - \varepsilon_2$ and $r_3^2 = r^2_{*} - \varepsilon_1 - \varepsilon_2 - \varepsilon_3$.
	In this case, the weight that remains for recursion is $R_4 = W^*(\lambda) - r_1^2 - r_2^2 - r_3^2 \geq 3\varepsilon_1 + 2\varepsilon_2 + \varepsilon_3$.
	By taking the partial derivatives
	\begin{align*}
		\frac{\partial\Delta}{\partial\varepsilon_3} &= 1 - \frac{9}{8} \cdot \max\left(0,\lambda-S_1-S_3\right)\frac{1}{S_3} \geq 0,\\
		\frac{\partial\Delta}{\partial\varepsilon_2} &= 2 - \frac{9}{8} \cdot \max\left(0,\lambda-S_1-S_3\right)\frac{1}{h_2S_3} \geq 0\textrm{, and}\\
		\frac{\partial\Delta}{\partial\varepsilon_1} &= 3 - \frac{9}{8} \cdot \max\left(0,\lambda-S_1-S_3\right)\frac{S_1+S_3+\lambda\left(\frac{1}{h_2}-1\right)}{S_1S_3} \geq 0
	\end{align*}
	of $\Delta$ w.r.t.\ $\varepsilon_1$, $\varepsilon_2$ and $\varepsilon_3$ and bounding their values using the bounds on $\lambda, S_1, h_2$ and $S_3$, we see that $\Delta$ is monotonically increasing in all three variables on the entire range.
	To minimize $\Delta$, we set $\varepsilon_1 = \varepsilon_2 = \varepsilon_3 = 0$; in that case we have $\lambda - S_1 - S_3 = 0$ due to Lemma~\ref{lem:worst-cases-rectangles} and thus $\Delta \geq 0$.

	For case (2), we let $\varepsilon_1, \varepsilon_2, \varepsilon_3 \geq 0$, $r_1^2 = r_{*}^2 + \varepsilon_1, r_2^2 = r_{*}^2 + \varepsilon_2$ and $r_3^2 = r_{*}^2 - \varepsilon_1 - \varepsilon_2 - \varepsilon_3$.
	In this case, we have $R_4 \geq \varepsilon_3$.
	We can again prove that $\Delta$ is monotonically increasing in $\varepsilon_1, \varepsilon_2$ and $\varepsilon_3$ by analyzing the partial derivatives of $\Delta$; note that in this case, the largest and second largest disks have radius at least $r_{*}^2$ and thus we have $h_2 \geq \frac{1}{2}$ as shown in the proof of Lemma~\ref{lem:worst-cases-rectangles}.
	We then get
	\begin{align*}
		\frac{\partial\Delta}{\partial\varepsilon_3} &= 1 - \frac{9}{8}\max\left(0,\lambda-S_1-S_3\right)\frac{1}{S_3} \geq 0,\\
		\frac{\partial\Delta}{\partial\varepsilon_2} &= \frac{9}{8}\max\left(0,\lambda-S_1-S_3\right) \cdot \frac{1}{S_3} \cdot \left(2-\frac{1}{h_2}\right) \geq 0\textrm{, because }h_2\geq\frac{1}{2}\textrm{, and}\\
		\frac{\partial\Delta}{\partial\varepsilon_1} &= \frac{9}{8}\max\left(0,\lambda-S_1-S_3\right) \cdot \frac{S_1 + h_2S_3 - \left(1-h_2\right)\lambda}{S_1h_2S_3} \geq 0\textrm{, because }\\
		&S_1 + h_2S_3 - \left(1-h_2\right)\lambda \geq S_1 + \frac{1}{2}S_3 - \frac{1}{2}\lambda \geq 0 \Leftrightarrow 2S_1 + S_3 \geq \lambda.
	\end{align*}

	For case (3), we let $\varepsilon_1, R_4 \geq 0$ and $c \in \left[0,\frac{1}{2}\right]$ and let $r_1^2 = r_{*}^2 + \varepsilon_1, r_2^2 = r^2_{*} - c\left(\varepsilon_1 + R_4\right)$ and $r_3^2 = r_{*}^2 - (1-c)\cdot\left(\varepsilon_1+R_4\right)$.
	Again, we can prove that $\Delta$ is monotonically increasing in $\varepsilon_1$ and $R_4$ by analyzing its partial derivatives
	\begin{align*}
		\frac{\partial\Delta}{\partial R_4} &= 1 - \frac{9}{8}\max\left(0,\lambda-S_1-S_3\right)\cdot\frac{1}{S_3}\cdot\left(1-2c+\frac{c}{h_2}\right) > 0\textrm{, and}\\
		\frac{\partial\Delta}{\partial \varepsilon_1} &= \frac{9}{8} \cdot \max\left(0,\lambda-S_1-S_3\right) \cdot \left(\frac{1}{S_1} + \frac{1}{S_3}\cdot\left(-1+c+(1-h_2)\frac{\lambda-cS_1-S_1}{S_1h_2}\right)\right) \geq 0.
	\end{align*}
	In all three cases, we obtain that $\Delta$ is minimized exactly for the case $r_1^2 = r_2^2 = r_3^2 = r_{*}^2$ depicted in Fig.~\ref{fig:strategy_three_disk_recursion_r1_r3_pocket}, where $\lambda - S_1 - S_3 = 0$ and thus $\Delta \geq 0$, with $\Delta = 0$ for skew $\lambda \leq \lambda_2$ below the break-even point.
	This concludes the proof for critical hypercuboids of type (I).
\end{proof}
\begin{proof}[Proof of Lemma~\ref{lem:mainRectangles-two-pockets}]
	For a collection $D$ of disks corresponding to a critical hypercuboid of type (II), when we place the largest disk according to routine \textsc{Two Pockets} as depicted in Fig.~\ref{fig:mainRectangles-largest-disk}~(c), $r_1$ intersects the right border of $\mathcal{R}$, covering a piece of length $h_1 \geq 0.9888$ of that side, possibly leaving two symmetric pockets of width $\lambda-S_1$ and height $\frac{1-h_1}{2}$ uncovered.
	Moreover, $r_1$ covers a strip of height $1$ and width $S_1$.
	If $r_2^2 \geq \frac{1}{4}\left(1+\left(\lambda-S_1\right)^2\right)$, we can cover both pockets using $r_2$ and \textsc{Two Pockets} succeeds.
	Therefore we only need to consider the case that $r_1^2 = \frac{\lambda^2+1}{4} - \varepsilon_1$ and $r_2^2 = \frac{1}{4}\left(1+\left(\lambda-S_1\right)^2\right) - \varepsilon_2$ for some $\varepsilon_1,\varepsilon_2 \geq 0$.
	We thus have \[S_1 = 2\sqrt{r_1^2 - \frac{1}{4}} = 2\sqrt{\frac{\lambda^2}{4} - \varepsilon_1}\textrm{ and }h_1 = 2\sqrt{r_1^2 - \left(\lambda - \frac{S_1}{2}\right)^2} = 2\sqrt{\frac{1}{4} - \lambda^2 + \lambda S_1}.\]
	
	For inputs corresponding to type (II), $r_2$ is large enough to cover one of the pockets on its own.
	The other pocket is partially covered by $r_2$ as well; we ignore this and cover the whole remaining pocket recursively.
	We begin by bounding the width of the remaining pocket in terms of its height; to be precise, we prove that its width is at most $\frac{1}{\lambda}$ times its height.
	In other words, we want to prove $\Delta_s := \frac{1-h_1}{2} - \lambda\left(\lambda-S_1\right) \geq 0$.
	By considering the partial derivative 
	$$\frac{\partial S_1}{\partial \varepsilon_1} = \frac{-2}{S_1}, \frac{\partial h_1}{\partial \varepsilon_1} = \frac{-4\lambda}{S_1h_1}, \frac{\partial \Delta_s}{\partial \varepsilon_1} = \frac{2\lambda\left(1-h_1\right)}{S_1h_1} \geq 0,$$
	we see that $\Delta_s$ is minimized for $\varepsilon_1 = 0$.
	In that case, we have $S_1 = \lambda, h_1 = 1$ and $\Delta_s = 0$ independent of $\lambda$; therefore, $\Delta_s \geq 0$ holds.

	To make use of this bound, we increase the width of the pocket to $W_p := \frac{1-h_1}{2\lambda}$ and cover the slightly larger $\left(W_p \times \frac{1-h_1}{2}\right)$-rectangle instead of the bounding box of the original pocket.
	Because of $\lambda \leq \overline{\lambda}$, we can cover this rectangle with disks of weight $\frac{195}{256} \cdot W_p\frac{1-h_1}{2} = \frac{195\left(1-h_1\right)^2}{1024\lambda}$ according to Theorem~\ref{thm:mainRectangles}.
	Therefore it suffices to prove that the remaining weight
	\begin{align*}%
		R_3 &= W^*(\lambda) - r_1^2 - r_2^2 \geq \frac{\lambda^2+2}{4} - \frac{\lambda^2 + 1}{4} + \varepsilon_1 - \frac{1}{4}\left(1+\left(\lambda-S_1\right)^2\right) + \varepsilon_2\\
		&= \varepsilon_1 + \varepsilon_2 - \frac{1}{4}\left(\lambda-S_1\right)^2 = 2\varepsilon_1 + \varepsilon_2 - \frac{1}{2}\lambda\left(\lambda-S_1\right)
	\end{align*}
	satisfies $\Delta := R_3 - \frac{195\left(1-h_1\right)^2}{1024\lambda} \geq 0$.
	$\Delta$ is monotonically increasing in $\varepsilon_2$, so we set $\varepsilon_2 = 0$ to minimize $\Delta$.
	Regarding $\varepsilon_1$, we have
		\[\frac{\partial R_3}{\partial \varepsilon_1} = 2 - \underbrace{\frac{\lambda}{S_1}}_{\leq 1.005}\textrm{ and }\frac{\partial \Delta}{\partial \varepsilon_1} = 2 - \frac{\lambda}{S_1} - \underbrace{\frac{195\left(1-h_1\right)}{128S_1h_1}}_{\leq 0.017} > 0,\]
	so $\Delta$ is monotonically increasing in $\varepsilon_1$ as well; we set $\varepsilon_1 = 0$ to minimize $\Delta$.
	For $\varepsilon_1 = 0$, we have $S_1 = \lambda$ and $h_1 = 1$, and $R_3 = \Delta = 0$; therefore, we have $\Delta \geq 0$ for all $\varepsilon_1,\varepsilon_2$ under consideration.
	This implies that, for any input corresponding to a critical hypercuboid of type (II), we can either cover $\mathcal{R}$ using only $r_1,r_2$ or the remaining weight $R_3$ is sufficient to cover one of the pockets left uncovered by $r_1$.
\end{proof}%
}

\def\prooflemmaworstcasesrectangles{%
\subsection{Proof of Lemma~\ref{lem:worst-cases-rectangles}}
\label{sec:proof-lemma-worst-cases-rectangles}
In this section, we give the proof of Lemma~\ref{lem:worst-cases-rectangles}; see Fig.~\ref{fig:worst-cases-rectangles}.
\restatethm{\lemmaworstcasesrectangles*}{lem:worst-cases-rectangles}
\begin{proof}
	(1) is clear because $r_1^2$ is the weight of $\mathcal{R}$'s circumcircle.
	Regarding~(2), any covering of $\mathcal{R}$ has to cover all four corners of $\mathcal{R}$.
	The larger disk $r_1$ can only cover at most two corners; the smaller disk $r_2$ can only cover two corners at distance $1$ from each other.
	Therefore $r_1$ must be placed covering two corners at distance $1$ from each other; w.l.o.g., let this be the corners on $\mathcal{R}$'s left side.
	After placing $r_1$ in this way, the area that remains to be covered includes the right corners and two points $p_1, p_2$ on the top and the bottom side at some distance $\delta_1,\delta_2 > 0$ from $\mathcal{R}$'s right side.
	The smaller disk $r_2$ cannot cover these four points.

	Regarding~(3), we place the first disk covering a strip of width $S_1 = \sqrt{4r^2-1} = \sqrt{\frac{\lambda^2}{4} - \frac{3}{8} + \frac{9}{64\lambda^2}}$ as depicted in Fig.~\ref{fig:worst-cases-rectangles}.
	We place the second disk covering a rectangular strip of height $h_2$ of the remaining rectangle.
	The height covered in this way is \begin{align*}
		h_2 &= \sqrt{4r^2 - \left(\lambda-S_1\right)^2} = \sqrt{\frac{\lambda^2}{4} + \frac{5}{8} + \frac{9}{64\lambda^2} - \lambda^2 + 2\lambda S_1 - S_1^2} = \sqrt{1 - \lambda^2 + 2\lambda S_1}\\
		&= \sqrt{1 - \lambda^2 + 2\sqrt{\frac{\lambda^4}{4}-\frac{3\lambda^2}{8} + \frac{9}{64}}} = \sqrt{1-\lambda^2+\sqrt{\lambda^4-\frac{3}{2}\lambda^2+\frac{9}{16}}}\\
		&= \sqrt{1-\lambda^2+\lambda^2-\frac{3}{4}} = \frac{1}{2},
	\end{align*}
	so the second and the third disk suffice to fully cover the remainder of $\mathcal{R}$.

	Regarding~(4), we have to cover all four corners of $\mathcal{R}$ using three identical disks.
	Therefore, w.l.o.g., disk $r_1$ has to be placed covering two corners; these corners can either be at distance $1$ or at distance $\lambda$ from each other.

	In the first case, depicted in Fig.~\ref{fig:worst-cases-rectangles}, assume w.l.o.g. that $r_1$ covers $\mathcal{R}$'s left corners.
	We argue that the three disks cannot cover the entire boundary of $\mathcal{R}$.
	Therefore, w.l.o.g., we may assume that $r_1$ is placed such that it touches at least one of $\mathcal{R}'s$ corners; otherwise, we could push $r_1$ to the left until it does.
	This cannot lead to a previously covered point of $\mathcal{R}$'s boundary becoming uncovered.
	In the following, we argue that we can assume the center $c_1$ to be at $c_1^*$, at height $\frac{1}{2}$ above $\mathcal{R}$'s bottom side and at width $y^* = \sqrt{r_-^2 - 1/4}$ right of $\mathcal{R}$'s left side; see Fig.~\ref{fig:proof-tightness-3d}.
	\begin{figure}[ht!]
		\begin{center}
			\resizebox{.75\textwidth}{!}{\includegraphics{./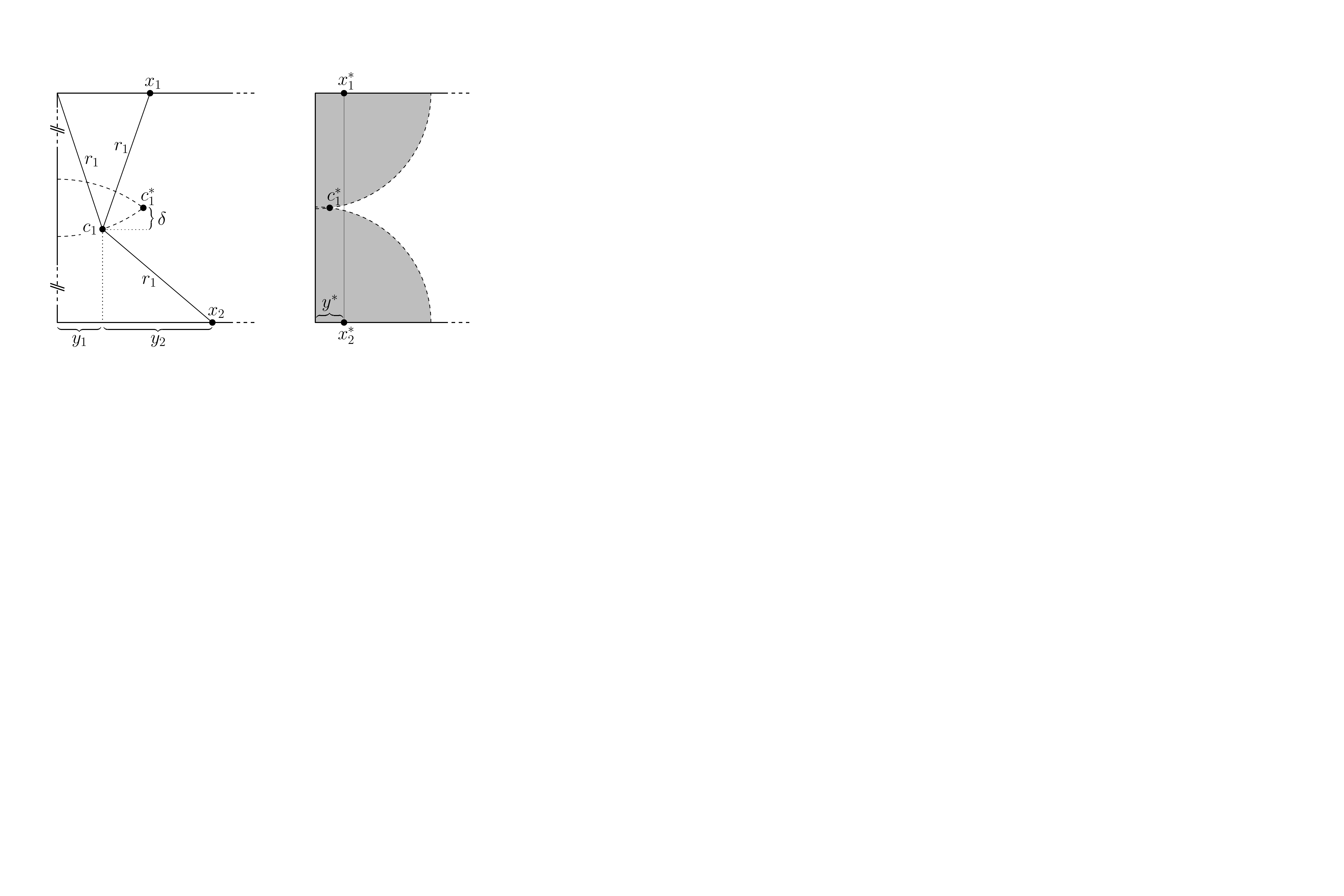}}
		\end{center}
		\caption{
			The first disk $r_1$ must cover $\mathcal{R}$'s left corners.
			Its center $c_1$ must therefore must be within distance $r_-$ of both corners (left, dashed outline).
			We show that placing its center at $c_1^*$, which is at height exactly $\frac{1}{2}$, is optimal w.r.t.\ the part of $\mathcal{R}$'s boundary that remains to be covered after placing $r_1$.
			In other words, we show that moving $c_1$ down as depicted on the right by some $\delta > 0$ moves \emph{both} $x_1$ and $x_2$ further to the left.
		}
		\label{fig:proof-tightness-3d}
	\end{figure}
	Towards this goal, we consider moving $c_1$ down by some $\delta > 0$, as depicted in Fig.~\ref{fig:proof-tightness-3d}; the case of moving $c_1$ up is symmetric.
	It is straightforward to see that this causes the upper intersection point $x_1$ of $\mathcal{R}$ and $r_1$ to move to the left compared to the upper intersection point $x_1^*$ obtained by placing $r_1$ at $c_1^*$.
	Regarding the lower intersection point $x_2$ of $\mathcal{R}$ and $r_1$, its distance to $\mathcal{R}$'s left border is $y \coloneqq y_1 + y_2 = \sqrt{r_-^2 - (1/2 + \delta)^2} + \sqrt{r_-^2 - (1/2 - \delta)^2}$.
	We have \[\frac{dy}{d\delta} = \frac{-\frac{1}{2}-\delta}{y_1} + \frac{\frac{1}{2}-\delta}{y_2} = \underbrace{\left(\frac{1}{2y_2} - \frac{1}{2y_1}\right)}_{<0} - \frac{\delta}{y_1} - \frac{\delta}{y_2} < 0,\]
	which implies that moving $c_1$ down also moves $x_2$ to the left; thus, placing $c_1$ at $c_1^*$ is (inclusion-wise) strictly better than any other placement of $r_1$ w.r.t.\ the set of boundary points that we cover.

	After placing $r_1$ in this manner, the remaining two disks have to cover $x_1,x_2$ and the right corners of $\mathcal{R}$; each disk can cover at most two of these points.
	Moreover, if one disk is placed covering $x_1$ and $x_2$ or both right corners, the remaining region is too large to be covered by the third disk.
	Therefore, the first remaining disk $r_2$ has to cover $x_1$ and $\mathcal{R}$'s upper right corner and the last disk $r_3$ has to cover $x_2$ and $\mathcal{R}$'s lower right corner and the intersection point $x_3$ of $r_2$ and $\mathcal{R}$'s right side.
	W.l.o.g., we may assume that $r_2$ is placed such that it touches $x_1$ and the upper right corner; otherwise, $x_3$ moves upwards.
	As discussed for Statement~(3), if $r_1 = r_2 = r$, when placed in this way, $r_2$ covers a sub-rectangle of height $\frac{1}{2}$; because all our disks are smaller than $r$, the covered height is $h_2 < \frac{1}{2}$.
	This implies that $r_3$ cannot cover both $x_2$ and $x_3$; the distance between these two points is greater than $2r > 2r_-$.

	An analogous argument works for the case that $r_1$ covers two corners at distance $\lambda$ from each other.
\end{proof}%
}

\def\prooflemmasizeboundlarge{%
	\subsection{Proof of Lemma~\ref{lem:size-bound-large}}
	\label{sec:proof-lemma-size-bound-large}
	In this section, we give a proof of Lemma~\ref{lem:size-bound-large}.
	\restatethm{\lemmasizeboundlarge*}{lem:size-bound-large}
	\begin{proof}
		\revsocg{
			In the following, let $E \coloneqq E(\sigma)$; we assume w.l.o.g.\ that $W(D) = E\lambda$.
			First, we observe that $\sigma \geq \hat{\sigma}$ implies $E \geq \frac{195}{256} = E^*(\bar{\lambda})$.
			Because $E^*(\lambda)$ for $\lambda \geq \bar{\lambda}$ is continuous and strictly monotonically increasing,
			there is a unique $\Lambda(E) \geq \bar{\lambda}$ such that $E^*(\Lambda(E)) = E$, given by $\Lambda(E) := 2E + \sqrt{4E^2-2}$.
			Similarly, we observe that $\sigma(E) = E \cdot \left(\Lambda(E) - \frac{2}{\Lambda(E)}\right)$ is the inverse function of $E(\sigma)$.
			If $\lambda \leq \Lambda(E)$, we have $E \geq E^*(\lambda)$ and the result immediately follows from Theorem~\ref{thm:mainRectangles}.

			Otherwise, we apply \textsc{Greedy Splitting} to $D$.
			This yields a partition into two groups $D_1,D_2$; w.l.o.g., let $D_1$ be the heavier one.
			We split $\mathcal{R}$ into two rectangles $\mathcal{R}_1,\mathcal{R}_2$ such that $\frac{W(D_1)}{W(D_2)} = \frac{\abs*{\mathcal{R}_1}}{\abs*{\mathcal{R}_2}}$
			by dividing the longer side (w.l.o.g., the width) of $\mathcal{R}$ in that ratio.
			After the split, we have $E = \frac{W(D)}{\abs*{\mathcal{R}}} = \frac{W(D_1)}{\abs*{\mathcal{R}_1}} = \frac{W(D_2)}{\abs*{\mathcal{R}_2}}$ and $\abs*{\mathcal{R}_2} = \frac{W(D_2)}{E}$.

			If the resulting width of any $\mathcal{R}_i$ is greater than $\Lambda(E)$, we use $D_i$ to inductively apply Lemma~\ref{lem:size-bound-large} to it.
			Otherwise, we apply Theorem~\ref{thm:mainRectangles}; in order to do so, we must show that the skew of the narrower rectangle $\mathcal{R}_2$ is at most $\Lambda(E)$,
			which means proving that its width is at least $\frac{1}{\Lambda(E)}$.
			Because of $W(D_1)-W(D_2) \leq r_1^2 \leq \sigma$, we have $W(D_2) \geq \frac{W(D)-\sigma}{2} = \frac{E\lambda-\sigma(E)}{2}$.
			This implies that the area, and thus the width, of $\mathcal{R}_2$ is $\frac{W(D_2)}{E} \geq \frac{\Lambda(E)-\sigma(E)/E}{2} = \frac{1}{\Lambda(E)}$.
		}
	\end{proof}%
}

\def\proofdetailsweightboundedcovering{%
	\subsection{Weight-Bounded Covering --- Proof of Lemma~\ref{lem:rectanglesSB}}
\label{sec:proof-rectanglesSB}
In this section, we prove Lemma~\ref{lem:rectanglesSB}.
\restatethm{\lemmarectanglessb*}{lem:rectanglesSB}
For the remainder of this proof, let $E = \rectanglesSBEff$ be the covering coefficient Lemma~\ref{lem:rectanglesSB} guarantees.
Assume that all disks in $D$ have radius at most $\rectanglesSBRB$ and total weight $W(D) = E\lambda$.
In this case, our \revsocg{success criteria} consider $\lambda$ and the up to seven largest disk weights $r_1^2,\ldots,r_7^2$.
We make the following observation.
\begin{observation}
\label{obs:at-least-5}
	Due to the weight bound of Lemma~\ref{lem:rectanglesSB}, at least \(\left\lceil\frac{E}{\rectanglesSBRB^2}\right\rceil= 5\) disks are always present.
\end{observation}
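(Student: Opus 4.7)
The plan is to prove Observation~\ref{obs:at-least-5} by a direct pigeonhole/counting argument using only the hypotheses of Lemma~\ref{lem:rectanglesSB} and the assumption, fixed for the remainder of the section, that $W(D) = E\lambda$ with $E = \rectanglesSBEff$ and each $r_i \leq \rectanglesSBRB$.

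First I would write down the two quantitative constraints inherited from Lemma~\ref{lem:rectanglesSB}: (a) since $\mathcal{R}$ has dimensions $\lambda \times 1$ with $\lambda \geq 1$, the total weight satisfies $W(D) = E\lambda \geq E$; and (b) since every radius obeys $r_i \leq \rectanglesSBRB$, every disk contributes weight $r_i^2 \leq \rectanglesSBRB^2$ to $W(D)$. Combining these two bounds gives
\[
  n \cdot \rectanglesSBRB^2 \;\geq\; \sum_{i=1}^n r_i^2 \;=\; W(D) \;\geq\; E,
\]
so $n \geq E/\rectanglesSBRB^2$. Since $n$ is an integer, $n \geq \lceil E/\rectanglesSBRB^2 \rceil$.

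It then remains to evaluate the ceiling. I would plug in the concrete values $E = \rectanglesSBEff$ and $\rectanglesSBRB^2 = \rectanglesSBWB$ to get $E/\rectanglesSBRB^2 = \rectanglesSBEff/\rectanglesSBWB$, verify that this quotient lies strictly between $4$ and $5$ (a single arithmetic check), and conclude that the ceiling equals $5$, hence $n \geq 5$.

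There is no real obstacle here: the argument is a one-line pigeonhole estimate followed by a numerical check. The only thing worth stating carefully is that the bound $W(D) \geq E$ uses $\lambda \geq 1$ (so one cannot shrink the rectangle further) and that the weight upper bound per disk is $\rectanglesSBRB^2$ rather than $\rectanglesSBRB$, since weights, not radii, are being summed.
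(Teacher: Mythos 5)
Your proof is correct, and it is exactly the argument the paper implicitly relies on (the paper states this as an \emph{Observation} without spelling out a proof precisely because the pigeonhole bound $n\,\rectanglesSBRB^2 \geq W(D) \geq E$ is immediate). The numerical check $E/\rectanglesSBRB^2 = \rectanglesSBEff/\rectanglesSBWB \approx 4.34$ is right, so the ceiling is $5$.
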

For disks $r_6$ and $r_7$, \revsocg{recall that we} also consider the cases of $r_6 = r_7 = 0, R_8 \coloneqq \sum_{i=8}^{n}r_i^2 = 0$ or $r_7 = 0, R_8 = 0$;
\revsocg{in this way, we handle the induction base and step simultaneously.}
In the following, we describe the routines used by our algorithm.
If they are not straightforward, we also describe \revsocg{the success criteria} by which we ensure that the routine works given only the seven largest disks and $\lambda$.

\revsocg{In order to apply Lemma~\ref{lem:rectanglesSB} to a rectangle $\mathcal{A}$,
we have to scale the rectangle and the disks such that $\mathcal{A}$'s shorter side has length $1$.
Therefore, the radius bound required to apply the lemma to $\mathcal{A}$ depends on its shorter side.

\begin{definition}
	Let $\mathcal{A}$ be any rectangle of dimensions $\beta_{\mathcal{A}} \times \gamma_{\mathcal{A}}$.
	By \[\rho({\mathcal{A}}) = \rectanglesSBRB \cdot \min\left(\beta_{\mathcal{A}},\gamma_{\mathcal{A}}\right)\] we denote the radius bound required to apply Lemma~\ref{lem:rectanglesSB} to $\mathcal{A}$.
	If $r_i \leq \rho(\mathcal{A})$ holds for a disk $r_i$, we say that $r_i$ \emph{satisfies the radius bound w.r.t.\ $\mathcal{A}$} or $r_1$ \emph{fits} $\mathcal{A}$.
	Similarly, for a collection $D_{\mathcal{A}}$ of disks with largest disk $r_1$, we say that $D_{\mathcal{A}}$ \emph{fits} $\mathcal{A}$ if $r_1$ fits $\mathcal{A}$.
\end{definition}
}

\subsubsection{Balanced and Unbalanced Recursive Splitting}\sbnewsubsec{strat:rectanglesSB-eurs}
\psbstratref{strat:rectanglesSB-eurs}{1} The first routine is based on splitting $D$ into two groups of approximately the same total weight and recursing.
In order to do this, we first compute the partition of the seven largest disks into two groups $D'_1,D'_2$ minimizing the difference $\Delta \coloneqq \abs*{W(D'_1) - W(D'_2)}$.
Starting with this subdivision, we apply \textsc{Greedy Splitting} to distribute the remaining disks to the two groups, resulting in a partition of the disks into collections $D_1$ and $D_2$.
We subdivide $\mathcal{R}$ into two rectangles $\mathcal{R}_1,\mathcal{R}_2$ according to the weights $W(D_i)$ such that
$E = \frac{W(D)}{\abs*{\mathcal{R}}} = \frac{W(D_1)}{\abs*{\mathcal{R}_1}} = \frac{W(D_2)}{\abs*{\mathcal{R}_2}}$ \revsocg{by splitting $\mathcal{R}$'s width in that ratio
and recursively apply Lemma~\ref{lem:rectanglesSB} on $\mathcal{R}_i$ using disks $D_i$}.

\revsocg{Our success criterion is as follows.
Because the rectangle is split according to the actual weights $W(D_i)$ such that both sides require a covering coefficient of exactly $E$, we do not waste any weight, i.e., the disk weight always suffices to recurse.
We only have to ensure that the largest disk in each $D_i$ satisfies the radius bound of Lemma~\ref{lem:rectanglesSB} w.r.t.\ $\mathcal{R}_i$.
In particular, because we cannot assume $r_1$ to end up in the larger group, we want to ensure that $r_1$ fits the radius bound w.r.t.\ the smallest possible $\mathcal{R}_i$.
We obtain an upper bound $\gamma \geq |W(D_1)-W(D_2)|$ as follows.
If $\Delta > R_8$, \textsc{Greedy Splitting} adds all remaining disks to the smaller of $D'_1, D'_2$; we thus know the exact subdivision $D_1,D_2$ and can compute $\gamma$ accordingly.
Otherwise, \textsc{Greedy Splitting} either still adds all remaining disks to the smaller of $D'_1, D'_2$, turning it into the larger group, or adds disks to both $D'_1,D'_2$.
In either case, the smallest disk in the larger group is at most $r_7^2$; therefore, we may use $\gamma = r_7^2$.
We then use our bound $\gamma$ to compute a lower bound $\frac{\lambda - \gamma/E}{2}$ on the width of $\mathcal{R}_i$ and check whether $r_1$ satisfies the resulting radius bound.
Note that we can analogously formulate the success criterion using fewer than $7$ disks.
}

\psbstratref{strat:rectanglesSB-eurs}{2} \revsocg{The second routine is also based on splitting $D$ into two groups; however, in this case, we do not try to make the subdivision as balanced as possible.
Instead, we compute the minimum $\ell$ such that $r_1$ satisfies the radius bound of Lemma~\ref{lem:rectanglesSB} w.r.t.\ a rectangle with shorter side $\ell$.
Starting with $r_1$, we collect disks in $D_1$ until $W(D_1) \geq E\ell$; all remaining disks are placed in $D_2$, which must not be empty.
We again split $\mathcal{R}$ into $\mathcal{R}_1,\mathcal{R}_2$ according to the weights $W(D_1),W(D_2)$, and recurse using Lemma~\ref{lem:rectanglesSB}.

Our success criterion is as follows.
Again, we do not waste any disk weight; thus, recursion cannot fail due to weight.
Moreover, $r_1$ satisfies the radius bound w.r.t.\ $\mathcal{R}_1$ by construction.
However, we have to ensure that $D_2$ is nonempty and that its largest disk $r_k$ satisfies the radius bound w.r.t.\ $\mathcal{R}_2$.
We begin by computing $\ell = \frac{r_1}{\rectanglesSBRB}$.
We need to find (i) a bound $W_{\ell} \geq W(D_1)$ on the weight of $D_1$ and (ii) a bound on $r_k$.

We add disks from $\{ r_1^2, \ldots, r_7^2 \}$ to $D_1$ until we either (a) run out of disks or (b) exceed a total weight of $E\ell$.
In case (a), the routine keeps adding disks until $E\ell$ is exceeded, at which point $W(D_1) \leq E\ell + r_7^2 \eqqcolon W_{\ell}$, and $r_k \leq r_7$.
In case (b), we compute $W_{\ell}$ from the disks added to $W(D_1)$, and use $r_{i+1}$ as bound on $r_k$, where $r_i$ is the last disk possibly added to $D_1$.
In either case, we can exclude $D_2 = \emptyset$ if $W_{\ell} < E\lambda$.
Moreover, we use $W_{\ell}$ to compute a lower bound on $W(D_2)$ and the width $\beta_{\mathcal{R}_2}$ of $\mathcal{R}_2$, and check whether our bound on $r_k$ satisfies the radius bound w.r.t.\ $\beta_{\mathcal{R}_2}$.
}

\subsubsection{Building a Strip}\sbnewsubsec{strat:rectanglesSB-bas}
\psbstratref{strat:rectanglesSB-bas}{1} \revsocg{The next routine is as follows.
We choose some disks $\emptyset \neq T \subseteq \{r_1,\ldots,r_7\}$ and cover a rectangular strip $\mathcal{S}$ of area $\abs*{\mathcal{S}} = \frac{W(T)}{E}$.
In the following, we assume that $S$ is vertical, i.e., of height $1$ and width $\frac{W(T)}{E}$; the case of horizontal strips is analogous; see Fig.~\ref{fig:rectangleSB-vertical-strip} for examples.
After covering $\mathcal{S}$, we inductively apply Lemma~\ref{lem:rectanglesSB} on the remaining rectangle $\mathcal{R}_2 \coloneqq \mathcal{R}\setminus \mathcal{S}$.
To cover $\mathcal{S}$, we consider all possible subdivisions of $T$ into up to $\abs{T}$ rows.
Each row is then built from its disks by placing the disks covering a rectangle of width corresponding to the strip's width and maximum possible height.
In addition to these coverings of $\mathcal{S}$ based on multiple rows, we also check the configuration depicted in Fig.~\ref{fig:rectangleSB-vertical-strip}~(d).

Our success criterion is as follows.
Because the covering coefficient achieved on $\mathcal{S}$ is $E$, recursing on $\mathcal{R}_2$ can not fail due to missing disk weight.
However, we have to check whether $\mathcal{S}$ can be covered in the manners mentioned above,
which is straightforward because it only involves a constant number of known disks placed according to a constant number of possible configurations.
Additionally, we check that the largest disk not in $T$ --- or $r_7$, if $T = \{r_1,\ldots,r_7\}$ --- satisfies the radius bound w.r.t. $\mathcal{R}_2$.
}

\subsubsection{Wall Building}\sbnewsubsec{strat:rectanglesSB-wb}
\psbstratref{strat:rectanglesSB-wb}{1} The next routine is based on the idea of covering a rectangular strip of fixed length $\ell$ and variable width $b$ by stacking disks on top of each other, using the following lemma; see Fig.~\ref{fig:rectangleSB-wall-building}.
Intuitively speaking, this works under the following conditions.
(1) The largest disk is not too large when compared to the length of the strip.
(2) The weight of the individual disks does not decrease too much before (3) a certain total weight is exceeded.
\begin{restatable}{lemma}{lemmawallbuilding}
	\label{lem:rectangleSB-wall-building}
	Let $E > \frac{1}{2}$ be some fixed covering coefficient that we want to realize.
	Let $\ell > 0$ be some fixed strip length.
	Let $q_1 \geq q_2 \geq \ldots \geq q_m > 0$ be a sequence of disk radii such that \begin{align*}
		\textup{(1) }&q_1 \leq \frac{\ell}{\sqrt{2}} \cdot \left(1 - \frac{1}{\sqrt{1+\sqrt{1-\frac{1}{4E^2}}}}\right)\textup{, (2) }q_m \geq q_1 \cdot \sqrt{4E^2-2E\sqrt{4E^2-1}}\textup{, and}\\
		\textup{(3) }&\sum\limits_{i=1}^{m} q_i^2 \geq \sqrt{2} \cdot q_1 \cdot E \cdot L(\ell)\textup{, where } L(\ell) \coloneqq \frac{\ell}{\sqrt{1+\sqrt{1-\frac{1}{4E^2}}}}.
	\end{align*}
	Then there is some $\sqrt{2}q_1 \geq b > 0$ such that we can cover a rectangular strip of dimensions $\ell \times b$ with disks from $q_1,\ldots,q_m$, using no more than $E \cdot \ell \cdot b$ weight in total.
\end{restatable}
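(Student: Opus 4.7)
The plan is to implement the wall-building procedure illustrated in Fig.~\ref{fig:rectangleSB-wall-building}: starting from an initial width $b_0 := \sqrt{2}q_1$, stack disks $q_1, q_2, \ldots$ one above the other, each disk of radius $q_i$ covering a rectangle of width $b$ and height $h_i(b) := 2\sqrt{q_i^2 - b^2/4}$, until the accumulated height $H_k(b) := \sum_{i=1}^k h_i(b)$ at the initial width $b_0$ crosses the threshold $L(\ell)$; then continuously decrease $b$ to some $b^* \leq b_0$ at which $H_k(b^*) = \ell$ exactly. The bound $W_k := \sum_{i=1}^k q_i^2 \leq E \ell b^*$ will follow, provided every disk used has local efficiency at most $E$ at the final width $b^*$.

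I first analyze the efficiency of a single disk. For radius $q$ placed on a strip of width $b \leq 2q$, the efficiency $q^2/(b\, h(b,q))$ is U-shaped in $b$ with minimum $\tfrac12$ at $b = \sqrt{2}q$; the equation efficiency $=E$ has two roots in $b$, giving the valid range $[q/C_+,\, q/C_-]$ with $C_\pm := \sqrt{2E^2 \pm E\sqrt{4E^2-1}}$. Setting $b_1 := q_1/C_+$, one verifies that Condition~(2) amounts to $q_m \geq \sqrt{2}\, C_- q_1$, equivalently $q_i/C_- \geq b_0$ for all $i$; since also $q_i/C_+ \leq q_1/C_+ = b_1$ for all $i$, every disk $q_i$ stays in its valid efficiency range throughout $b \in [b_1, b_0]$.

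For the stacking step, using $q_i^2 \leq E b_0\, h_i(b_0)$ at $b_0$ together with Condition~(3) yields $H_m(b_0) \geq \sum q_i^2/(E b_0) \geq L(\ell)$, so the process terminates with some $k \leq m$ and $H_k(b_0) \geq L(\ell)$. I then apply the intermediate value theorem on $b \in [b_1, b_0]$ to locate $b^*$ with $H_k(b^*) = \ell$. The upper endpoint $H_k(b_0) \leq \ell$ follows from Condition~(1), which rewritten states $\sqrt{2}q_1 \leq \ell - L(\ell)$: since $h_k(b_0) \leq h_1(b_0) = \sqrt{2}q_1$, we get $H_k(b_0) \leq L(\ell) + \sqrt{2}q_1 \leq \ell$. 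The crux is the lower endpoint $H_k(b_1) \geq \ell$, which reduces to the per-disk height-ratio estimate
\[
\frac{h_i(b_1)^2}{h_i(b_0)^2} \;=\; 1 + \frac{s\, q_1^2/2}{q_i^2 - q_1^2/2} \;\geq\; 1+s, \qquad s := \sqrt{1-\tfrac{1}{4E^2}},
\]
which is equivalent to $q_1 \geq q_i$; since $\sqrt{1+s} = \ell/L(\ell)$, summing yields $H_k(b_1) \geq \sqrt{1+s}\cdot H_k(b_0) \geq \ell$.

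Finally, at $b^*$ every disk has efficiency $\leq E$, so $W_k = \sum_{i=1}^k q_i^2 \leq E b^* H_k(b^*) = E \ell b^*$; together with $b^* \leq b_0 = \sqrt{2}q_1$ this yields the claim. The main obstacle is bookkeeping: matching the constants $C_\pm$ and $b_0, b_1$ with the expressions $\sqrt{1+s}$, $\sqrt{4E^2 - 2E\sqrt{4E^2-1}}$, and $\tfrac{1}{\sqrt{2}}(1 - 1/\sqrt{1+s})$ appearing in Conditions~(1)--(3), together with the per-disk height-ratio computation, which pinpoints why $L(\ell)$ is the correct stacking threshold at the initial width.
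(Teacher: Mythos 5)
Your proposal is correct and follows essentially the same strategy as the paper's proof: stack disks at initial width $b_0=\sqrt{2}q_1$ until the height crosses $L(\ell)$ (Condition~(3)), use Condition~(1) to bound the overshoot by $\ell$, then shrink the width until the height is exactly $\ell$, with Condition~(2) guaranteeing per-disk efficiency at most $E$ throughout. Your bookkeeping is in fact slightly cleaner than the paper's: by exhibiting the explicit interval $[q_1/C_+,\sqrt{2}q_1]$ on which every disk's efficiency is at most $E$ and checking the per-disk ratio $h_i(b_1)/h_i(b_0)\geq\sqrt{1+s}$ before applying the intermediate value theorem, you avoid the paper's derivative argument for $X(\delta_b,q_i)$ and its somewhat terse reduction to checking only $q_1$.
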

\begin{proof}
	We defer the proof of Lemma~\ref{lem:rectangleSB-wall-building} to Section~\ref{sec:proof-lemma-wall-building}.
\end{proof}

\revsocg{
	Starting with disk $r_7$, in non-increasing order of weight, we search for a sequence $q_1,\ldots,q_m$ of consecutive disks satisfying the conditions of 
	Lemma~\ref{lem:rectangleSB-wall-building} for covering coefficient $E$ and a strip of length $\ell = 1$; see Fig.~\ref{fig:rectangleSB-wall-building}.
	We then determine the width $b$ according to the lemma, subdivide $\mathcal{R}$ into two rectangles $\mathcal{R}_1$ of dimensions $b \times 1$ and $\mathcal{R}_2$ of dimensions $\left(\lambda-b\right) \times 1$,
	cover $\mathcal{R}_1$ using Lemma~\ref{lem:rectangleSB-wall-building} and recurse on $\mathcal{R}_2$ using the remaining disks if $r_1$ satisfies the radius bound w.r.t. $\mathcal{R}_2$.

	If we do not find such a sequence, or if $r_1$ does not satisfy the radius bound w.r.t. the resulting $\mathcal{R}_2$, we collect disks in $D_1$, starting from the \emph{smallest} disk $r_n$.
	After adding a disk $r_i$ we compute the width $\beta_i = \frac{r_i}{\rectanglesSBRB}$ of the smallest rectangle $\mathcal{R}_i$ of height $1$ such that $r_i$ satisfies the radius bound of Lemma~\ref{lem:rectanglesSB} w.r.t.\ $\mathcal{R}_i$.
	If $W(D_1) \geq E\beta_i$ holds after adding disk $r_i$ to $D_1$, we subdivide $\mathcal{R}$ into two rectangles of widths $\frac{W(D_1)}{E}$ and $\lambda - \frac{W(D_1)}{E}$ and recurse using $D_1$ and $D \setminus D_1$.

	Because the routine heavily depends on disks $r_7,\ldots,r_n$, it is not straightforward to find a success criterion for it; we use the criterion given by the following lemma.
	\begin{lemma}
		\label{lem:wall-building-suff-crit}
		Let $s \coloneqq \sqrt{4E^2-2E\sqrt{4E^2-1}} \approx 0.7974$ and let $L(\ell)$ be defined as in Lemma~\ref{lem:rectangleSB-wall-building}.
		For $k \in \mathbb{N}$, let  $t_k \coloneqq r_7 \cdot s^{k+1}$ and \[a_k \coloneqq E\lambda - \sum\limits_{i=1}^{6}r_i^2 - \sqrt{2}r_7 \cdot E \cdot L(1) \cdot \sum\limits_{i=0}^{k}s^i\text{.}\]
		Routine~\sbstratref{strat:rectanglesSB-wb}{1} is guaranteed to be successful if the following conditions hold.
		\[ \text{(1)\ } r_1 \leq \rectanglesSBRB \cdot \left(\lambda-\sqrt{2}r_7\right)\text{,}\]
		\[ \text{(2)\ } r_7 \leq \frac{1}{\sqrt{2}} \cdot \left(1 - \frac{1}{\sqrt{1+\sqrt{1-\frac{1}{4E^2}}}}\right) \approx 0.1433 \text{, and }\]
		\[ \text{(3)\ there is $0 \leq k \in \mathbb{N}$ such that }a_k \geq \frac{t_k\cdot E}{\rectanglesSBRB}\text{ and } r_1 \leq \rectanglesSBRB\left(\lambda - \frac{t_k}{\rectanglesSBRB} - \frac{t_k^2}{E}\right).\]
	\end{lemma}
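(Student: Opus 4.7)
The plan is to show that, under conditions~(1)--(3), at least one branch of Routine~\sbstratref{strat:rectanglesSB-wb}{1} succeeds. First, I would handle the case where the first branch produces a valid sequence $q_1, \ldots, q_m$ starting at some $q_1 \leq r_7$. Condition~(2) is exactly the statement that $r_7$ does not exceed the right-hand side of condition~(a) of Lemma~\ref{lem:rectangleSB-wall-building} for $\ell = 1$; since that bound depends only on $\ell$ and $E$, any $q_1 \leq r_7$ inherits condition~(a). Lemma~\ref{lem:rectangleSB-wall-building} then covers a strip $\mathcal{R}_1$ of width $b \leq \sqrt{2}\, q_1 \leq \sqrt{2}\, r_7$ with efficiency exactly $E$, and condition~(1) guarantees that $r_1$ fits the complementary rectangle $\mathcal{R}_2$, so recursion on $\mathcal{R}_2$ succeeds by induction.

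Otherwise, suppose the first branch fails for every starting disk in $\{r_7, r_8, \ldots\}$. I would partition the disks $r_7, r_8, \ldots$ into maximal \emph{phases}: phase $j$ begins with a disk $q^{(j)}$ (with $q^{(1)} \leq r_7$) and contains all subsequent consecutive disks of radius at least $s\, q^{(j)}$; the first disk violating this, of radius strictly less than $s\, q^{(j)}$, opens phase $j+1$. Thus $q^{(j+1)} < s\, q^{(j)}$, which inductively gives $q^{(j)} \leq r_7 s^{j-1}$. Because the first branch fails for every starting disk, in particular its attempt starting at $q^{(j)}$ with the full phase-$j$ sequence fails. Conditions~(a) and~(b) of Lemma~\ref{lem:rectangleSB-wall-building} hold for this attempt (by condition~(2) and by construction of phase~$j$), so condition~(c) must fail, i.e., the total weight of phase~$j$ is strictly less than $\sqrt{2}\, q^{(j)} E L(1) \leq \sqrt{2}\, r_7 s^{j-1} E L(1)$. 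Summing over $j = 1, \ldots, k+1$ gives a geometric series: the total weight of phases~$1$ through $k+1$ is strictly less than $\sqrt{2}\, r_7 E L(1) \sum_{i=0}^{k} s^i$, so the \emph{post-phase-$(k+1)$ disks} have total weight at least $a_k$ and individual radius at most $t_k$.

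The second branch adds disks in non-decreasing order of weight, so it first absorbs the post-phase-$(k+1)$ disks. By the first half of condition~(3), $a_k \geq E t_k/\rectanglesSBRB$, so by the time all such disks have been added, $W(D_1) \geq E t_k/\rectanglesSBRB \geq E r_j/\rectanglesSBRB$, where $r_j \leq t_k$ denotes the current largest disk in $D_1$; the success test therefore passes no later than this moment, with $r_j \leq t_k$. Just before the addition of $r_j$ that triggered success, the test with previous largest disk $r_{j+1} \leq r_j$ failed (vacuously if $r_j$ was added first), so $W(D_1) - r_j^2 < E r_{j+1}/\rectanglesSBRB \leq E r_j/\rectanglesSBRB$; hence the width of $\mathcal{R}_1 = W(D_1)/E$ is strictly less than $r_j/\rectanglesSBRB + r_j^2/E \leq t_k/\rectanglesSBRB + t_k^2/E$. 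The second half of condition~(3) now guarantees that $r_1$ fits $\mathcal{R}_2$, and since $\mathcal{R}_1$ is covered with efficiency exactly $E$, recursion on $\mathcal{R}_2$ succeeds by induction.

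The main obstacle is the phase bookkeeping: one must argue carefully that ``the first branch finds no valid sequence'' implies failure of condition~(c) for the full phase-$j$ sequence for each $j$, and one must handle the boundary case in which the disk list is exhausted before $k+1$ phases form (the formula for $a_k$ uses only an upper bound on consumed weight, so the actual remaining weight is at least $a_k$ in that case too). The geometric-series collapse that turns the factor-$s$ shrinkage between successive phases into the single parameter $k$ is exactly what makes condition~(3) a finitely-checkable criterion suitable for the automatic prover.
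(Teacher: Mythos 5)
Your proposal is correct and follows essentially the same argument as the paper: Condition~(1) handles the case where wall building succeeds; otherwise Condition~(2) forces every failure of Lemma~\ref{lem:rectangleSB-wall-building} to be a weight failure, which is iterated to bound the weight of disks of radius at least $t_k$ by a geometric series, so that at least weight $a_k$ remains in disks below $t_k$ and Condition~(3) guarantees the fallback branch (collecting small disks into $D_1$) terminates with a split on which both sides can recurse. Your ``phases'' are just a disjoint repackaging of the paper's repeated application of the same weight bound, and the small boundary cases you flag (list exhausted before $k+1$ phases, $D_1 = D$) are vacuous under Condition~(3), exactly as in the paper.
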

	\begin{proof}
		The largest width $b$ that could result from Lemma~\ref{lem:rectangleSB-wall-building} using a sequence $q_1,\ldots,q_m$ with $q_1 \leq r_7$ is $\sqrt{2}r_7$.
		By Condition~(1), $r_1$ satisfies the radius bound w.r.t. a rectangle of width $\lambda - \sqrt{2}r_7$.
		Therefore, if the routine finds a sequence $q_1,\ldots,q_m$ satisfying the preconditions of Lemma~\ref{lem:rectangleSB-wall-building}, it succeeds and we are done.

		By Condition~(2), $r_7$ and all smaller disks satisfy Condition~(1) of Lemma~\ref{lem:rectangleSB-wall-building}.
		Therefore, if the routine cannot find a sequence $q_1,\ldots,q_m$, that must be due to Conditions~(2)~and~(3) of Lemma~\ref{lem:rectangleSB-wall-building}.
		Consider any disk $r_i \leq r_7$ and the smallest disk $r_j \leq r_i$ such that $r_j \geq s \cdot r_i$.
		The total weight $\sum_{u=i}^{j}r_u^2$ of disks between $r_i$ and $r_j$ must be less than $\sqrt{2}r_i \cdot E \cdot L(1)$,
		as otherwise $r_i,\ldots,r_j$ would satisfy the conditions of Lemma~\ref{lem:rectangleSB-wall-building}.
		Applying this to $r_7$ implies that the total weight of all disks $r_i \geq s \cdot r_7 = t_0$ is at most $\sum_{u=1}^{6} r_u^2 + \sqrt{2}r_7 \cdot E \cdot L(1) \cdot s^0$;
		therefore, the weight of disks $r_i < t_0$ is at least $E\lambda - \sum_{u=1}^{6} r_u^2 - \sqrt{2}r_7 \cdot E \cdot L(1) \cdot s^0 = a_0$. 
		More generally, repeatedly applying this to $r_7$ yields that $a_k$ is a lower bound on the total weight of disks of radius $r_i < t_k$.
		
		Due to Condition~(3), there is a $k$ for which $a_k \geq \frac{t_k \cdot E}{\rectanglesSBRB}$.
		Thus, there is a set $D'_k = \{r_v,r_{v+1},\ldots,r_n\}, r_v < t_k$ such that $W(D'_k) \geq \frac{t_k E}{\rectanglesSBRB}$.
		Let $D_k$ be the smallest such set; by the bound $a_k$, we have $D_k \neq \emptyset$.
		Moreover, due to being smallest possible, $W(D_k) < \frac{t_k \cdot E}{\rectanglesSBRB} + t_k^2$.
		By $W(D_k) \geq a_k \geq \frac{t_k E}{\rectanglesSBRB}$, the disks from $D_k$ can cover a rectangle $\mathcal{R}_1$ of width $\frac{W(D_k)}{\rectanglesSBRB \cdot E} \geq \frac{t_k}{\rectanglesSBRB}$ by recursion, also satisfying the radius bound.
		Therefore, if $D_k \neq D$ and $r_1$ satisfies the radius bound of Lemma~\ref{lem:rectanglesSB} w.r.t. the remaining rectangle $\mathcal{R}_2$, we are done.
		If $D_k = D$, then $r_1 > 0$ cannot satisfy the radius bound w.r.t.\ the empty remaining rectangle $\mathcal{R}_2$.
		Furthermore, due to $W(D_k) < \frac{t_k \cdot E}{\rectanglesSBRB} + t_k^2$, the width of $\mathcal{R}_2$ is at least $\lambda - \frac{t_k}{\rectanglesSBRB} - \frac{t_k^2}{E}$;
		the second part of Condition~(3) guarantees that $r_1$ satisfies the radius bound w.r.t. $\mathcal{R}_2$.
	\end{proof}
	All quantities occurring in this lemma are known except for $k$; in our implementation, we check the preconditions for $k=0,\ldots,32$ and ignore the routine if none of these values work.
	We can also adapt this lemma and the routine to use fewer than the 7 largest disks using analogous arguments.
}

\subsubsection{Placing \texorpdfstring{$r_1$}{r1} in a Corner}\sbnewsubsec{strat:rectanglesSB-r1c}
\psbstratref{strat:rectanglesSB-r1c}{1} The routine described in this section leverages Lemma~\ref{lem:rectangleSB-wall-building} to handle the case of a single large disk $r_1$.
The idea is to place $r_1$ in the lower left-hand corner, filling up the space above $r_1$ using Lemma~\ref{lem:rectangleSB-wall-building} and using recursion to handle the remaining region to the right of $r_1$; see Fig.~\ref{fig:rectanglesSB-r1-wall-building}.

To be more precise, we begin by placing $r_1$ in the lower left-hand corner of $\mathcal{R}$ such that $r_1$ covers a square of side length $\ell \coloneqq \sqrt{2}r_1$.
We subdivide the remaining region into a rectangle $\mathcal{A}$ \revsocg{of dimensions $\ell \times \left(1-\ell\right)$} above $r_1$ and a rectangle $\mathcal{B}$ \revsocg{of dimensions $\left(\lambda-\ell\right) \times 1$} to the right of $r_1$; see Fig.~\ref{fig:rectanglesSB-r1-wall-building}.%
\revsocg{
	We then use the remaining disks to cover $\mathcal{A}$ and $\mathcal{B}$.
	For $\mathcal{A}$, we consider the following options.
	\begin{enumerate}
		\item[$\mathcal{A}.1$] Starting with the smallest disk $r_n$, we create a collection $D_{\mathcal{A}}$ of disks that we use to apply Lemma~\ref{lem:rectanglesSB} to $\mathcal{A}$ recursively.
		\item[$\mathcal{A}.2$] We cover $\mathcal{A}$ by horizontal rows built using Lemma~\ref{lem:rectangleSB-wall-building}; see Fig.~\ref{fig:rectanglesSB-r1-wall-building}.
			Beginning with $r_7$ and continuing in decreasing order of radius, we build horizontal rows of fixed width $\ell = \sqrt{2}r_1$ and variable height $b$ using Lemma~\ref{lem:rectangleSB-wall-building}.
			Each row is built by adding disks until either (a) Condition~(2)~of~Lemma~\ref{lem:rectangleSB-wall-building} is violated or (b) Condition~(3) is met.
			In case (a), the disks added to this \emph{incomplete} row so far are excluded for covering $\mathcal{A}$ and added to the collection of disks used to cover $\mathcal{B}$.
			In case (b), we complete the row according to Lemma~\ref{lem:rectangleSB-wall-building} and place it on top of $r_1$ and the previously built rows.
	\end{enumerate}
	In either case, the remaining disks are used to cover $\mathcal{B}$, for which we consider the following options.
	\begin{enumerate}
		\item[$\mathcal{B}.1$] We recurse on $\mathcal{B}$ using Lemma~\ref{lem:rectangleSB-wall-building} and all remaining disks.
		\item[$\mathcal{B}.2$] We place $r_2$ and $r_3$ covering a strip of width $\lambda-\ell$ and maximum possible height at the 
			bottom of $\mathcal{B}$ and recurse on the remaining rectangle using Lemma~\ref{lem:rectangleSB-wall-building} and the remaining disks.
		\item[$\mathcal{B}.3$] Analogous to $\mathcal{B}.2$, but using $r_2$ -- $r_4$ for the strip at the bottom of $\mathcal{B}$ instead of just $r_2$ and $r_3$.
	\end{enumerate}
	
	Like Routine~\sbstratref{strat:rectanglesSB-wb}{1}, this routine heavily relies on the small disks $r_8,\ldots,r_n$ and it is not straightforward to give a success criterion based on $\lambda$ and $r_1,\ldots,r_7$.
	We use the criterion given by the following lemma; its preconditions can be checked knowing only $\lambda$ and $r_1,\ldots,r_7$.
	Note that, in particular, the dimensions of $\mathcal{A}$ and $\mathcal{B}$ can be computed based on $\lambda$ and $r_1$.
	\begin{lemma}
		Let $s \coloneqq \sqrt{4E^2-2E\sqrt{4E^2-1}}$ as in Lemma~\ref{lem:wall-building-suff-crit} and $L(\ell)$ as in Lemma~\ref{lem:rectangleSB-wall-building}.
		Let $k$ be the smallest non-negative integer such that $r_7 \cdot s^k \leq \rho(\mathcal{A})$.
		Let $w_g \coloneqq (2E-1)r_1^2$,
		\[w_k \coloneqq E\cdot (\lambda-\abs*{\mathcal{A}}) - \sum\limits_{i=1}^{6}r_i^2 - L\left(\sqrt{2}r_1\right) \cdot \sqrt{2}r_7 \cdot \sum\limits_{i=0}^k s^i\text{, and}\]
		\[w_{\infty} \coloneqq E\lambda - \sum\limits_{i=1}^{7}r_i^2 - \frac{E \cdot L\left(\sqrt{2}r_1\right) \cdot \sqrt{2}r_7}{1-s}.\]
		Let $h_{23} \geq 0$ be the height of the tallest rectangular strip $\mathcal{S}_{23}$ at the bottom of $\mathcal{B}$ that can be fully covered by $r_2,r_3$ (case $\mathcal{B}.2$),
		and, analogously, let $h_{234}\geq 0$ be the height of the tallest rectangular strip $\mathcal{S}_{234}$ coverable by $r_2,r_3,r_4$ (case $\mathcal{B}.3$).
		Let
		\[ w_2 \coloneqq \begin{cases}0\text{,} & \text{if } r_2\text{ fits }\mathcal{B}\text{,}\\
		-\infty\text{,} & \text{otherwise,}\end{cases}\ \,
			w_{23} \coloneqq \begin{cases}E\cdot\left(\lambda-\sqrt{2}r_1\right)\cdot h_{23}-r_2^2-r_3^2\text{,} & \text{if }r_4\text{ fits }\mathcal{B}\setminus \mathcal{S}_{23}\text{,}\\
			-\infty\text{,} & \text{otherwise, and}\end{cases}
		\]
		\[
			w_{234} \coloneqq \begin{cases}E\cdot\left(\lambda-\sqrt{2}r_1\right)\cdot h_{234}-r_2^2-r_3^2-r_4^2\text{,} & \text{if }r_5\text{ fits }\mathcal{B}\setminus \mathcal{S}_{234}\text{,}\\
			-\infty\text{,} &  \text{otherwise.}\end{cases}
		\]

		\noindent Routine~\sbstratref{strat:rectanglesSB-r1c}{1} is guaranteed to succeed if the following conditions hold.
		\[(1)\ w_g + \max \left(w_2,w_{23},w_{234} \right) \geq \max\left\{2E\cdot r_1 \cdot r_7,\, \min\left\{r_7^2, \rho(\mathcal{A})^2\right\}\right\}\text{,} \]
		\[(2)\ r_7 \leq r_1 \cdot \left(1 - \frac{1}{\sqrt{1+\sqrt{1-\frac{1}{4E^2}}}}\right) \approx 0.20263 \cdot r_1\text{,} \]
		\[(3)\ \max\left\{w_k,w_{\infty}\right\} \geq E\abs*{\mathcal{A}}\text{, and}\]
		\[(4)\ E\lambda - \sum\limits_{i=1}^{7}r_i^2 \geq E\abs*{\mathcal{A}}\text{.}\]
	\end{lemma}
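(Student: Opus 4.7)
The plan is to show that, under Conditions~(1)--(4), at least one of the six combinations of an $\mathcal{A}$-option with a $\mathcal{B}$-option from the routine succeeds. My weight accounting rests on the following observation: placing $r_1$ to cover its inscribed square consumes weight $r_1^2$ on an area of $2r_1^2$, whereas the target covering coefficient $E$ would make it responsible for only area $r_1^2/E$; thus $r_1$ yields an effective bonus of $w_g=(2E-1)r_1^2$ in weight available for covering the rest of $\mathcal{R}$.

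First, I would handle $\mathcal{A}$ by distinguishing whether $r_7\le\rho(\mathcal{A})$. If yes, option $\mathcal{A}.1$ is available: starting from $r_n$, we accumulate disks until the collected weight is at least $E\abs*{\mathcal{A}}$, which is possible by Condition~(4). Because the last disk added has radius at most $\min\{r_7,\rho(\mathcal{A})\}$, the excess weight the split diverts from $\mathcal{B}$ is at most $\min\{r_7^2,\rho(\mathcal{A})^2\}$. If instead $r_7>\rho(\mathcal{A})$, option $\mathcal{A}.1$ is unavailable and we fall back on $\mathcal{A}.2$. Condition~(2) guarantees that every disk of radius at most $r_7$ satisfies Condition~(1) of Lemma~\ref{lem:rectangleSB-wall-building} for strips of length $\sqrt{2}r_1$. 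A geometric-telescoping argument analogous to Lemma~\ref{lem:wall-building-suff-crit} then shows that $w_k$ and $w_\infty$ are valid lower bounds on the total weight of disks small enough to fit in rows of $\mathcal{A}$; Condition~(3) therefore provides enough such weight to drive $\mathcal{A}.2$ to completion. The overhead transferred to $\mathcal{B}$ --- the weight of the single possibly incomplete final row --- is at most the weight of a rectangle of width $\sqrt{2}r_1$ and height $r_7$ at efficiency $E$, i.e.\ $2E\cdot r_1\cdot r_7$.

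Next, I would handle $\mathcal{B}$ by trying the three options in turn. The values $w_2,w_{23},w_{234}$ are exactly the effective weight ``gained'' by each option compared to straightforward recursion on $\mathcal{B}$: $w_2=0$ records that pure recursion succeeds if $r_2$ fits $\mathcal{B}$, while $w_{23}$ and $w_{234}$ record how much the placement of $r_2,r_3$ (respectively $r_2,r_3,r_4$) as a bottom strip saves, since that placement forces the next-largest disk to fit the residual rectangle. The $-\infty$ fallbacks encode the unavailability of an option when the associated radius bound fails. Condition~(1) now says precisely that $w_g$ plus the best available $\mathcal{B}$-gain dominates the overhead incurred on $\mathcal{A}$; together with Condition~(4), which excludes a trivial weight shortage overall, this certifies that the disk weight suffices for the combined recursion on $\mathcal{A}$ and $\mathcal{B}$.

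The hardest step will be the second sub-case for $\mathcal{A}$: establishing simultaneously that $w_k$ and $w_\infty$ are genuine lower bounds and that the wall-building overhead transferred to $\mathcal{B}$ is bounded by $2E\cdot r_1\cdot r_7$. The same disk weight must not be counted once towards building walls on $\mathcal{A}$ and again towards the $\mathcal{B}$-side recursion, so the geometric telescoping from Lemma~\ref{lem:wall-building-suff-crit} has to be adapted to run in parallel on both sides with a single disjoint accounting; once this is in place, the remaining verification is routine bookkeeping through Condition~(1).
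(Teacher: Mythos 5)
Your overall weight–accounting picture is right and matches the paper: $w_g=(2E-1)r_1^2$ is the bonus from covering $r_1$'s inscribed square at coefficient $1/2$, the values $w_2,w_{23},w_{234}$ record the gain (or loss) of each $\mathcal{B}$‑option relative to an efficiency‑$E$ covering, and Condition~(1) pays for the worst overhead of the chosen $\mathcal{A}$‑option.

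The genuine gap is in your case split for $\mathcal{A}$. You branch on whether $r_7\le\rho(\mathcal{A})$ and declare $\mathcal{A}.1$ unavailable when $r_7>\rho(\mathcal{A})$. That is not correct: $\mathcal{A}.1$ collects disks starting from $r_n$, so it can still succeed when $r_7>\rho(\mathcal{A})$, provided the total weight $W_a:=\sum_{r_i\le\rho(\mathcal{A})}r_i^2$ of disks small enough to fit $\mathcal{A}$ is at least $E\abs*{\mathcal{A}}$. The paper splits precisely on $W_a\ge E\abs*{\mathcal{A}}$ versus $W_a<E\abs*{\mathcal{A}}$, and this distinction is load-bearing: the derivation of the bound $B_k$ (hence the validity of $w_k$ as a lower bound on the weight available for $\mathcal{A}.2$) requires $W_a<E\abs*{\mathcal{A}}$, because $B_k$ subsumes \emph{all} disk weight below $\rho(\mathcal{A})$ into the incomplete‑row budget and bounds it by $E\abs*{\mathcal{A}}$. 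In your regime ``$r_7>\rho(\mathcal{A})$ and $W_a\ge E\abs*{\mathcal{A}}$'' you fall back on $\mathcal{A}.2$ and invoke Condition~(3), but $w_k$ may not be a valid lower bound there, so the argument does not close. (And when $W_a\ge E\abs*{\mathcal{A}}$, you should instead run $\mathcal{A}.1$, whose overflow is bounded by $\min\{r_7^2,\rho(\mathcal{A})^2\}$ and paid for by Condition~(1).) You do flag this as the ``hardest step,'' but the plan as written commits to the wrong dichotomy. Replace the $r_7\le\rho(\mathcal{A})$ test with the $W_a\ge E\abs*{\mathcal{A}}$ test, keep your $\mathcal{A}.1$ argument (with overflow $\min\{r_7^2,\rho(\mathcal{A})^2\}$) for the first branch and the wall‑building bounds $w_k,w_\infty$ for the second, and everything lines up with the paper's proof.
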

	\begin{proof}
		Firstly, by Condition~(1), for at least one of the three options $\mathcal{B}.1,\mathcal{B}.2,\mathcal{B}.3$, the largest disk that we use for $\mathcal{B}$ must fit, as otherwise, $\max \{w_2,w_{23},w_{234}\} = -\infty$.
		Moreover, $w_g = (2E-1)r_1^2 = E\lambda - r_1^2 - E\abs*{\mathcal{A}} - E\abs*{\mathcal{B}}$ is the spare weight that we have gained,
		compared to a covering with coefficient $E$, by placing $r_1$ covering a square of side length $\sqrt{2}r_1$, which yields a coefficient of $\frac{1}{2} < E$.
		Similarly, $w_{23}$ is the spare weight that we gain (or lose, if $w_{23} < 0$), by placing $r_2$ and $r_3$ covering the strip $\mathcal{S}_{23}$ at the bottom of $\mathcal{B}$,
		and analogously for $w_2$ and $w_{234}$.
		Intuitively speaking, we use this accumulated spare weight to pay for the waste that we may incur while covering $\mathcal{A}$ and $\mathcal{B}$.
		We distinguish two cases based on the total weight $W_{a} \coloneqq \sum_{r_i \leq \rho(\abs*{\mathcal{A}})}r_i^2$ of disks below $\rho(\abs*{\mathcal{A}})$.

		First, assume $W_{a} \geq E\abs*{\mathcal{A}}$.
		Then, option $\mathcal{A}.1$ can be used for covering $\mathcal{A}$.
		Because we stop adding disks to $D_{\mathcal{A}}$ once the total weight exceeds $E\abs*{\mathcal{A}}$, due to the radius bound and Condition~(4),
		the last disk added to $D_{\mathcal{A}}$ can weigh no more than $\min\left(r_7^2,\rho(\mathcal{A})^2\right)$ and thus $W(D_{\mathcal{A}}) \leq E\abs*{\mathcal{A}} + \min\left(r_7^2,\rho(\mathcal{A})^2\right)$.
		Therefore, due to Condition~(1), we have enough spare weight to handle $\mathcal{B}$ using one of the three options $\mathcal{B}.1\text{--}\mathcal{B}.3$.

		Now, we assume $W_{a} < E\abs*{\mathcal{A}}$; in this case, we use option $\mathcal{A}.2$ to cover $\mathcal{A}$.
		In order to guarantee success in this case, we have to show that
		(a)~building new rows cannot fail due to Condition~(1) of Lemma~\ref{lem:rectangleSB-wall-building},
		(b)~the disk weight used to cover $\mathcal{A}$ is at most $E\abs*{\mathcal{A}} + w_g$, and
		(c)~we do not run out of disks while covering $\mathcal{A}$ due to too much disk weight in \emph{incomplete} rows.

		Condition~(2) guarantees that (a) holds, because if we can guarantee that $r_7$ does not violate Condition~(1) of Lemma~\ref{lem:rectangleSB-wall-building},
		the smaller disks $r_8,\ldots,r_n$ cannot violate the condition either.

		Regarding~(b), we waste at most one complete strip of length $\sqrt{2}r_1$ and height $b \leq \sqrt{2}r_7$ that is covered with coefficient $E$; see Fig.~\ref{fig:rectangleSB-wall-building}.
		Therefore, we waste at most weight $E \cdot \sqrt{2}r_1 \cdot \sqrt{2}r_7 = 2E\cdot r_1 \cdot r_7$, which is at most $w_g$ due to Condition~(1).
		
		Regarding~(c), we consider the sequence $I_1,\ldots,I_m$ of incomplete rows encountered by $\mathcal{A}.2$.
		Let $r_i$ be the largest disk of $I_u$ and let $r_j < r_i$ be the largest disk of $I_{u+1}$ for some $1 \leq u < m$.
		Because $I_u$ is incomplete, somewhere between $r_i$ and $r_j$, Condition~(2) of Lemma~\ref{lem:rectangleSB-wall-building} must have been violated.
		Therefore, we have $r_j < s \cdot r_i$.
		Moreover, an incomplete row in which $r_i$ is the largest disk can have at most weight $E \cdot \sqrt{2}r_i \cdot L\left(\sqrt{2}r_1\right)$.
		In the following, we give two upper bounds on the total weight $W_i$ of disks that may end up in incomplete rows.

		Recall that $k$ is the smallest non-negative integer for which $r_7 \cdot s^k \leq \rho(\mathcal{A})$, and that we have less than $E\abs*{\mathcal{A}}$ weight in disks below $\rho(\mathcal{A})$.	
		Therefore, by assuming that all weight in disks below $\rho(\mathcal{A})$ is in incomplete rows, we can bound the weight in incomplete rows by
		\begin{align*}
			B_k &\coloneqq \underbrace{E\abs*{\mathcal{A}}}_{\leq\,\rho(\mathcal{A})} + \underbrace{E \cdot \sqrt{2}r_7 \cdot L\left(\sqrt{2}r_1\right) \cdot s^0}_{\text{\nth{1} incomplete row}} + 
			\cdots + \underbrace{E \cdot \sqrt{2}r_7 \cdot L\left(\sqrt{2}r_1\right) \cdot s^k}_{\text{$(k+1)$st incomplete row}}\\
			&= E\abs*{\mathcal{A}} + E \cdot \sqrt{2}r_7 \cdot L\left(\sqrt{2}r_1\right) \cdot \sum\limits_{i=0}^{k} s^i.
		\end{align*}
		Moreover, instead of subsuming all incomplete rows below $\rho(\mathcal{A})$, we can also bound the weight in incomplete rows by
		\begin{align*}
			B_{\infty} &\coloneqq \underbrace{E \cdot \sqrt{2}r_7 \cdot L\left(\sqrt{2}r_1\right) \cdot s^0}_{\text{\nth{1} incomplete row}} + \underbrace{E \cdot \sqrt{2}r_7 \cdot L\left(\sqrt{2}r_1\right) \cdot s^1}_{\text{\nth{2} incomplete row}} + \cdots\\
			&= E \cdot \sqrt{2}r_7 \cdot L\left(\sqrt{2}r_1\right) \cdot \sum\limits_{i=0}^{\infty} s^i = \frac{E \cdot \sqrt{2}r_7 \cdot L\left(\sqrt{2}r_1\right)}{1-s}\text{.}
		\end{align*}
		Therefore, at least weight $E\lambda - \sum\limits_{i=1}^{7}r_i^2 - \min\{B_k,B_{\infty}\} = \max\{w_k,w_{\infty}\}$ is available for covering $\mathcal{A}$ using options $\mathcal{A}.2$.
		By Condition~(3) and Lemma~\ref{lem:rectangleSB-wall-building}, this suffices to cover $\mathcal{A}$.
	\end{proof}
	As for Routine~\sbstratref{strat:rectanglesSB-wb}{1}, we can give success criteria using fewer than $7$ largest disks using analogous arguments.
}

\subsubsection{Placing \texorpdfstring{$r_1$}{r1} and \texorpdfstring{$r_2$}{r2} in Opposite Corners}\sbnewsubsec{strat:rectanglesSB-r12oc}
\psbstratref{strat:rectanglesSB-r12oc}{1} The next routine is based on placing $r_1$ and $r_2$ covering squares in diagonally opposite corners of $\mathcal{R}$.
If the total height $\sqrt{2}\left(r_1+r_2\right)$ covered by $r_1$ and $r_2$ exceeds $1$, we do not consider this routine;
therefore, the situation is as depicted in Fig.~\ref{fig:rectanglesSB-r1_r2_opposite_corners}.
After placing $r_1$ and $r_2$, for each disk \revsocg{$r_3,\ldots,r_7$}, we check whether we can place it on the remainder $\mathcal{C}'$ of $\mathcal{C}$
such that it cuts off a part of the longer side of $\mathcal{C}'$; see Fig.~\ref{fig:rectanglesSB-r1_r2_opposite_corners}~(right).
We continue this until $\mathcal{C}'$ disappears completely or the rectangle $\mathcal{R}_i$ covered by disk $r_i$ would not satisfy $E\abs*{\mathcal{R}_i} \geq r_i^2$\revsocg{, i.e., until the placement would become too inefficient.}
\revsocg{W.l.o.g., let the short side of region $\mathcal{B}$ be no longer than the short side} of region $\mathcal{A}$; the other case is handled analogously.

\revsocg{It is straightforward to decide, based on $\lambda,r_1,\ldots,r_7$, whether $\mathcal{C}'$ disappears.
Moreover, we can decide which disk from $r_3,\ldots,r_7$ is the largest disk not placed on $\mathcal{C}'$.
If $\mathcal{C}'$ disappears, we proceed as follows.
Our success criterion checks that this disk fits into $\mathcal{A}$ and that $r_7$ fits into $\mathcal{B}$.
We build a collection $D_{\mathcal{A}}$ of disks for recursively covering $\mathcal{A}$, beginning with the largest remaining disk, until $W(D_{\mathcal{A}}) \geq E\abs*{\mathcal{A}}.$
Because the largest remaining disk fits $\mathcal{A}$, by Observation~\ref{obs:at-least-5}, this set contains at least $5$ disks;
thus, we can bound $W\left(D_{\mathcal{A}}\right) \leq E\abs*{\mathcal{A}} + r_7^2$.
Our success criterion then checks whether the remaining weight suffices to recurse on $\mathcal{B}$.}

If $\mathcal{C}'$ does not disappear, we proceed as follows.
We tentatively build $D_{\mathcal{A}}$ by adding disks in decreasing order of radius until \revsocg{$E\abs*{\mathcal{A}} + r_7^2 \geq W\left(D_{\mathcal{A}}\right)\geq E\abs*{\mathcal{A}}$};
once $D_{\mathcal{A}}$ is complete, we continue building $D_{\mathcal{B}}$ in the same manner.
We place the remaining disks in $D_{\mathcal{C}'}$.
\revsocg{Our success criterion then checks, using $r_7$ as bound on the largest radii in $D_{\mathcal{B}},D_{\mathcal{C}'}$, whether we can recurse on $\mathcal{A},\mathcal{B},\mathcal{C}'$ using $D_{\mathcal{A}}, D_{\mathcal{B}}, D_{\mathcal{C}'}$.}
Otherwise, we discard $D_{\mathcal{A}}, D_{\mathcal{B}}$ and $D_{\mathcal{C}'}$ and continue as follows.
Instead of using Lemma~\ref{lem:rectanglesSB}, we try to use Theorem~\ref{thm:mainRectangles} to recurse on $\mathcal{C}'$.
Towards this goal, we compute the skew $\lambda_{\mathcal{C}'}$ of $\mathcal{C}'$ and build $D_{\mathcal{C}'}$ by adding remaining disks in decreasing order of radius until $W\left(D_{\mathcal{C}'}\right) \geq E^*\left(\lambda_{\mathcal{C}'}\right) \cdot \abs*{\mathcal{C}'}$.
Afterwards, we build $D_{\mathcal{A}}$ of weight at least $E\abs*{\mathcal{A}}$, placing all remaining disks in $D_{\mathcal{B}}$.
\revsocg{By keeping track of the disks definitely placed in $D_{\mathcal{C}'}$ in this manner, we can upper-bound the size of the largest disk and the weight for each of $D_{\mathcal{A}}, D_{\mathcal{B}}, D_{\mathcal{C}'}$.
Our success criterion checks whether these bounds guarantee that we can recurse on $\mathcal{A}$ and $\mathcal{B}$ using $D_{\mathcal{A}}$ and $D_{\mathcal{B}}$.}

\subsubsection{Using the Three Largest Disks}\sbnewsubsec{strat:rectanglesSB-3ld}
In this section, we describe two routines that are based on considering the largest three disks; see Fig.~\ref{fig:rectanglesSB-l-shaped-recursion}.
In either routine, we begin by covering a vertical rectangular strip $\mathcal{S}_2$ of height $1$ and maximal width at the left of $\mathcal{R}$.
\revsocg{Our success criterion checks whether $E\abs*{\mathcal{S}_2} \geq r_1^2+r_2^2$, i.e., whether covering this strip is efficient enough.}
Afterwards, we consider two different placements of $r_3$ to cover a part of the remaining region.

\psbstratref{strat:rectanglesSB-3ld}{1} The first option is to place $r_3$ covering its inscribed square at the lower left corner of the remaining region.
As depicted in Fig.~\ref{fig:rectanglesSB-l-shaped-recursion}, we can subdivide the remaining area into two rectangles $\mathcal{A}$ and $\mathcal{B}$
either horizontally or vertically; we try both options and handle them analogously.
W.l.o.g., let the short side of $\mathcal{A}$ be at least as long as the short side of $\mathcal{B}$; the other case is symmetric.

\revsocg{Our success criterion checks whether $r_4$ fits $\mathcal{A}$.}
\revsocg{In that case,} we build $D_{\mathcal{A}}$ by adding disks in decreasing order of radius until $W\left(D_{\mathcal{A}}\right) \geq E\abs*{\mathcal{A}}$;
we place all other disks in $D_{\mathcal{B}}$.
\revsocg{Because $r_4$ fits $\mathcal{A}$, by Observation~\ref{obs:at-least-5}, at least $5$ disks are added to $D_{\mathcal{A}}$.
Therefore, we can bound the weight $W(D_{\mathcal{A}}) \leq E\abs*{\mathcal{A}} + r_7^2$ and the largest disk in $\mathcal{B}$ is at most $r_7$.
Our success criterion checks whether these bounds guarantee that we can recurse on $\mathcal{A},\mathcal{B}$ in this manner using Lemma~\ref{lem:rectanglesSB}.}

\revsocg{Otherwise, or if $r_4$ does not fit $\mathcal{A}$, we compute the amount of weight $w_{\mathcal{A}},w_{\mathcal{B}}$
necessary to recurse on $\mathcal{A},\mathcal{B}$ using Theorem~\ref{thm:mainRectangles}.}
We build a collection $D_1$ by adding disks in order of decreasing radius until $W(D_1) \geq \min\{w_{\mathcal{A}},w_{\mathcal{B}}\}$.
\revsocg{By keeping track of the largest disk that is definitely placed in $D_1$, using $r_7$ if we run out of known large disks,
we can bound the size of the largest remaining disk and the amount of waste by which $W(D_1)$ exceeds $\min\{w_{\mathcal{A}},w_{\mathcal{B}}\}$.
Our success criterion checks that the largest remaining disk definitely fits the remaining region and that there definitely is enough remaining weight to recurse on that region.}

\psbstratref{strat:rectanglesSB-3ld}{2}
We also consider placing $r_3$ such that it covers a horizontal strip of maximum height at the bottom of the remaining area $\mathcal{A}$;
see Fig.~\ref{fig:rectanglesSB-l-shaped-recursion}~(b).
\revsocg{Our success criterion for this routine checks that such a placement is feasible.}
In that case, we \revsocg{check whether we can recurse on the rectangle using the remaining disks.}
If that does not work, we consider placing disks $r_4,r_5$ and $r_6$ as follows\revsocg{, checking whether we can apply recursion after each placement.}
Each disk is placed such that it cuts off a rectangular piece of $\mathcal{A}$, reducing the length of the longer side of $\mathcal{A}$ as much as possible.
\revsocg{Our success criterion excludes this routine if we cannot place a disk in this way.}

\subsubsection{Using the Four Largest Disks}\sbnewsubsec{strat:rectanglesSB-4ld}
In this section, we describe several routines that are based on computing a placement for the four largest disks.
For an overview, see Fig.~\ref{fig:rectanglesSB-r1_in_corner_right_34_and_2}.

\psbstratref{strat:rectanglesSB-4ld}{1} The first routine, depicted in Fig.~\ref{fig:rectanglesSB-r1_in_corner_right_34_and_2}~(a), places $r_1$ covering its inscribed square in the bottom-left corner of $\mathcal{R}$.
The remaining area is subdivided into two regions $\mathcal{A}$ above $r_1$ and $\mathcal{B}$ right of $r_1$.
W.l.o.g., let the shorter side of $\mathcal{B}$ be at least as long as the shorter side of $\mathcal{A}$.
We split the remaining disks into two groups $D_{\mathcal{A}},D_{\mathcal{B}}$ by adding disks to $D_{\mathcal{B}}$ in decreasing order of radius until $W\left(D_{\mathcal{B}}\right) \geq E\abs*{\mathcal{B}}$ and putting the remaining disks into $D_{\mathcal{A}}$.

If we cannot recurse on the remaining rectangles immediately due to the radius bound or $W\left(D_{\mathcal{A}}\right) < E\abs{\mathcal{A}}$, we continue as follows.
We place $r_3,r_4$ covering a horizontal strip of maximal height at the bottom of $\mathcal{B}$.
We place $r_2$ covering another rectangular strip of $\mathcal{B}$ on top of that.
If either of these placements is impossible because the disks are too small, the routine fails.
Otherwise, we retry building $D_{\mathcal{A}},D_{\mathcal{B}}$ and recursing.

\revsocg{It is straightforward to check the feasibility of the explicit disk placements in our success criteria.
We keep track of the largest disks that we place in $D_{\mathcal{A}}$ and $D_{\mathcal{B}}$.
Moreover, we also keep track of the smallest disk $r_{\mathcal{A}}$ among $r_1,\ldots,r_7$ that we place in $D_{\mathcal{A}}$ or $D_{\mathcal{B}}$ to bound $W(D_{\mathcal{A}}) \leq E\abs*{\mathcal{A}} + r_{\mathcal{A}}^2$ (and analogously for $W(B)$).
Our success criterion then uses these bounds to check if we can guarantee the success of the recursion using Lemma~\ref{lem:rectanglesSB} or Theorem~\ref{thm:mainRectangles}.}

\psbstratref{strat:rectanglesSB-4ld}{2} The next routine, depicted in Fig.~\ref{fig:rectanglesSB-r1_in_corner_right_34_and_2}~(b)~and~(c), consists of first computing a covering $\mathcal{C}_{2\times 2}$ of a width-maximal vertical strip of $\mathcal{R}$ using two columns, each containing two disks from $r_1,\ldots,r_4$.
Among all partitions of $r_1,\ldots,r_4$ into two groups of size two, we pick the one for which the covered width of $\mathcal{R}$ is maximized.
\revsocg{Our success criterion discards this routine if we cannot cover at least width $\frac{r_1^2+\cdots+r_4^2}{E}$ in this way.}

Otherwise, we consider several ways to cover the remaining rectangular strip $\mathcal{A}$.
The first way consists of simply recursing on $\mathcal{A}$ and the remaining disks.
\revsocg{If we cannot guarantee this to work by Theorem~\ref{thm:mainRectangles} or Lemma~\ref{lem:rectanglesSB}~or~\ref{lem:size-bound-large}}, we consider placing $r_5$ covering a rectangular strip at the bottom of $\mathcal{A}$.
If this is impossible, we discard the routine; otherwise, we remove the covered rectangle from $\mathcal{A}$ and reconsider recursing.
If we again cannot guarantee success, we consider placing $r_6$ covering a horizontal strip on top of $r_5$ at the bottom of $\mathcal{A}$ and reconsider recursion again.
Moreover, we also consider the placement of disks depicted in Fig.~\ref{fig:rectanglesSB-r1_in_corner_right_34_and_2}~(c).

Instead of covering a horizontal strip at the bottom of $\mathcal{A}$, we also consider using $r_6$ to cover a vertical strip at the right of $\mathcal{A}$.
If $r_6$ cannot cover the entire remainder of the right side of $\mathcal{R}$, we ignore the routine.
If $r_6$ can cover $\mathcal{A}$ completely, the routine succeeds.
Otherwise, we also disregard the routine if the left intersection point of $r_5$ and $r_6$ does not lie within the upper-right disk of $\mathcal{C}_{2\times 2}$.

The only part of $\mathcal{R}$ that remains to be covered is a region above the intersection point of the upper-right disk of $\mathcal{C}_{2\times 2}$ and $r_6$.
Our success criterion checks whether we can \revsocg{use Lemma~\ref{lem:rectanglesSB}~or~\ref{lem:size-bound-large} or Theorem~\ref{thm:mainRectangles} to guarantee successful recursion on} the bounding box of that region using the remaining disks.

\psbstratref{strat:rectanglesSB-4ld}{3} Finally, we consider the routine depicted in Fig.~\ref{fig:rectanglesSB-r1_in_corner_right_34_and_2}~(d), where we place $r_1,r_2$ such that they cover a rectangular strip of height $1$ and maximum width $\omega_{12}$ at the left side of $\mathcal{R}$ and $r_3,r_4$ such that they cover a rectangular strip of width $\lambda - \omega_{12}$ and maximum height.
The routine succeeds if we can guarantee successful recursion on the remaining region $\mathcal{A}$.

\subsubsection{Covering with Five Disks}\sbnewsubsec{strat:rectanglesSB-5ld}
In this section, we describe routines for covering $\mathcal{R}$ that rely on using the five largest disks and recursion on the remaining region $\mathcal{A}$.

\psbstratref{strat:rectanglesSB-5ld}{1} The first routine, depicted in Fig.~\ref{fig:rectanglesSB-five-disks}~(a)~and~(b), uses either $r_1,r_2,r_3$ or $r_1,r_2,r_5$ to cover a horizontal strip of width $\lambda$ and maximal height at the bottom of $\mathcal{R}$.
Afterwards, we use the two remaining disks to cover a vertical strip of maximum width at the left side of $\mathcal{R}$, and recurse on the remaining area $\mathcal{A}$. 

\psbstratref{strat:rectanglesSB-5ld}{2} The second routine, depicted in Fig.~\ref{fig:rectanglesSB-five-disks}~(c), begins by placing $r_1$ and $r_2$ covering a horizontal strip of width $\lambda$ and maximal height at the bottom of $\mathcal{R}$.
We use $r_4$ and $r_5$ to cover the remainder of the right and left sides of $\mathcal{R}$ and place $r_3$ such that it covers the remainder of the top side of $\mathcal{R}$;
if either of these placements are impossible, the routine is discarded.
If the five largest disks cover the entire rectangle $\mathcal{R}$ when placed in this manner, we are done.
Otherwise, we compute the bounding box $\mathcal{A}$ of the region that remains to be covered.
\revsocg{Our success criterion checks whether we can guarantee successful recursion on $\mathcal{A}$.}

\subsubsection{Covering with Six Disks}\sbnewsubsec{strat:rectanglesSB-6ld}
In this section, we describe three routines based on covering $\mathcal{R}$ using the six largest disks.

\psbstratref{strat:rectanglesSB-6ld}{1} The first routine uses only the six largest disks as depicted in Fig.~\ref{fig:rectanglesSB-six-disks}~(a); after covering a strip of width $\lambda$ and maximal height at the bottom of $\mathcal{R}$, we place the disks $r_4$ and $r_3$ covering the remainder of $\mathcal{R}$'s left and right boundary.
\revsocg{Our success criterion checks whether} $r_3$ and $r_4$ intersect; in that case, two uncovered pockets remain.
We check whether we can cover the smaller pocket using $r_6$ and the larger one using $r_5$.

\psbstratref{strat:rectanglesSB-6ld}{2} The second routine begins by covering a strip of width $\lambda$ and maximal height at the bottom of $\mathcal{R}$ using disks $r_1,r_2$ and recursion on a rectangular region $\mathcal{A}$.
The maximal height that can be covered in this way can be obtained by solving two systems of quadratic equations, one for each case $\lambda_{\mathcal{A}} < \lambda_2, \lambda_{\mathcal{A}} \geq \lambda_2$, where $\lambda_{\mathcal{A}}$ is the skew of $\mathcal{A}$.
Again, it is straightforward to check for any given height $h$ whether it is achievable; therefore, in our automatic prover, we simply use the bisection method to find a lower bound for the height that is definitely achievable and an upper bound on the height that may possibly be achieved.

After placing $r_1$ and $r_2$, we again try to place $r_4$ and $r_3$ covering the remaining part of $\mathcal{R}$'s left and right border.
Afterwards, we consider placing $r_5$ covering the remaining part of $\mathcal{R}$'s top border and check whether $r_6$ can be used to cover the remaining region.

\psbstratref{strat:rectanglesSB-6ld}{3} Finally, we also use the routine depicted in Fig.~\ref{fig:rectanglesSB-six-disks}~(c), where we try to cover two vertical strips of height $1$ and maximal width using disks $r_1,r_2$ and $r_4,r_5,r_6$.
We try to place $r_3$ covering a rectangle of maximal height of the remaining strip and \revsocg{check whether we can guarantee successful recursion} on the remaining rectangular region $\mathcal{A}$.

\subsubsection{Covering with Seven Disks}\sbnewsubsec{strat:rectanglesSB-7ld}
In this section, we describe several routines for covering $\mathcal{R}$ that are based on using the seven largest disks.

\psbstratref{strat:rectanglesSB-7ld}{1} We begin by considering to cover a strip of height $1$ and maximum possible width using the first six disks as depicted in Fig.~\ref{fig:rectanglesSB-six-disks-max-width}.
If this leads to a full cover of $\mathcal{R}$ or if we can guarantee successful recursion on the remaining region $\mathcal{A}$, we are done.
Otherwise, we consider placing $r_7$ covering a horizontal rectangular strip at the bottom of $\mathcal{A}$ as depicted in Fig.~\ref{fig:rectanglesSB-six-disks-max-width} and \revsocg{check whether we can guarantee successful recursion} on the remaining rectangle.

\psbstratref{strat:rectanglesSB-7ld}{2} Next, we describe the routine depicted in Fig.~\ref{fig:rectanglesSB-six_disk_222_with_recursion}.
\revsocg{It works by covering two vertical strips of maximum width using the partition of $r_1,\ldots,r_4$ into two groups of two disks that maximizes the covered width.}
We place $r_5$ and $r_6$ on the remaining strip, covering rectangles of maximum possible height.
If this covers the entire remaining area, we are done; \revsocg{otherwise, our success criterion discards the routine if} $r_6$ does not intersect the top side of $\mathcal{R}$.
Two pockets remain uncovered; we consider their bounding boxes $\mathcal{A}$ and $\mathcal{B}$.
If $r_7$ suffices to cover one of these pockets, we place it covering $\mathcal{B}$ and \revsocg{check whether we can guarantee successful recursion} on $\mathcal{A}$.
Otherwise, we apply \textsc{Greedy Splitting} to the remaining disks\revsocg{; this partitions the remaining disks into two collections $D_{\mathcal{A}}$ and $D_{\mathcal{B}}$ with $\abs*{W(D_{\mathcal{A}}) - W(D_{\mathcal{B}})} \leq r_7^2$.
Using this to bound the cost of the split, we check whether we can guarantee successful recursion on $\mathcal{A}$ and $\mathcal{B}$.}

\psbstratref{strat:rectanglesSB-7ld}{3} We continue describing the routines depicted in Fig.~\ref{fig:rectanglesSB-seven-disk-strategies}.
In the first routine, depicted in Fig.~\ref{fig:rectanglesSB-seven-disk-strategies}~(a)--(a''), we cover a strip of maximum width using disks $r_3,r_4,r_5$ and try placing $r_1,r_2$ covering the remainder of the top and bottom border.
If the disks are large enough (case~(a)), this covers $\mathcal{R}$ except for a remaining region $\mathcal{A}$, for which we check whether we can guarantee that recursion succeeds.
Otherwise, in case~(a'), we cover as much as possible of the top and bottom border using $r_1,r_2$ without moving the left intersection point of $r_1,r_2$ out of $r_3$.
We place $r_6$ and $r_7$ on the remaining strip; if the right intersection of $r_1,r_2$ is in $r_6$, we check whether recursion on the remaining region $\mathcal{A}$ is guaranteed to be successful.
Otherwise, in case~(a''), we consider using $r_6,r_7$ to cover the wider strip defined by the right intersection of $r_1,r_2$, and check whether we can recurse on the bounding box of the remaining region $\mathcal{A}$.

\psbstratref{strat:rectanglesSB-7ld}{4} Next, we consider the routine depicted in Fig.~\ref{fig:rectanglesSB-seven-disk-strategies}~(b).
For each possible choice $t_1,t_2,t_3$ of three disks from $r_1,\ldots,r_7$, we consider covering a strip of width $\lambda$ and maximum possible height $h$ at the top of $\mathcal{R}$.
We then compute the width of the widest possible rectangle $\mathcal{A}$ of height $1-h$ for which we can guarantee successful recursion using disks $r_8,\ldots,r_n$, placing it at the right border of the remaining area.
We place two disks $b_1,b_2$ covering a horizontal strip of maximum height at the bottom of the remaining area and check whether the last two disks $\ell_1,\ell_2$ can cover the entire remaining region.

\psbstratref{strat:rectanglesSB-7ld}{5} In the routine depicted in Fig.~\ref{fig:rectanglesSB-seven-disk-strategies}~(c), we place $r_2$ in the bottom left corner of $\mathcal{R}$, covering its inscribed square.
We place $r_1$ covering the same width on top of $r_2$; if this would exceed a height of $1$, we instead cover a vertical strip of maximum possible width with $r_1,r_2$.
We then cover two horizontal strips of width $\lambda - \sqrt{2}r_2$ and maximum possible height using disks $r_3,r_4$ and $r_5,r_6$, discarding the routine if such a placement is infeasible.
The remaining region can be subdivided into two rectangles $\mathcal{A}$ above $r_3,\ldots,r_6$ and $\mathcal{B}$ above $r_1,r_2$.
If $r_7$ can be placed such that it covers the left border of $\mathcal{B}$, consider the rectangle $\mathcal{R}_7 \subset \mathcal{B}$ covered by this placement.
If $\frac{r_7^2}{\abs*{\mathcal{R}_7}} \leq E$, we place $r_7$ in this way and reduce the size of $\mathcal{B}$ accordingly.
We compute the weight $w_{\mathcal{A}}$ necessary to recurse on $\mathcal{A}$; depending on the size of $r_7$ and $\mathcal{A}$, this may use Theorem~\ref{thm:mainRectangles} or Lemma~\ref{lem:rectanglesSB}~or~\ref{lem:size-bound-large}.
We build a collection $D_{\mathcal{A}}$ by adding the remaining disks in decreasing order of radius, until $w_{\mathcal{A}} + r_7^2 > W(D_{\mathcal{A}})\geq w_{\mathcal{A}}$.
our success criterion checks that the remaining disks have enough weight for this.
Using $D_{\mathcal{A}}$, we recurse on the widest possible rectangle $\mathcal{A}' \supseteq \mathcal{A}$.
Finally, our success criterion checks, using the bound $w_{\mathcal{A}} + r_7^2 > W(D_{\mathcal{A}})$ and $r_7$ as bound on the largest disk, whether we can guarantee successful recursion on the remainder $\mathcal{B'} \subseteq \mathcal{B}$ using the remaining disks.

\psbstratref{strat:rectanglesSB-7ld}{6} In the routine depicted in Fig.~\ref{fig:rectanglesSB-seven-disk-strategies}~(d), we start by covering a horizontal strip of width $\lambda$ and maximum height $h_{12}$ using disks $r_1,r_2$ and placing it at the bottom of $\mathcal{R}$.
We then compute the maximum height $h_{567}$ for which the following two conditions hold.
(1) We can place $r_6$ and $r_7$ at the left and right border of $\mathcal{R}$ such that they each cover a rectangular strip of height $h_{567}$.
(2) We can place $r_5$ between $r_6$ and $r_7$ such that, together with $r_1,r_2$, a horizontal strip of height $h_{12}+h_{567}$ is covered.
Finally we place $r_3$ and $r_4$ such that they cover the remainder of $\mathcal{R}$'s left and right boundary; if this covers everything, we are done.
If any of these placements are impossible or if the remaining uncovered region is not connected, we ignore this routine.
Otherwise, we check whether we can guarantee successful recursion on the bounding box $\mathcal{A}$ of the remaining region.

\psbstratref{strat:rectanglesSB-7ld}{7} In the routine depicted in Fig.~\ref{fig:rectanglesSB-seven-disk-strategies}~(e)--(e'), we begin by placing $r_2$ in the bottom-left corner of $\mathcal{R}$. covering its inscribed square.
We place $r_1$ right of $r_2$, covering a square of the same height $\sqrt{2}r_2$.
In the top-left corner of $\mathcal{R}$, we place $r_4$ and $r_3$ covering a strip of the same width as $r_1,r_2$ and maximum possible height.
A $T$-shaped region remains to be covered; parts of it are already covered by the first four disks.
For each disk $r_i$ among $r_5,r_6,r_7$, we proceed as follows.
First, we consider building a collection $D_{\mathcal{A}}$ from the remaining disks $r_i,\ldots,r_n$ that contains enough weight to guarantee successful recursion on the vertical strip $\mathcal{A}$.
If that works and there is enough remaining weight to successfully recurse on the remaining horizontal strip $\mathcal{B}$ (see Fig.~\ref{fig:rectanglesSB-seven-disk-strategies}~(e')), we are done.
Otherwise, we consider covering a piece of maximal width of the horizontal strip using $r_i$; if that is impossible, we disregard the routine.

If, during this operation, we place $r_7$ in such a way that it intersects the right boundary of $\mathcal{R}$ (see Fig.~\ref{fig:rectanglesSB-seven-disk-strategies}~(e)), the horizontal strip is completely covered and the vertical strip is subdivided into two pieces $\mathcal{A},\mathcal{B}$.
In this case, we apply \textsc{Greedy Splitting} to the remaining disks, resulting in two collections $D_{\mathcal{A}}, D_{\mathcal{B}}$ with $\abs*{W(D_{\mathcal{A}}) - W(D_{\mathcal{B}})} \leq r_7^2$.
We use this to bound the cost of the split and check whether we can guarantee successful recursion on $\mathcal{A}$ and $\mathcal{B}$ using $D_{\mathcal{A}}$ and $D_{\mathcal{B}}$.

\psbstratref{strat:rectanglesSB-7ld}{8} Finally, in the routine depicted in Fig.~\ref{fig:rectanglesSB-seven-disk-strategies}~(f), we begin by covering a strip of width $\lambda$ and maximum possible height at the top of $\mathcal{R}$ using disks $r_2,\ldots,r_5$.
Below that strip, at the left border of $\mathcal{R}$ we place $r_1$ covering its inscribed square.
If this placement covers the entire left boundary of $\mathcal{R}$, we instead place $r_1$ in the lower left corner, maximizing the width covered by $r_1$ while still covering the entire left border of $\mathcal{R}$, and check whether we can guarantee successful recursion on the remaining rectangle.
Otherwise, we place $r_6$ below $r_1$, covering the remainder of $\mathcal{R}$'s left border while maximizing the width of the covered rectangle.
We subdivide the remaining uncovered region into two rectangles:
$\mathcal{A}$ to the right of $r_1$ and $\mathcal{B}$ below $r_1$; see Fig.~\ref{fig:rectanglesSB-seven-disk-strategies}~(f).
After placing $r_6$, we build a collection $D_{\mathcal{A}}$ by adding disks in decreasing order of radius until we can recurse on $\mathcal{A}$;
if we can build such a collection and there is enough remaining weight to successfully recurse on $\mathcal{B}$, we are done.
Otherwise, we also consider placing $r_7$ below $r_1$ covering $\mathcal{B}$ completely, and then check for successful recursion on $\mathcal{A}$.
If that does not work, we disregard this routine.

\revsocg{
\subsubsection{Concluding the Proof}
As outlined in Section~\ref{sec:interval-arithmetic}, we implemented the success criteria of the routines described in this section using interval arithmetic.
Running this implementation on the space induced by $\lambda \in [1,2.5]$ that is left after applying Lemma~\ref{lem:rectanglesSB-handling-large-skew} yields no critical hypercuboids after inspecting more than $2^{46}$ hypercuboids in total.
This proves that, for any $\lambda \in [1,2.5]$ and any valid $r_1,\ldots,r_7$, at least one of our success criteria holds and thus, at least one of our routines works, thus concluding the proof for Lemma~\ref{lem:rectanglesSB}.
}%
}

\def\proofdetailswallbuilding{%
\subsection{Proof of Lemma~\ref{lem:rectangleSB-wall-building}}\label{sec:proof-lemma-wall-building}
In this section, we give the proof of Lemma~\ref{lem:rectangleSB-wall-building}; see Fig.~\ref{fig:rectangleSB-wall-building}.
\restatethm{\lemmawallbuilding*}{lem:rectangleSB-wall-building}
\begin{proof}
	We use the following simple algorithm to cover a strip, selecting the dimension $b$ in the process; in the following, we assume the strip to be vertical as depicted in Fig.~\ref{fig:rectangleSB-wall-building}.
	We begin by placing the first disk $q_1$ covering a square of side lengths $\sqrt{2} \cdot q_1$.
	By this placement, $q_1$ covers area $2q_1^2$, i.e., it has covering coefficient $\frac{1}{2} < E$.
	In decreasing order of radius, we keep placing disks on top of the previously placed disks such that they each cover a rectangle of width $\sqrt{2} \cdot q_1$.
	As long as $E > \frac{1}{2}$, Condition~(2) guarantees that each $q_i$ can cover a rectangle of width $\sqrt{2} \cdot q_1$ and height $h_i = \sqrt{4q_i^2 - 2q_1^2} > 0$.
	Moreover, we can prove the following Proposition~(4):
	For each disk $q_i$ placed covering a rectangle of dimensions $\left(\sqrt{2}\cdot q_1\right) \times h_i$ in this manner, we have $q_i^2 \leq E \cdot \sqrt{2}q_1 \cdot h_i$.
	In other words, the disks placed in this way cover area with coefficient at most $E$.
	In order to prove Proposition~(4), we first observe that the covering coefficient of a disk covering a rectangle decreases monotonically with increasing skew of the rectangle.
	Therefore, and because $q_i \leq q_1$, to verify Proposition~(4), it suffices to consider a disk $q_i$ of minimum allowed radius according to Condition~(2).
	In that case, we have \begin{align*}
		q_i^2 &= q_1^2\left(4E^2 - 2E\sqrt{4E^2-1}\right) = \left(E \cdot \sqrt{2}q_1\right) \cdot \sqrt{2}q_1 \cdot \left(2E-\sqrt{4E^2-1}\right)\\
		&= \left(E \cdot \sqrt{2}q_1\right) \cdot \sqrt{2}q_1 \cdot \sqrt{8E^2-4E\sqrt{4E^2-1}-1}\\
		&= \left(E \cdot \sqrt{2}q_1\right) \cdot \sqrt{4q_1^2\left(4E^2-2E\sqrt{4E^2-1}\right)-2q_1^2} = E \cdot \sqrt{2}q_1 \cdot h_i,%
	\end{align*} as claimed by Proposition~(4).
	We continue stacking disks until the covered region has height $\ell' \geq L(\ell)$; this eventually happens because of Condition~(3) and Proposition~(4).
	Because of Condition~(1), we know that at this point, $\ell' \leq \ell$ holds; no disk can cover more than $\sqrt{2} \cdot q_1$ height.

	If $\ell' = \ell$, we are done; otherwise, we proceed as follows.
	Starting from $\sqrt{2}q_1$, we reduce the width $b$ of the strip, adapting the height of the rectangle covered by each disk accordingly, until the covered height is exactly $\ell$; see Fig.~\ref{fig:rectangleSB-wall-building}.
	We know that before reducing the width, the coefficient of our cover is at most $E$; moreover, the width of each $q_i$'s rectangle is at least its height.
	It only remains to be proved that the covering coefficient stays at most $E$ after reducing $b$.
	Again, we prove this for each disk individually.
	In other words, we prove that the ratio between its weight and the area of its corresponding rectangle is at most $E$.
	Because the coefficient of a disk covering an inscribed rectangle depends on the skew of the rectangle, this is equivalent to proving that the covered rectangle does not become too high for any of the disks.
	Because all rectangles have the same width $b$, it suffices to show that the rectangle corresponding to $q_1$ does not become too high.
	Towards that goal, we first bound the factor by which we have to increase the height of $q_1$'s rectangle.
	Assume we reduce $b$ by some amount $\sqrt{2}q_1 > \delta_b > 0$ and consider the factor $X(\delta_b,q_i)$ by which the height of $q_i$'s rectangle increases.
		\begin{align*}
			X(\delta_b, q_i) &= \sqrt{\frac{4q_i^2-\left(\sqrt{2}q_1-\delta_b\right)^2}{4q_i^2-2q_1^2}} = \sqrt{1+\frac{\left(2\sqrt{2}q_1-\delta_b\right)\delta_b}{4q_i^2-2q_1^2}}\text{, and}\\
			\frac{\partial X(\delta_b, q_i)}{\partial q_i} &= \frac{1}{2\sqrt{1+\frac{\left(2\sqrt{2}q_1-\delta_b\right)\delta_b}{4q_i^2-2q_1^2}}} \cdot \frac{-8q_i\cdot\left(2\sqrt{2}q_1-\delta_b\right)\delta_b}{\left(4q_i^2-2q_1^2\right)^2} < 0,
		\end{align*}
	so increasing the height of $q_1$'s rectangle by some factor increases the total covered height by at least that factor.
	In total, we have to increase the covered height by a factor of at most $\frac{\ell}{L(\ell)} = \sqrt{1+\sqrt{1-\frac{1}{4E^2}}}$.
	Therefore, in the worst case, we have \begin{align*}%
		h_1 &= \sqrt{2}q_1 \cdot \sqrt{1+\sqrt{1-\frac{1}{4E^2}}},\ b = \sqrt{4q_1^2 - h_1^2} = \sqrt{4q_1^2 - 2q_1^2\left(1+\sqrt{1-\frac{1}{4E^2}}\right)},\\
		&= \sqrt{2}q_1 \cdot \sqrt{1-\sqrt{1-\frac{1}{4E^2}}}\textup{, and }h_1 \cdot b = 2q_1^2 \cdot \sqrt{\frac{1}{4E^2}} = \frac{q_1^2}{E},
	\end{align*}
	which implies that $q_1$ covers its rectangle with covering coefficient $E$.
\end{proof}%
}

\FloatBarrier
\ifthenelse{\boolean{applemmaworstcasesrectangles}}{}{
	\prooflemmaworstcasesrectangles
}
\ifthenelse{\boolean{applemmasizeboundlarge}}{}{
	\prooflemmasizeboundlarge
}
\ifthenelse{\boolean{appproofsizebounded}}{}{
	\proofdetailsweightboundedcovering
}
\ifthenelse{\boolean{appproofwallbuilding}}{}{
	\proofdetailswallbuilding
}
\ifthenelse{\boolean{appproofmainthm}}{}{
	\proofdetailsmaintheorem
}

\FloatBarrier
\section{Conclusion}
We have given a tight characterization of the critical covering
density for arbitrary rectangles. This gives rise to numerous
followup questions and extensions.

As discussed (and shown in Fig.~\ref{fig:worst-cases-rectangles}), the worst-case values
correspond to instances with only 2 or 3 relatively large disks;
if we have an upper bound $R$ on the size of the largest disk,
this gives rise to the critical covering area $A^*_R(\lambda)$ for
$\lambda\times 1$-rectangles.
Both from a theoretical and a practical point of view, getting some tight bounds on $A^*_R(\lambda)$ would be interesting and useful.
Our results of Lemma~\ref{lem:size-bound-large} and Lemma~\ref{lem:rectanglesSB} indicate possible progress in that direction; just like for unit disks, tighter results will require considerably more effort.

Establishing the critical covering density for disks and triangles is also open. We are optimistic
that an approach similar to the one of this paper can be used for a solution.

Finally, \emph{computing} optimal coverings by disks appears to be quite difficult.
However, while deciding whether a given collection of disks can be packed into a unit square is known to be NP-hard~\cite{DFL2010circle},
the complexity of deciding whether a given set of disks can be used to cover a unit square is still open.
Ironically, it is the higher practical difficulty of covering by disks that makes it challenging to apply a similar idea in a straightforward manner.

\bibliography{references}
\end{document}